\newtheorem{thm}{Theorem}[section]
\newtheorem{co}[thm]{Corollary}
\newtheorem{lem}[thm]{Lemma}
\newtheorem{assumption}[thm]{Assumption}
\newtheorem{pr}[thm]{Proposition}
\newtheorem{definition}[thm]{Definition}
\newtheorem{example}[thm]{Example}
\newtheorem{remark}[thm]{Remark}
\newcommand{\E}{\mathbb{E}}
\newcommand{\EX}{{\mathbb{E}}}
\newcommand{\PX}{{\mathbb{P}}}
\title{Feedback Capacity of the Continuous-Time ARMA(1,1) Gaussian Channel}
\author{\footnotesize \begin{tabular}{ccc}
Jun Su & Guangyue Han & Shlomo Shamai (Shitz)\\
The University of Hong Kong &The University of Hong Kong& Technion-Israel Institute of Technology\\
 email:  junsu1995@connect.hku.hk& email:  ghan@hku.hk& email: sshlomo@ee.technion.ac.il\\
\end{tabular}}
\begin{document}
\maketitle
\renewcommand{\thefootnote}{\fnsymbol{footnote}}
\footnotetext{\hspace*{0mm} A preliminary version of this work has been presented in IEEE International Symposium on Information Theory (ISIT) 2023.}

\begin{abstract}
We consider the continuous-time ARMA(1,1) Gaussian channel and derive its feedback capacity in closed form. More specifically, the channel is given by $\boldsymbol{y}(t) =\boldsymbol{x}(t) +\boldsymbol{z}(t)$, where the channel input $\{\boldsymbol{x}(t) \}$ satisfies average power constraint $P$ and the noise $\{\boldsymbol{z}(t)\}$ is a first-order {\em autoregressive moving average} (ARMA(1,1)) Gaussian process satisfying
$$
\boldsymbol{z}^\prime(t)+\kappa \boldsymbol{z}(t)=(\kappa+\lambda)\boldsymbol{w}(t)+\boldsymbol{w}^\prime(t),
$$
where $\kappa>0,~\lambda\in\mathbb{R}$ and $\{\boldsymbol{w}(t) \}$ is a white Gaussian process with unit double-sided spectral density.

We show that the feedback capacity of this channel is equal to the unique positive root of the equation
$$
P(x+\kappa)^2 = 2x(x+\vert \kappa+\lambda\vert)^2
$$
when $-2\kappa<\lambda<0$ and is equal to $P/2$ otherwise. Among many others, this result shows that, as opposed to a discrete-time additive Gaussian channel, feedback may not increase the capacity of a continuous-time additive Gaussian channel even if the noise process is colored. The formula enables us to conduct a thorough analysis of the effect of feedback on the capacity for such a channel. We characterize when the feedback capacity equals or doubles the non-feedback capacity; moreover, we disprove continuous-time analogues of the half-bit bound and Cover's $2P$ conjecture for discrete-time additive Gaussian channels.
\end{abstract}

\section{Introduction}
We start with the following point-to-point continuous-time additive white Gaussian noise (AWGN) channel \cite{shannon1949communication}
\begin{equation}\label{AWGN1}
\boldsymbol{y}(t)=\boldsymbol{x}(t)+\boldsymbol{w}(t),~-\infty<t<+\infty,
\end{equation}
where the channel noise $\{\boldsymbol{w}(t) \}$ is a white Gaussian process with unit double-sided spectral density, $\{\boldsymbol{x}(t) \}$ is the channel input and $\{\boldsymbol{y}(t)\}$ is the channel output. Since $\{\boldsymbol{w}(t)\}$ can be regarded as the derivative $\{\dot{B}(t)\}$ of the standard Brownian motion $\{B(t)\}$ in the generalized sense \cite{koralov2007theory,gel2014generalized}, or equivalently, $\{B(t)\}$ is the ``integral'' of $\{\boldsymbol{w}(t)\}$, the AWGN channel as in (\ref{AWGN1}) can be alternatively characterized by
\begin{equation}\label{AWGN2}
Y(t)=\int_0^tX(u)du+B(t),~t\ge 0,
\end{equation}
where $X=\{X(t)\}$ is the channel input and $Y=\{Y(t) \}$ is the channel output. Unlike white Gaussian noise, which is a generalized stochastic process in the sense of Schwartz's distribution \cite{obata2006white}, Brownian motion is an ordinary stochastic process that has been extensively studied in stochastic calculus. Evidently, the two formulations as in (\ref{AWGN1}) and (\ref{AWGN2}) allow us to examine an AWGN channel from different perspectives; in particular, the use of Brownian motion equips us with a wide range of established tools and techniques in stochastic calculus (see, e.g., \cite{ihara1993information,kadota1971mutual,han2021sampling,liu2019continuous} and references therein).

This paper is concerned with the following point-to-point continuous-time additive colored Gaussian noise (ACGN) channel
\begin{equation}\label{ACGN1}
\boldsymbol{y}(t)=\boldsymbol{x}(t)+\boldsymbol{z}(t),~-\infty<t<+\infty,
\end{equation}
where the channel noise $\boldsymbol{z}=\{\boldsymbol{z}(t)\}$ is a (possibly colored and generalized) stationary Gaussian process. Evidently, AWGN channels are a degenerated case of ACGN channels. Similarly as above, the ACGN channel as in (\ref{ACGN1}) can be alternatively characterized by
\begin{equation}\label{OU-CTGC}
Y(t)=\int_0^t X(u)du+Z(t),~~t\ge 0,
\end{equation}
where $\{Z(t)\}$ is the ``integral'' of $\{\boldsymbol{z}(t)\}$. Following~\cite{ihara1993information}, the treatment of ACGN channels in this work is mainly based on the formulation in (\ref{OU-CTGC}).

For any $M \in \mathbb{N}$ and $T >0$, an {\it $(M, T)$ code} for the ACGN channel (\ref{OU-CTGC}) with average power constraint $P$ consists of the following:
\begin{itemize}
\item[($a$)] A message index $W$ independent of $\{Z(t);t\in[0,T]\}$ and uniformly distributed over $\{1,2,...,M\}$.
\item[($b$)] For the non-feedback case, an encoding function $g_u: \{1,2,..., M\} \rightarrow \mathbb{R}, ~u \in [0,T]$, yielding codewords  $X(u) = g_u(W)$;
           for the feedback case, an encoding function $g_u: \{1,2,...,M \} \times C[0, u] \rightarrow \mathbb{R}, ~u \in [0,T]$, yielding codewords  $X(u) = g_u(W, Y_0^{u-})$. For both cases, the classical average power constraint is satisfied:
           \begin{equation*} \label{AP-operational}
                \frac{1}{T}\int_0^T \mathbb{E}[\vert X(u) \vert^2] du \le P. 
            \end{equation*}

\item[($c$)] A decoding functional $\hat{g}: C[0, T] \rightarrow \{1,2,...,M\}$.
\end{itemize}
Here we remark that for the feedback case, as argued in \cite{ihara1993information}, the pathwise continuity of $\{Z(t)\}$ and therefore that of $\{Y(t)\}$ imply that $Y_0^{u-}$ in ($b$) can be replaced by $Y_0^u$, and so the channel output $\{Y(t)\}$ is in fact the unique solution to the following stochastic functional differential equation\footnote{In other words, the encoding function $g_t$ is chosen such that (\ref{sfde-1}) admits a unique solution $\{Y(t)\}$. It is worth noting that, for some special cases, e.g., $Z(t)=B(t)$, there are some explicit sufficient conditions on $g_t$ to ensure the existence and uniqueness of the solution to (\ref{sfde-1}) (see, e.g., \cite{ng2017relative,liu2019continuous,han2021sampling,karatzas1991brownian}).}:
\begin{equation} \label{sfde-1}
dY(t)=g_t(W, Y_0^{t})dt+dZ(t).
\end{equation}
The {\it error probability} $\pi^{(T)}$ for the $(M, T)$ code as above is defined as
$$
\pi^{(T)} = \text{P}(\hat{g}(Y_0^T) \neq W).
$$
A rate $R$ is {\it achievable} if, for all $T>0$, there exists $([e^{TR}],T)$ codes with $\lim_{T\to\infty}\pi^{(T)}=0$. The {\it channel capacity} is defined as the supremum of all achievable rates, denoted by $C_{nfb}(P)$ for the non-feedback case and $C_{fb}(P)$ for the feedback case.

The literature on continuous-time ACGN channels is vast, and so below we only survey those results that are most relevant to this work. It has been shown by Huang and Johnson~\cite{huang1962information,huang1963information} that $C_{nfb}(P)$ can be achieved by a Gaussian input. For a special family of ACGN channels, Hitsuda~\cite{hitsuda1975gaussian} has applied a canonical representation method to derive a fundamental formula for the channel mutual information (see Lemma \ref{mutual-info}); based on this result, Ihara \cite{ihara1980capacity} showed that $C_{fb}(P)$ can be achieved by a Gaussian input with an additive feedback term. Similarly as in the discrete-time case, the property that feedback can at most double the capacity of a continuous-time ACGN channel, i.e., $C_{fb}(P)\le 2C_{nfb}(P)$, is established by examining a discrete-time approximation of $\{Z(t)\}$ (see \cite{butman1969general,pinsker1968probability,ihara1990capacity}). Employing a Hilbert space approach~\cite{gelfand1959calculation,pinsker1964information}, Baker \cite{baker1987capacity,Baker1983models} has derived a theoretical formula for $C_{nfb}(P)$, which however is somewhat difficult to evaluate. When it comes to effective computation of $C_{nfb}(P)$ or $C_{fb}(P)$, to the best of our knowledge, there are only a few results featuring an ``explicit'' and ``computable'' formula, detailed below. Here, we remark that Baker, Ihara and Hitsuda have studied the capacity of some families of ACGN channels, yet under different types of power constraints (see \cite{Baker1983models,baker1987capacity,baker1991information,hitsuda1975gaussian}).

\begin{itemize}
\item[1.] For the ACGN channel formulated as in (\ref{ACGN1}), when $\boldsymbol{z}$ is a 
stationary Gaussian process with a rational and smooth spectrum, $C_{nfb}(P)$ can be determined by the water-filling method (see, e.g., \cite{fano1961transmission,pinsker1964information,gallager1968information,baker1991information}). More specifically,
\begin{equation*}\label{non-capacity-water-filling}
C_{nfb}(P)=\frac{1}{2} \int_{-\infty}^{\infty}\log\left[\max\left(\frac{A}{S_{\boldsymbol{z}}(x)},1\right)\right]dx,
\end{equation*}
where $S_{\boldsymbol{z}}(x)$ is the spectral density function of the noise process $\boldsymbol{z}$ and the water level $A$ is a constant determined by
\begin{equation*}\label{water-filling-level}
P=\int_{[S_{\boldsymbol{z}}(x)\le A]}(A-S_{\boldsymbol{z}}(x))dx.
\end{equation*}
\item[2.] For the AWGN channel as in (\ref{AWGN1}) or (\ref{AWGN2}), it is a classical result that $C_{nfb}(P)=P/2$ and feedback does not increase the channel capacity, that is to say, $C_{fb}(P)=P/2$ (see, e.g.,~\cite{ihara1993information,kadota1971mutual,shannon1948mathematical}). Moreover, $C_{fb}(P)$ can be achieved by an additive feedback coding scheme that maximizes the channel mutual information and minimizes the filtering error simultaneously \cite{ihara1973optimal,liptser1974optimal,ihara1974coding}.
\end{itemize}

In this paper, we will focus our attention on a special family of ACGN channels, the continuous-time ARMA(1,1) Gaussian channel, where the noise is a continuous-time ARMA(1,1) Gaussian process, as presented in Figure \ref{figure1}. Specifically, we define the continuous-time ARMA(1,1) Gaussian noise process with parameters $\lambda\in\mathbb{R}$ and $\kappa>0$ as 
\begin{equation}\label{Def-CARMA1}
\boldsymbol{z}(t) = \lambda \boldsymbol{u}(t) +\boldsymbol{w}(t),\quad -\infty < t < \infty,
\end{equation}
where, as before, $\boldsymbol{w}(t) = \dot{B}(t)$ is a white Gaussian process, and $\boldsymbol{u}(t)=\int_{-\infty}^te^{-\kappa(t-u)}dB(u)$ is a stationary Ornstein-Uhlenbeck (OU) process (see, e.g., \cite{brockwell2010carma}). Here, we remark that $\{B(t)\}$ has been extended to $(-\infty,0)$, a common practice in the context of the stationary OU process (see, e.g., \cite{maller2009ornstein,brockwell2009levy,brockwell2014recent}). It turns out that $\{\boldsymbol{z}(t) \}$ defined in (\ref{Def-CARMA1}) is strictly stationary and satisfies the stochastic differential equation
\begin{equation}\label{Def-CARMA1-SDE}
\boldsymbol{z}^\prime(t)+\kappa \boldsymbol{z}(t)=(\kappa+\lambda)\boldsymbol{w}(t)+\boldsymbol{w}^\prime(t),
\end{equation}
where the derivative operator is interpreted in the generalized sense (see \cite{brockwell2010carma}). The equation (\ref{Def-CARMA1-SDE}) is the natural continuous-time analogue of the first order linear difference equation used to define a discrete-time ARMA$(1,1)$ process (see, e.g., \cite{brockwell2009time,ihara1993information}). Continuous-time ARMA processes have been of great interest to physicists and engineers (see, e.g., \cite{fowler1967statistical}).

\begin{figure}[htbp!]\label{figure1}
\includegraphics[width=13cm]{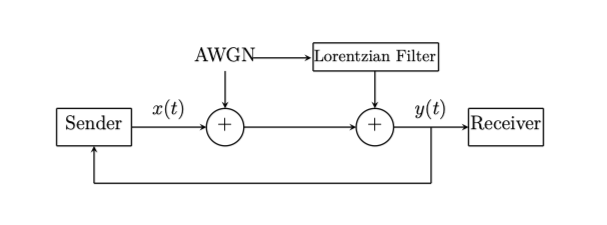}
\centering
\caption{The continuous-time ARMA$(1,1)$ Gaussian channel with feedback. Note that the OU process $\{\boldsymbol{u}(t)\}$ can be interpreted as the white Gaussian noise $\{\boldsymbol{w}(t) \}$ filtered by a Lorentzian filter (see, e.g., \cite{bibbona2008ornstein}).}
\end{figure}

\noindent Note that, similar to (\ref{OU-CTGC}), the continuous-time ARMA(1,1) Gaussian channel, as above, can be alternatively characterized by
\begin{equation}\label{OU-WGN}
Y(t)=\int_0^tX(u)du+B(t)+\lambda\int_0^t\int_{-\infty}^s e^{-\kappa(s-u)}dB(u)ds, \quad t \geq 0.
\end{equation}

Discrete-time ARMA Gaussian channels have been extensively studied (see, e.g., \cite{liu2018feedback,liu2017arma,han2016arma,kim2009feedback,kim2006feedback,ihara2019feedback,sabag2021feedback,gattami2018feedback,fang2021feedback}). However, to the best of our knowledge, there has been little progress on the feedback capacity for continuous-time ACGN channels. The main contribution in this work is an explicit characterization of the feedback capacity of the continuous-time ARMA$(1,1)$ Gaussian channel as in (\ref{OU-WGN}). Before this work, no ``explicit'' and ``computable'' formula is known for any nontrivial stationary ACGN channel (\ref{ACGN1}). Throughout the remainder of this paper, the notations $C_{nfb}(P)$ and $C_{fb}(P)$ will be reserved for the continuous-time ARMA$(1,1)$ Gaussian channel as in (\ref{OU-WGN}).

We will first derive a lower bound on $C_{fb}(P)$, which turns out to be tight for some cases. To achieve this, we will examine the following ACGN channel 
\begin{equation}\label{ACGN-B-1}
Y(t)=\int_0^tX(u)du+B(t)+\int_0^t\int_0^sh(s,u)dB(u)ds, \quad t\ge 0,
\end{equation}
where $h(s,u)$ is a Volterra kernel function on $L^2([0,T]^2)$ for all $T$. Here we emphasize that the channel (\ref{ACGN-B-1}) may not correspond to a stationary ACGN channel as in (\ref{ACGN1}). However, it can be shown that $\{B(t)+\int_0^t\int_0^sh(s,u)dB(u)ds\}$ is equivalent to the Brownian motion $\{B(t)\}$~\cite{hitsuda1968representation}, which renders the channel (\ref{ACGN-B-1}) more amenable to in-depth mathematical analysis, as evidenced by relevant results in the literature (see, e.g., \cite{hitsuda1975gaussian,ihara1980capacity,hitsuda1974mutual}).

More specifically, let $\{\Theta(t)\}$ be the message process, and let $\bar{I}_{\text{SK}}(\Theta; Y)$ denote the mutual information rate between $\{\Theta(t)\}$ and $\{Y(t)\}$ under the so-called continuous-time Schalkwijk-Kailath (SK) coding scheme (see, e.g., \cite{schalkwijk1966coding1,schalkwijk1968center,schalkwijk1966coding}). We will show (Theorem \ref{lower-bound-CTGC}) that
\begin{equation*} \label{ISK}
\bar{I}_{\text{SK}}(\Theta; Y)=Pr_P^2,
\end{equation*}
where $r_P$ is the limit of the unique solution to an ordinary differential equation, and moreover, one of the real roots of a third-order polynomial. It turns out that the continuous-time ARMA(1,1) Gaussian channel can be regarded as a special case of (\ref{ACGN-B-1}), and therefore $\bar{I}_{\text{SK}}(\Theta; Y)$ can help provide a lower bound on $C_{fb}(P)$.

With the aforementioned lower bound, we are ready to derive an explicit expression of $C_{fb}(P)$. More specifically, by examining discrete-time approximations of the channel (\ref{OU-WGN}), we prove (Theorem \ref{main-thm}) that for the case $-2\kappa<\lambda<0$, $C_{fb}(P)$ is upper bounded by $\bar{I}_{\text{SK}}(\Theta; Y)$, which means $C_{fb}(P) = \bar{I}_{\text{SK}}(\Theta; Y)$; for the other cases, we show $C_{fb}(P)=C_{nfb}(P)=P/2$. As a byproduct, this result shows that feedback may not increase the capacity of a continuous-time ACGN channel even if noise process is colored. By contrast, for a discrete-time ACGN channel, feedback does not increase the capacity if and only if the noise spectrum is white (see \cite[Corollary 4.3]{kim2009feedback}). 

Finally, we will devote our efforts to examining the effect of feedback on the capacity of the continuous-time ARMA$(1,1)$ Gaussian channel. It is well-known that feedback will increase yet can at most double the capacity of a continuous-time ACGN channel. We characterize (Theorem \ref{thm:C0-Cf}) when $C_{fb}(P)=C_{nfb}(P)$, $C_{nfb}(P)<C_{fb}(P)<2C_{nfb}(P)$ and $C_{fb}=2C_{nfb}(P)$, respectively. We further disprove the continuous-time analogues of some classical results and conjectures for discrete-time ACGN channels.

The remainder of the paper is organized as follows. In Section \ref{section-pre}, we introduce necessary notation and terminologies. We review the coding theorem for the feedback capacity and introduce the continuous-time SK coding scheme in Section \ref{sec:sk-coding-scheme}. Section \ref{lower-bound} provides an asymptotic characterization of $\overline{I}_{\text{SK}}(\Theta;Y)$ for a subclass of ACGN channels, which gives a lower bound on $C_{fb}(P)$. In Section \ref{proof-main-thm}, we derive an explicit formula for $C_{fb}(P)$. In Section \ref{sec:C0-Cf}, we characterize how feedback affects the capacity for the continuous-time ARMA$(1,1)$ Gaussian channel and further disprove continuous-time analogues of the half-bit bound and Cover's $2P$ conjecture for discrete-time ACGN channels.

\section{Notation and Terminologies} \label{section-pre}

We use $(\Omega,\mathcal{F},\PX)$ to denote the underlying probability space, and $\EX$ to denote the expectation with respect to the probability measure $\PX$. As is typical in the theory of stochastic calculus, we assume the probability space is equipped with a filtration $\{\mathcal{F}_t: 0 \leq t < \infty\}$, which satisfies the {\em usual conditions} \cite{karatzas1991brownian} and is rich enough to accommodate a standard Brownian motion.

Let $C[0, \infty)$ denote the space of all continuous functions over $[0, \infty)$, and let $C^1[0, \infty)$ be the space of all functions in $C[0, \infty)$ that have continuous derivatives on $[0, \infty)$. For $T > 0$, let $C[0, T]$ denote the space of all continuous functions over $[0, T]$. Let $X, Y$ be random variables defined on the probability space $(\Omega,\mathcal{F},\PX)$, which will be used to illustrate most of the notions and facts in this section (the same notations may have different connotations in other sections). Note that in this paper, a random variable can be real-valued with a probability density function, or path-valued (more precisely, $C[0, \infty)$- or $C[0, T]$-valued).

For any two probability measures $\mu$ and $\nu$, we write $\mu\sim\nu$ to mean they are equivalent, namely, $\mu$ is absolutely continuous with respect to $\nu$ and vice versa. For any two path-valued random variables $X_0^T = \{X(t); 0 \leq t \leq T\}$ and $Y_0^T =\{Y(t); 0 \leq t \leq T\}$, we use $\mu_{X_0^T}$ and $\mu_{Y_0^T}$ to denote the probability distributions on $C[0, T]$ induced by $X_0^T$ and $Y_0^T$, respectively, and $\mu_{X_0^T} \times \mu_{Y_0^T}$ the product distribution of $\mu_{X_0^T}$ and $\mu_{Y_0^T}$; moreover, we will use $\mu_{X_0^T, Y_0^T}$ to denote their joint probability distribution on $C[0, T] \times C[0, T]$. We say that $X_0^T$ is equivalent to $Y_0^T$ if $\mu_{X_0^T}\sim\mu_{Y_0^T}$, and moreover, $\{X(t)\}$ is equivalent to 
$\{Y(t)\}$ if $X_0^T$ is equivalent to $Y_0^T$ for all $T$. Besides, we use $\mathcal{F}_T(Y)$ to denote the $\sigma$-field generated by $Y_0^T$.

By Hitsuda \cite{hitsuda1968representation}, if a Gaussian process $\{Z(t)\}$ is equivalent to a given Brownian motion, then there exists a (possibly different) Brownian motion $\{ B(t)\}$ such that $Z(t)$ can be uniquely represented by
\begin{equation}\label{Gaussian-e-Brownian}
    Z(t)=B(t)+\int_0^t\int_0^s h(s,u)d B(u)ds,
\end{equation}
where $h(s,u)$ is a Volterra kernel function on $L^2([0,T]^2)$ for all $T$, i.e., $h(s,u)=0$ if $s<u$ and $\int_0^T\int_0^Th(s,u)^2dsdu<\infty$ for all $T$. Conversely, for a given Brownian motion $\{B(t)\}$, if $\{Z(t)\}$ has a representation in the form (\ref{Gaussian-e-Brownian}), then $\{Z(t)\}$ is equivalent to $\{B(t)\}$. Note that there exists a Volterra kernel function $l(s,u)\in L^2([0,T]^2)$ for all $T$, referred to as the {\em resolvent kernel} of $h(s,u)$, such that
\begin{equation}\label{resolvent-kernel}
\begin{aligned}
-h(s,u)&=l(s,u)+\int_u^sh(s,v)l(v,u)dv\\
&=l(s,u)+\int_u^sl(s,v)h(v,u)dv
\end{aligned}
\end{equation}
for all $s,u\in[0,\infty)$ (see \cite[Chapter 2]{smithies1958integral}). Therefore, the Brownian motion $\{B(t)\}$ can be also uniquely determined in terms of $\{Z(t)\}$ as
\begin{equation}\label{l-kernel}
B(t)=Z(t)+\int_0^t\int_0^sl(s,u)dZ(u)ds.
\end{equation}

The {\em mutual information} $I(X;Y)$ between two real-valued random variables $X,Y$ is defined as
\begin{equation} \label{def:mutual-information-rv}
I(X; Y) = \EX\left[\log \frac{f_{X, Y}(X, Y)}{f_X(X) f_Y(Y)}\right],
\end{equation}
where $f_X, f_Y$ denote the probability density functions of $X, Y$, respectively, and $f_{X,Y}$ their joint probability density function. More generally, for two $C[0, T]$-valued random variables $X_0^T, Y_0^T$, we define
{\scriptsize \begin{equation} \label{definition-mutual-information}
I(X_0^T; Y_0^T)=\begin{cases}
\EX\left[\log \frac{d \mu_{X_0^T, Y_0^T}}{d \mu_{X_0^T} \times \mu_{Y_0^T}}(X_0^T, Y_0^T)\right], & \mbox{ if } \frac{d \mu_{X_0^T, {Y_0^T}}}{d \mu_{X_0^T} \times \mu_{Y_0^T}} \mbox{ exists},\\
\infty, & \mbox{ otherwise},
\end{cases}
\end{equation}}\\
where $\frac{d \mu_{X_0^T, {Y_0^T}}}{d \mu_{X_0^T} \times \mu_{Y_0^T}}$ denotes the Radon-Nikodym derivative of $\mu_{X_0^T, {Y_0^T}}$ with respect to $\mu_{X_0^T} \times \mu_{Y_0^T}$.

The notion of mutual information can be further extended to generalized random processes, which we will only briefly describe and we refer the reader to~\cite{gelfand1959calculation} for a more comprehensive exposition.

Let $\mathcal{D}$ be the space of {\it test functions} over $\mathbb{R}$, i.e., all infinitely differentiable real functions with bounded support.
~The {\em mutual information} between two generalized random processes $\boldsymbol{x}=\{\boldsymbol{x}(\phi);\phi\in\mathcal{D} \}$ and $\boldsymbol{y}=\{\boldsymbol{y}(\psi);\psi\in\mathcal{D}\}$ is defined as
\begin{equation} \label{general-definition}
I(\boldsymbol{x}; \boldsymbol{y}) = \sup I(\boldsymbol{x}(\phi_1), \boldsymbol{x}(\phi_2), \dots, \boldsymbol{x}(\phi_m); \boldsymbol{y}(\psi_1), \boldsymbol{y}(\psi_2), \dots, \boldsymbol{y}(\psi_n)),
\end{equation}
where the supremum is over all possible $n, m \in \mathbb{N}$ and all possible test functions $\phi_1, \phi_2, \dots, \phi_m$, $\psi_1, \psi_2, \dots, \psi_n \in\mathcal{D}$. It is well-known that any ordinary stochastic process with locally integrable paths, $\{V(t);t\in\mathbb{R} \}$, will correspond to a generalized stochastic process $\{\boldsymbol{v}(\phi);\phi\in\mathcal{D} \}$ defined by 
$$
\boldsymbol{v}(\phi)=\int_\mathbb{R} V(t) \phi(t) dt.
$$
Then, it can be verified that the general definition of mutual information as in (\ref{general-definition}) includes (\ref{def:mutual-information-rv}) and (\ref{definition-mutual-information}) as special cases; moreover, when one of $\boldsymbol{x}$ and $\boldsymbol{y}$, say, $\boldsymbol{y}$, is a random variable, the general definition boils down to
$$
I(\boldsymbol{x}; \boldsymbol{y}) = \sup I(\boldsymbol{x}(\phi_1), \boldsymbol{x}(\phi_2), \dots, \boldsymbol{x}(\phi_m); \boldsymbol{y}),
$$
where the supremum is over all possible $m \in \mathbb{N}$ and all possible test functions $\phi_1, \phi_2, \dots, \phi_m\in\mathcal{D}$.

\section{Continuous-Time SK Coding}\label{sec:sk-coding-scheme}
In this section, we shall examine the continuous-time ACGN channel (\ref{ACGN-B-1}). Throughout this section, let $Z(t)=B(t)+\int_0^t\int_0^sh(s,u)dB(u)ds$.

The celebrated channel coding theorem by Shannon \cite{shannon1948mathematical} states, roughly speaking, that for a discrete memoryless channel, the capacity can be written as a supremum of the mutual information between the channel input and output. This classical result has been extensively extended and generalized to various channel models. Not surprisingly, under some mild assumptions, similar results hold for the non-feedback and feedback capacity of our channel. We will present a coding theorem for the feedback capacity below, i.e., \cite[Theorem 1]{Ihara1992channelcodingthm}, while that for the non-feedback capacity can be found in Section \ref{sec:proof-of-lower-bound}.

For the purpose of presenting the coding theorem, instead of transmitting a message index $W$, a random variable taking values from a finite alphabet, we will transmit a message process $\Theta = \{\Theta(t)\}$, a real-valued random process. Then, compared to (\ref{sfde-1}), the associated stochastic functional differential equation will take the following form:
\begin{equation*} \label{sfde-2}
dY(t)=g_t(\Theta(t), Y_0^{t})dt+dZ(t),
\end{equation*}
where we have set 
\begin{equation}\label{coding-thm-input}
X(t)=g_t(\Theta(t), Y_0^{t}).
\end{equation}
Following \cite{ihara1993information}, we consider the so-called {\em $T$-block feedback capacity} 
\[
C_{fb,T}(P) =\sup_{(\Theta,X)}\frac{1}{T}I(\Theta_0^T;Y_0^T)
\]
where the supremum is taken over all pairs $(\Theta,X)$ satisfying the following constraint
\begin{equation}\label{AP-T}
\frac{1}{T}\int_0^T \E [X^2(t)]dt\leq P.
\end{equation}
Now, we define 
$$
\bar{I}(\Theta;Y)=\limsup_{T\to\infty}\frac{1}{T}I(\Theta_0^T; Y_0^T),
$$
provided the limit exists, and further define
\[
C_{fb,\infty}(P)=\sup_{(\Theta,X)}\overline{I}(\Theta;Y),
\]
where the supremum is taken for all pairs $(\Theta,X)$ satisfying the constraint
\begin{equation}\label{AP-infinite}
\varlimsup_{T\to\infty} \frac{1}{T}\int_0^T\E[X^2(t)]dt\le P.
\end{equation}
Then, under some regularity conditions, the following coding theorem states that the feedback capacity of the channel (\ref{ACGN-B-1}) is equal to $C_{fb,\infty}(P)$.
\begin{thm}[\!\!{\cite[Theorem 1]{Ihara1992channelcodingthm}}]\label{coding-theorem}
Assume that
\begin{equation*}
\lim_{T\to\infty}\frac{1}{T}C_{fb,T}(P)=0.
\end{equation*}
If $R< C_{fb,\infty}(P)$ and $C_{fb,\infty}(P)$ is continuous in $P$, then the rate $R$ is achievable. Conversely, if a rate $R$ is achievable, then $R\le C_{fb,\infty}(P)$.
\end{thm}

The following lemma characterizes the relationship between the mutual information and the causal minimum mean-square error (CMMSE), generalizing the classical I-CMMSE relationship in \cite{kadota1971mutual,duncan1970calculation}.

\begin{lem}[\!\!{\cite[Theorem 1]{ihara1990capacity}}]\label{mutual-info}
Suppose $\int_0^T\E[X^2(t)]dt< \infty$. Then, we have
\begin{equation*}
I(\Theta_0^T;Y_0^T)=\frac{1}{2}\int_0^T \E\left[\left\vert X_l(t)-\E[X_l(t)|\mathcal{F}_t(Y)]\right\vert^2\right]dt,
\end{equation*}
where 
\begin{equation*}\label{l-op}
X_l(t)=X(t)+\int_0^tl(t,u)X(u)du,
\end{equation*}
and $l=l(s,u)$ is the resolvent kernel of $h$ in $L^2([0,T]^2)$.
\end{lem}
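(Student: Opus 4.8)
The plan is to whiten the colored noise in the channel (\ref{ACGN-B-1}) by means of the resolvent kernel $l$, thereby reducing the problem to the classical white Gaussian channel, and then to invoke the (classical) information--causal-MMSE identity of Duncan, in its feedback form \cite{duncan1970calculation,kadota1971mutual}. The whitening device is exactly (\ref{resolvent-kernel})--(\ref{l-kernel}): since $Z(t)=B(t)+\int_0^t\int_0^s h(s,u)\,dB(u)\,ds$ is equivalent to the Brownian motion $B$, the causal map $z\mapsto z+\int_0^{\cdot}\int_0^s l(s,u)\,dz(u)\,ds$ sends $Z$ to $B$, and by the resolvent relation (\ref{resolvent-kernel}) it is invertible, with inverse the equally causal map $b\mapsto b+\int_0^{\cdot}\int_0^s h(s,u)\,db(u)\,ds$.

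First I would apply this same operator to the channel output. Setting
\[
\widetilde Y(t)=Y(t)+\int_0^t\int_0^s l(s,u)\,dY(u)\,ds,\qquad t\in[0,T],
\]
and substituting $dY(u)=X(u)\,du+dZ(u)$ splits the double integral into a signal part and a noise part; the noise part equals $B(t)-Z(t)$ by (\ref{l-kernel}), while a (stochastic) Fubini exchange rewrites the signal part, yielding
\[
\widetilde Y(t)=\int_0^t X_l(u)\,du+B(t),\qquad X_l(u)=X(u)+\int_0^u l(u,v)X(v)\,dv,
\]
i.e.\ $\widetilde Y$ is the output of an AWGN channel driven by the input $X_l$. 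Because $h$ and $l$ are $L^2$ Volterra kernels bound together by (\ref{resolvent-kernel}), the map $Y\mapsto\widetilde Y$ and its inverse are bounded causal linear operators on path space, whence $\mathcal F_t(Y)=\mathcal F_t(\widetilde Y)$ for every $t$; in particular $\widehat{X_l}(t)=\E[X_l(t)\mid\mathcal F_t(Y)]=\E[X_l(t)\mid\mathcal F_t(\widetilde Y)]$. Furthermore, since $(\Theta,Y)$ solves the SFDE $dY=g_t(\Theta(t),Y_0^t)\,dt+dZ(t)$ uniquely, $Y_0^t$ is a measurable functional of $(\Theta_0^t,Z_0^t)=(\Theta_0^t,B_0^t)$; combined with the fact that $X_l(t)$ is $\sigma(\Theta)\vee\mathcal F_t(\widetilde Y)$-measurable and with $\Theta$ being independent of the noise, one checks that $B$ is a Brownian motion with respect to the filtration $\{\sigma(\Theta)\vee\mathcal F_t(\widetilde Y)\}_{t\in[0,T]}$, which is the adaptedness/innovation hypothesis needed below.

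The conclusion then follows in two steps. Since $Y\mapsto\widetilde Y$ is a bimeasurable bijection of $C[0,T]$, the Radon--Nikodym derivative appearing in (\ref{definition-mutual-information}) is unchanged, so $I(\Theta_0^T;Y_0^T)=I(\Theta_0^T;\widetilde Y_0^T)$. Next, $\int_0^T\E[X_l(t)^2]\,dt<\infty$ follows from the hypothesis $\int_0^T\E[X(t)^2]\,dt<\infty$ together with $l\in L^2([0,T]^2)$ via Cauchy--Schwarz, so the information--causal-MMSE identity for the white Gaussian channel with feedback applies to $\widetilde Y(t)=\int_0^t X_l(u)\,du+B(t)$ and gives $I(\Theta_0^T;\widetilde Y_0^T)=\frac12\int_0^T\E[|X_l(t)-\widehat{X_l}(t)|^2]\,dt$, which is the assertion.

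The part I expect to require the most care is the pair of ``soft'' reductions. First, one must make rigorous sense of $\int_0^t\int_0^s l(s,u)\,dY(u)\,ds$ for the semimartingale $Y$, justify the stochastic Fubini exchange, and verify that the whitening map and its resolvent inverse are genuinely mutually inverse causal bijections on the relevant path spaces, so that $\mathcal F_t(Y)=\mathcal F_t(\widetilde Y)$ — this is where the $L^2$ Volterra calculus of \cite{smithies1958integral} and Hitsuda's representation \cite{hitsuda1968representation} do the work. Second, Duncan's identity is classically stated with the channel input on the left of the mutual information, whereas here the message process $\Theta$ sits there; the passage to $I(\Theta_0^T;\widetilde Y_0^T)$ rests on the observation that, conditionally on $\mathcal F_t(\widetilde Y)$, the increment $d\widetilde Y(t)$ depends on $\Theta$ only through $X_l(t)$, so the corresponding Girsanov densities, and hence the two mutual informations, coincide. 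The remaining manipulations (Fubini, Cauchy--Schwarz, and the definitions of Section \ref{section-pre}) are routine.
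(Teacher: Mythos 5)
The paper does not prove this lemma at all: it is imported verbatim as \cite[Theorem 1]{ihara1990capacity}, so there is no internal proof to compare against. Your argument — whiten the output via the resolvent kernel $l$ so that $\widetilde Y(t)=\int_0^t X_l(u)\,du+B(t)$, note that the causal invertibility of the whitening map gives $\mathcal F_t(Y)=\mathcal F_t(\widetilde Y)$ and preserves $I(\Theta_0^T;\cdot)$, then apply the feedback form of Duncan's identity (Kadota--Zakai--Ziv) to the resulting AWGN channel — is exactly the standard proof of Ihara's theorem, and you have correctly flagged the two genuinely delicate points (the stochastic Fubini/causal-bijection step and the adaptedness hypothesis needed so that $B$ is a Brownian motion for the enlarged filtration). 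I see no gap; the proposal is correct.
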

When it comes to the $T$-block feedback capacity of the channel (\ref{ACGN-B-1}), the so-called additive feedback coding scheme can achieve $C_{fb,T}(P)$ (see, e.g., \cite{ebert1970capacity,ihara1980capacity}). This coding scheme is formulated as follows. Consider the additive feedback coding scheme $(\Theta,X)=(\{\Theta(t)\},\{X(t)\})$ with $X(t)=\Theta(t)-\zeta(t)$, where $\zeta=\{\zeta(t)\}$ represents the feedback term, causally dependent on the output $Y=\{ Y(t)\}$, and is appropriately chosen such that the stochastic functional differential equation
\begin{equation}\label{additive-feedback}
Y(t)=\int_0^t \big(\Theta(s)-\zeta(s)\big) ds+Z(t)
\end{equation}
admits a unique solution.
~Obviously, if the feedback term $\zeta$ vanishes and so there is no feedback, (\ref{additive-feedback}) becomes
\[
Y^\ast (t)=\int_0^t \Theta(s)ds+Z(t).
\]
For such additive feedback coding schemes, we have the following lemma, which slightly extends \cite[Theorem 6.2.3]{ihara1993information}.
\begin{lem}\label{mutual-filter}
Suppose that 
\[
\int_0^T\E[\Theta^2(t)]dt< \infty, ~~~\int_0^T\E[\zeta^2(t)]dt< \infty.
\]
Then, for any $t\in[0,T]$, we have  
\begin{align}\label{sigma-algebra-no-change}
    I(\Theta_0^t;Y_0^t)&=I(\Theta_0^t;Y^{\ast,t}_0).
\end{align}
\end{lem}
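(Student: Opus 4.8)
The plan is to show that feeding back a causal functional of the output does not change the information the message process conveys, by establishing that the two output processes $Y_0^t$ and $Y_0^{\ast,t}$ generate, jointly with $\Theta_0^t$, equivalent probability measures related by an explicit Radon--Nikodym derivative that cancels in the mutual information. The key observation is that the feedback term $\zeta(s)$, being causally dependent on $Y_0^s$, can be thought of as an invertible (causal) transformation of paths: given a realization of $Y$, one recovers $Y^\ast$ by $Y^\ast(t) = Y(t) + \int_0^t \zeta(s)\,ds$, where $\zeta(s)$ is itself determined by $Y_0^s$; conversely, since $\zeta$ is adapted, the map is invertible in the other direction as well. So the first step is to make precise that there is a bijective, bimeasurable, causal map $\Phi$ on $C[0,t]$ with $Y^{\ast,t}_0 = \Phi(Y_0^t)$ and $Y_0^t = \Phi^{-1}(Y^{\ast,t}_0)$, holding pointwise on $\Omega$ (using pathwise uniqueness of the solution to (\ref{additive-feedback}), which is assumed, together with the pathwise continuity of $Y$).

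Second, I would invoke the invariance of mutual information under bijective bimeasurable transformations of either argument. Concretely, $I(\Theta_0^t; Y_0^t) = I(\Theta_0^t; \Phi(Y_0^t)) = I(\Theta_0^t; Y^{\ast,t}_0)$: since $\Phi$ is a measurable isomorphism, the Radon--Nikodym derivative defining $I(\Theta_0^t; Y_0^t)$ pulls back to the one defining $I(\Theta_0^t; Y^{\ast,t}_0)$, and the logarithm under the expectation is unchanged because the Jacobian-type factors appear identically in numerator and denominator (both the joint law and the product of marginals transform by the same push-forward under $\mathrm{id}\times\Phi$). Equivalently, one can argue at the level of the general definition (\ref{general-definition}): any finite collection of testing-function evaluations of $Y_0^t$ is a measurable function of $Y^{\ast,t}_0$ and vice versa, so the two suprema coincide. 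I would present whichever of these two routes is cleaner given the measure-theoretic conventions already set up; the push-forward argument is probably the most transparent.

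The technical heart — and the main obstacle — is step one: verifying that $\Phi$ is genuinely a bimeasurable bijection on path space, not merely a map defined $\PX$-almost surely on the range of $Y$. The subtlety is that $\zeta$ is only specified as a causal functional that makes (\ref{additive-feedback}) well posed, so I need to argue that the solution map $Z \mapsto Y$ (equivalently, the map on paths induced by the Picard-type iteration solving the SFDE) is a measurable isomorphism onto its image, and that the same is true after composing with the (deterministic, causal) passage between $Z$ and $B$ via the resolvent kernel (\ref{l-kernel}). The assumed square-integrability bounds on $\Theta$ and $\zeta$ guarantee the relevant integrals are finite and the relevant measures are well defined and mutually absolutely continuous, so once bimeasurability is in hand the cancellation is automatic. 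I would model the argument closely on \cite[Theorem 6.2.3]{ihara1993information}, noting that the only extension needed here is to allow $\Theta$ to be a process rather than a finite-alphabet index and to carry the identity along each truncation time $t \le T$ rather than only at $t = T$, neither of which affects the structure of the proof.
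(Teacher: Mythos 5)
Your proposal is correct and matches the intended argument: the paper gives no explicit proof, deferring to Ihara's Theorem 6.2.3 ``in the same manner,'' and that proof is precisely the causal, a.s.-invertible correspondence $Y^{\ast}(t)=Y(t)+\int_0^t\zeta(s)\,ds$ (with the inverse supplied by pathwise uniqueness of (\ref{additive-feedback})), giving $\mathcal{F}_t(Y)=\mathcal{F}_t(Y^{\ast})$ and hence equality of the mutual informations. Note only that the ``main obstacle'' you identify is lighter than you fear: since mutual information depends only on the generated $\sigma$-fields (mod null sets), an almost-surely defined pair of mutually inverse causal measurable maps suffices, and no globally bimeasurable bijection of $C[0,t]$ (nor any passage between $Z$ and $B$) is needed.
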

Note that (\ref{sigma-algebra-no-change}) means that for the channel (\ref{ACGN-B-1}) under the additive feedback coding scheme, additive feedback will not provide the receiver with any new information. However, feedback can be used as a means to save transmission energy, since, for a fixed message $\Theta$, we can lower $\E[\vert\Theta(t)-\zeta(t)\vert^2]$ by appropriately choosing $\zeta$. This observation suggests an effective way to design a coding scheme to maximize $I(\Theta_0^T;Y_0^T)$ under the constraint (\ref{AP-T}). Indeed, the following result has been essentially proved by Ihara in \cite{ihara1980capacity}, for which a relatively more complete proof is provided in Appendix \ref{lemma2}.

\begin{thm}[\!\!{\cite[Theorem 3]{ihara1980capacity}} Reformulated]\label{detailed-structure}
For the continuous-time ACGN channel (\ref{ACGN-B-1}) under the constraint (\ref{AP-T}), $C_{fb,T}(P)$ can be achieved by a Gaussian pair $(\Theta,X)$ of the following form
\begin{equation}\label{opt-pair}
X(t)=\Theta(t)-\E[\Theta(t)\vert \mathcal{F}_t(Y^\ast)],~~ t\in[0,T],
\end{equation}
where 
$$
Y^\ast(t)=\int_0^t\Theta(s)ds+Z(t).
$$
Moreover, $\mathcal{F}_t(Y^\ast)=\mathcal{F}_t(Y)$, and so the pair $(\Theta,X)$ characterizes an additive feedback coding scheme of the form (\ref{additive-feedback}) where $\zeta(t)=\E[\Theta(t)\vert \mathcal{F}_t(Y)]$.
\end{thm}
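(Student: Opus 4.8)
The plan is to combine the already-cited reduction of $C_{fb,T}(P)$ to (Gaussian) additive feedback schemes with Lemma \ref{mutual-filter}, and then to optimize the feedback term. The guiding observation is that, for a fixed Gaussian message process $\Theta$, Lemma \ref{mutual-filter} says that every admissible additive feedback pair $(\Theta,X)$ with $X(t)=\Theta(t)-\zeta(t)$ yields the \emph{same} mutual information $I(\Theta_0^T;Y_0^T)=I(\Theta_0^T;Y^{\ast,T}_0)$, where $Y^\ast(t)=\int_0^t\Theta(s)\,ds+Z(t)$; hence, among all additive feedback pairs built from $\Theta$, the best one is simply the one minimizing the power $\tfrac1T\int_0^T\E[|\Theta(t)-\zeta(t)|^2]\,dt$, which should be $\zeta(t)=\E[\Theta(t)\mid\mathcal{F}_t(Y)]$. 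The work is to make this precise, since the conditioning $\sigma$-field $\mathcal{F}_t(Y)$ itself depends on $\zeta$.

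Concretely, I would take $\zeta^\ast(t):=\E[\Theta(t)\mid\mathcal{F}_t(Y^\ast)]$ --- for Gaussian $\Theta$ this is a \emph{linear} causal functional of $Y^\ast$, given by the Kalman--Bucy filter --- and let $\bar Y$ be the corresponding output, i.e.\ the unique solution of $d\bar Y(t)=(\Theta(t)-\zeta^\ast(t))\,dt+dZ(t)$, equivalently $\bar Y(t)=Y^\ast(t)-\int_0^t\zeta^\ast(s)\,ds$. Since $\zeta^\ast$ is built causally from $Y^\ast$, this gives $\mathcal{F}_t(\bar Y)\subseteq\mathcal{F}_t(Y^\ast)$ at once; for the reverse inclusion I would rewrite $Y^\ast(t)=\bar Y(t)+\int_0^t\zeta^\ast(s)\,ds$ as a \emph{linear Volterra integral equation} for $Y^\ast$ in terms of $\bar Y$ and solve it by the $L^2$ resolvent-kernel calculus already used in (\ref{resolvent-kernel}) and (\ref{l-kernel}), so that $Y^\ast$ is in turn a causal functional of $\bar Y$. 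This yields $\mathcal{F}_t(Y^\ast)=\mathcal{F}_t(\bar Y)$, whence $\zeta^\ast$ is a bona fide causal feedback functional and $\zeta^\ast(t)=\E[\Theta(t)\mid\mathcal{F}_t(\bar Y)]$; in particular $(\Theta,\Theta-\zeta^\ast)$ is an admissible additive feedback scheme of exactly the stated form. The same inversion, applied to an arbitrary admissible additive feedback pair $(\Theta,\Theta-\zeta)$ with output $Y^\zeta$, shows $\mathcal{F}_t(Y^\zeta)=\mathcal{F}_t(Y^\ast)$; hence $\zeta(t)$ is $\mathcal{F}_t(Y^\ast)$-measurable and $\E[|\Theta(t)-\zeta(t)|^2]\ge\E[|\Theta(t)-\zeta^\ast(t)|^2]$ for a.e.\ $t$, so the $\zeta^\ast$-scheme uses no more power.

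Putting the pieces together: I start from a Gaussian additive feedback pair that approaches $C_{fb,T}(P)$ and satisfies (\ref{AP-T}), and replace its feedback term by $\zeta^\ast$ for the same $\Theta$. By Lemma \ref{mutual-filter} the mutual information is unchanged; by the previous step the power does not increase, so (\ref{AP-T}) is preserved; and the resulting pair has precisely the form $X(t)=\Theta(t)-\E[\Theta(t)\mid\mathcal{F}_t(Y^\ast)]$ with $\mathcal{F}_t(Y^\ast)=\mathcal{F}_t(Y)$ and $\zeta(t)=\E[\Theta(t)\mid\mathcal{F}_t(Y)]$. Applying this replacement along a capacity-approaching sequence gives the theorem.

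The step I expect to be the main obstacle --- and the one that really needs the appendix --- is the $\sigma$-field identity $\mathcal{F}_t(Y^\ast)=\mathcal{F}_t(Y)$: one inclusion is an explicit substitution, but the other is the invertibility of a linear Volterra relation between $Y$ and $Y^\ast$, and it is here that well-posedness of the additive feedback scheme and the Gaussianity of $\Theta$ (which makes the filter, and hence the relation, linear and resolvable) genuinely enter. A minor additional check is that the moment conditions of Lemma \ref{mutual-filter}, $\int_0^T\E[\Theta^2(t)]\,dt<\infty$ and $\int_0^T\E[(\zeta^\ast(t))^2]\,dt<\infty$, survive the replacement --- the latter follows from the former by conditional Jensen, and the former may be assumed for a capacity-approaching $\Theta$, after a routine truncation if necessary.
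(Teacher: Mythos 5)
Your proposal is correct and follows essentially the same route as the paper's Appendix A proof: reduce to Gaussian additive feedback schemes (cited), establish $\mathcal{F}_t(Y^\ast)=\mathcal{F}_t(Y)$ by writing $\E[\Theta(t)\mid\mathcal{F}_t(Y^\ast)]$ as an $L^2$-Volterra functional of $Y^\ast$ and inverting via the resolvent kernel, then use Lemma \ref{mutual-filter} to keep the mutual information fixed while the conditional-expectation feedback minimizes the power. The obstacle you flagged (the $\sigma$-field identity) is exactly the step the paper devotes its appendix to, and your resolvent-kernel inversion is the paper's argument.
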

The essence of the above theorem is that we can restrict our attention to the coding schemes of the form as in (\ref{opt-pair}). Following the spirits of the classical Schalkwijk-Kailath (SK) coding scheme, we formulate in our notation the continuous-time version of the celebrated SK coding scheme $(\Theta,X)$ in the form of
\begin{equation}\label{definition:CSK}
\begin{aligned}
X(t)&=\Theta(t)-\zeta(t)\\
&=A(t)\Theta_0-A(t)\E[\Theta_0\vert \mathcal{F}_t(Y^\ast)]
\end{aligned}
\end{equation}
satisfying 
$$
\E[X^2(t)]=P \quad \mbox{for any $t\ge 0$},
$$
where $\Theta_0$ is a standard Gaussian random variable and $A(t)$ is some deterministic function.

In general, the above continuous-time SK coding scheme can be invalid in the sense that $A(t)$ may not exist. However, in Sections \ref{lower-bound} and \ref{proof-main-thm}, we will show that the continuous-time SK coding scheme is valid for some special families of ACGN channels (\ref{ACGN-B-1}) and further is optimal for the continuous-time ARMA$(1,1)$ Gaussian channel in some cases.

\section{Mutual Information Rate}\label{lower-bound}
In this section, we narrow our attention to the special family of ACGN channels (\ref{ACGN-B-1}) in which the resolvent kernel $l(t,s)$ of $h(t,s)$ can be written as
\begin{equation}\label{hitsuda-separable}
l(t,s)=\frac{l_u(s)}{l_d(t)},\quad \mbox{for $t\ge s$},
\end{equation} 
where $l_u(t)\in C[0,+\infty)$ and $l_d(t)\in C^1[0,+\infty)$.

 We first state a lemma whose proof has been deferred to Appendix \ref{appendix-proof-of-ode}, which characterizes the asymptotics of the solution $g$ to the following ordinary differential equation (ODE):
\begin{equation}
\label{ode}
\left\{
\begin{aligned}
 g^\prime(t)&=-Pg^3(t)+\frac{P}{\sqrt{2}}g^2(t)+p(t)g(t)+\frac{1}{\sqrt{2}}q(t)
,\\
g(0)&=\frac{1}{\sqrt{2}},
\end{aligned}
\right.
\end{equation}
where $p(t),q(t)\in C[0,\infty)$ and $\lim_{t\to\infty}p(t)$ and $\lim_{t\to\infty}q(t)$ exist, denoted by $p$ and $q$, respectively.

\begin{lem}\label{exist-limit-ode}
For any $P$, the ODE (\ref{ode}) admits a unique solution $g(t)\in C^1[0,\infty)$. Moreover, $\lim_{t\to\infty}g(t)$ exists, which is one of the real roots of the following cubic equation:
$$
-Py^3+\frac{P}{\sqrt{2}}y^2+p y+\frac{q}{\sqrt{2}}=0.
$$
\end{lem}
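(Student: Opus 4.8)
The plan is to split the proof into two parts: first establish existence and uniqueness of a global $C^1$ solution on $[0,\infty)$, then analyze the asymptotic behavior as $t\to\infty$.

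\textbf{Existence and uniqueness.} The right-hand side of (\ref{ode}) is a cubic polynomial in $g$ with continuous coefficients in $t$, hence locally Lipschitz in $g$ and continuous in $t$; by the Picard--Lindel\"of theorem, a unique local $C^1$ solution exists on some maximal interval $[0,\tau)$. To upgrade this to a global solution, I would rule out finite-time blow-up by an a priori bound. Since $P>0$, the leading term $-Pg^3$ is strongly stabilizing: whenever $|g(t)|$ is large, $g^\prime(t)$ has sign opposite to $g(t)$ (the $-Pg^3$ term dominates the quadratic, linear, and constant terms because $p(t),q(t)$ are bounded on any finite interval, being continuous with finite limits). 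More precisely, I would exhibit constants $M>0$ (depending on $P$ and on $\sup_{t}|p(t)|$, $\sup_t|q(t)|$, both finite since $p,q\in C[0,\infty)$ converge) such that $g^\prime(t)<0$ whenever $g(t)\ge M$ and $g^\prime(t)>0$ whenever $g(t)\le -M$. Combined with $|g(0)|=1/\sqrt{2}$, this confines $g(t)$ to a bounded interval $[-M', M']$ with $M'=\max(M,1/\sqrt2)$ for all $t$ in the maximal interval, so the solution cannot escape to infinity in finite time and $\tau=\infty$.

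\textbf{Convergence of $g(t)$.} This is the main obstacle. Denote by $F(t,y)=-Py^3+\tfrac{P}{\sqrt2}y^2+p(t)y+\tfrac{1}{\sqrt2}q(t)$ the right-hand side, and let $F_\infty(y)=-Py^3+\tfrac{P}{\sqrt2}y^2+py+\tfrac{q}{\sqrt2}$ be its limiting ("autonomous") version, whose real roots are the claimed candidate limits. The cubic $F_\infty$ has either one or three real roots (counting multiplicity); write them as $y_1\le y_2\le y_3$ (when there is one real root, $y_1=y_2=y_3$ formally, or just $y_1$). Because the leading coefficient $-P<0$, we have $F_\infty(y)>0$ for $y$ sufficiently negative and $F_\infty(y)<0$ for $y$ sufficiently positive. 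The idea is to show $g(t)$ must converge monotonically (eventually) to one of the roots. I would argue as follows: pick any $\delta>0$ small enough that the intervals around the roots are disjoint, and choose $T_0$ large enough that $|p(t)-p|$ and $|q(t)-q|$ are so small that $F(t,y)$ and $F_\infty(y)$ have the same sign for all $t\ge T_0$ and all $y$ bounded away (by $\delta$) from the root set. Then for $t\ge T_0$, as long as $g(t)$ stays in a region where $F_\infty$ has constant sign (which happens between consecutive roots, or beyond the extreme roots), $g(t)$ is strictly monotone; a bounded monotone function converges, and by continuity of $F$ the limit $L$ must satisfy $F_\infty(L)=0$ — unless $g(t)$ crosses into an adjacent sign-region. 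But sign-regions of $F_\infty$ are separated precisely by its roots, and once $g$ enters an interval between two consecutive roots it is trapped there for $t\ge T_0$ (the vector field points inward or along the boundary near each root for the perturbed equation too, by the sign-matching choice of $T_0$), so only finitely many crossings are possible and eventually $g$ settles into one sign-region and converges to an endpoint root.

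\textbf{Remark on the degenerate case.} When $F_\infty$ has a single real root or a double root, one must be slightly more careful: near a double root $F_\infty$ does not change sign, so $g$ could in principle approach it from one side; the monotonicity argument still applies and forces convergence to that root (or to the simple root on the far side). I would handle this by noting that in every case the real-root set is nonempty and closed, $g(t)$ is eventually trapped in a maximal sign-interval of $F_\infty$, and monotone bounded convergence pins the limit to a root. The main thing to get right is the uniform sign-matching between $F(t,\cdot)$ and $F_\infty(\cdot)$ for large $t$, which follows from the uniform continuity of the cubic in $y$ on the bounded invariant interval together with $p(t)\to p$, $q(t)\to q$; everything else is a standard ODE barrier/monotonicity argument.
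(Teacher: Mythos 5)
Your proposal is correct and follows essentially the same route as the paper's proof: global existence via local well-posedness plus ruling out blow-up using the dominating $-Pg^3$ term, and convergence via continuity of the roots of the perturbed cubic in $t$, a sign/monotonicity ("barrier") analysis of $g'$ relative to those roots, and a case distinction on the multiplicity structure of the limiting cubic's real roots (one real root, three distinct, or a double root). The only cosmetic difference is that you organize the argument around sign-regions of the limiting cubic while the paper organizes it around $\varepsilon$-neighborhoods of the roots, but the mechanism is the same.
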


Equipped with Lemma \ref{exist-limit-ode}, we can prove the following theorem.
\begin{thm}\label{lower-bound-CTGC}
Assume the resolvent kernel $l(t,s)$ of $h(t,s)$ in (\ref{ACGN-B-1}) can be written in the form (\ref{hitsuda-separable}) with 
\begin{equation}\label{lower-bound-thm:condition-rate}
\lim_{t\to\infty}\frac{l_u(t)}{l_d(t)}=\alpha,~\lim_{t\to\infty} \frac{l_d^\prime(t)}{l_d(t)}=\beta,
\end{equation}
where $\alpha,\beta\in\mathbb{R}$. Then, 
we have
\begin{equation}\label{I-SK}
\bar{I}_{SK}(\Theta;Y) =Pr_P^2,
\end{equation}
where $r_P=\lim_{t\to\infty}g(t)$ and $g$ is the solution of the ODE (\ref{ode}) with $p(t)=-l^\prime_d(t)/l_d(t)$ and $q(t)=(l_u(t)+l^\prime_d(t))/l_d(t)$. Moreover, $r_P$ is one of the real roots of the following cubic equation
\begin{equation}\label{3rd-poly}
    -Py^3+\frac{P}{\sqrt{2}}y^2-\beta y+\frac{\beta+\alpha}{\sqrt{2}}=0.
    \end{equation}
\end{thm}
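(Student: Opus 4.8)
The plan is to evaluate $I(\Theta_0^T;Y_0^T)$ for the continuous‑time SK pair (\ref{definition:CSK}) by means of the I‑CMMSE identity of Lemma~\ref{mutual-info}, reduce the whole problem to a scalar ODE, and then pass to the limit via Lemma~\ref{exist-limit-ode}. Write $\Theta(t)=A(t)\Theta_0$, let $e(t):=\Theta_0-\EX[\Theta_0\mid\mathcal{F}_t(Y)]$ be the filtering error, and set $r(t):=\EX[e^2(t)]$, so the SK input is $X(t)=A(t)e(t)$ and the power constraint reads $A^2(t)\,r(t)=P$. The first step I would establish is the identity
\[
X_l(t)-\widehat{X_l}(t)=A_l(t)\,e(t),\qquad A_l(t):=A(t)+\int_0^t l(t,u)A(u)\,du,
\]
which follows from the elementary fact that $e(u)-\EX[e(u)\mid\mathcal{F}_t(Y)]=\Theta_0-\EX[\Theta_0\mid\mathcal{F}_t(Y)]=e(t)$ for every $u\le t$ (since $\EX[\Theta_0\mid\mathcal{F}_u(Y)]$ is $\mathcal{F}_t(Y)$‑measurable), together with $\EX[X(t)\mid\mathcal{F}_t(Y)]=0$; here I invoke Theorem~\ref{detailed-structure} to identify $\mathcal{F}_t(Y^\ast)=\mathcal{F}_t(Y)$. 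Lemma~\ref{mutual-info} then gives $I(\Theta_0^T;Y_0^T)=\tfrac12\int_0^T A_l^2(t)\,r(t)\,dt$.

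The second step is to obtain the evolution of $r$. Applying the resolvent transformation (\ref{l-kernel}) to $Y^\ast(t)=\int_0^t\Theta(s)\,ds+Z(t)$ whitens the noise, giving
\[
\widehat Y(t):=Y^\ast(t)+\int_0^t\!\!\int_0^s l(s,u)\,dY^\ast(u)\,ds=\int_0^t A_l(s)\Theta_0\,ds+B(t),
\]
a causal functional of $Y^\ast$ with causal inverse, so $\mathcal{F}_t(\widehat Y)=\mathcal{F}_t(Y^\ast)=\mathcal{F}_t(Y)$. Estimating the constant $\Theta_0\sim N(0,1)$ from $d\widehat Y(t)=A_l(t)\Theta_0\,dt+dB(t)$ is a Kalman--Bucy problem whose error variance obeys $r'(t)=-A_l^2(t)\,r^2(t)$ with $r(0)=1$; in particular $1/r(t)=1+\int_0^t A_l^2(s)\,ds$, so $r$ stays strictly positive and $A(t)=\sqrt{P/r(t)}$ is well defined. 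The separable structure $l(t,s)=l_u(s)/l_d(t)$ enters through $A_l(t)=A(t)+l_d(t)^{-1}\!\int_0^t l_u(s)A(s)\,ds$, equivalently $l_d(t)^{-1}\!\int_0^t l_u A=A_l-A$.

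The heart of the argument is the substitution $g(t):=A_l(t)/(\sqrt2\,A(t))=\tfrac1{\sqrt{2P}}A_l(t)\sqrt{r(t)}$, for which $g(0)=1/\sqrt2$ and $A_l^2(t)r(t)=2P\,g^2(t)$. Differentiating $A=\sqrt{P}\,r^{-1/2}$ and using $r'=-A_l^2r^2$ gives $A'=\tfrac{P}{2}A_l^2/A=\tfrac{P}{\sqrt2}g\,A_l$; differentiating $A_l=A+l_d^{-1}\!\int_0^t l_u A$ and using $l_d^{-1}\!\int_0^t l_u A=A_l-A$ gives $A_l'=\tfrac{P}{\sqrt2}g\,A_l+\tfrac{l_u}{l_d}A-\tfrac{l_d'}{l_d}(A_l-A)$. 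Multiplying through by $\tfrac1{\sqrt{2P}}\sqrt r$ and using $\tfrac1{\sqrt{2P}}A_l\sqrt r=g$ and $\tfrac1{\sqrt{2P}}A\sqrt r=\tfrac1{\sqrt2}$, a short computation yields exactly the ODE (\ref{ode}) with $p(t)=-l_d'(t)/l_d(t)$ and $q(t)=(l_u(t)+l_d'(t))/l_d(t)$. By (\ref{lower-bound-thm:condition-rate}) these coefficients are continuous with finite limits $p=-\beta$, $q=\alpha+\beta$, so Lemma~\ref{exist-limit-ode} furnishes a unique global solution $g$ converging to a root $r_P$ of $-Py^3+\tfrac{P}{\sqrt2}y^2-\beta y+\tfrac{\alpha+\beta}{\sqrt2}=0$, i.e.\ (\ref{3rd-poly}); boundedness of $g$ (equivalently of $A_l/A$) together with $A'\ge0$ then prevents $A$ from blowing up or reaching $0$ in finite time, so the SK scheme (\ref{definition:CSK}) is valid on $[0,\infty)$. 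Finally $I(\Theta_0^T;Y_0^T)=\tfrac12\int_0^T A_l^2 r\,dt=P\int_0^T g^2(t)\,dt$, and since $g(t)\to r_P$ the Cesàro mean gives $\bar I_{SK}(\Theta;Y)=\lim_{T\to\infty}\tfrac{P}{T}\int_0^T g^2=P\,r_P^2$.

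I expect the main difficulty to be the well‑posedness issue rather than any single calculation: the gain $A$ that defines the SK scheme is only pinned down implicitly by $A^2r=P$, while $r$ itself depends on $A$ through $A_l$, so one must argue that this coupled Riccati‑type system has a solution on all of $[0,\infty)$ rather than merely locally — which is exactly what the global‑existence half of Lemma~\ref{exist-limit-ode} supplies once the system is recast in terms of $g$. A secondary technical point is the rigorous justification of the Kalman--Bucy error equation and of $\mathcal{F}_t(\widehat Y)=\mathcal{F}_t(Y)$ (using the resolvent kernel (\ref{l-kernel}) and Theorem~\ref{detailed-structure}); the purely algebraic verification that $g$ solves (\ref{ode}) is routine but requires careful bookkeeping.
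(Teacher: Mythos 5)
Your proposal is correct and follows essentially the same route as the paper: the substitution $g=A_l/(\sqrt2\,A)$ (your $A_l$ is the paper's $H$), the reduction to the Abel ODE (\ref{ode-Z}) via Lemma \ref{exist-limit-ode}, the Kalman--Bucy error formula $\bigl(1+\int_0^t H^2\bigr)^{-1}$, and the Ces\`aro limit $P\,r_P^2$ all match. The only difference is one of direction — the paper first defines $g$ as the global solution of the ODE, sets $A(t)=\sqrt{P}e^{\int_0^t Pg^2}$ explicitly, and then verifies $2AA'=PH^2$ to confirm validity of the SK scheme, whereas you derive the ODE from the coupled system and then note that the argument must be reversed through Lemma \ref{exist-limit-ode} to resolve the circularity; your explicit identity $X_l-\widehat{X_l}=A_l\,e$ usefully spells out a step the paper leaves implicit.
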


\begin{proof}
Let $A(t)$ be a function defined by 
\begin{equation}\label{proof-lower-bound-A}
A(t) = \sqrt{P}e^{\int_0^tPg^2(s)ds}, ~~~t\ge 0,
\end{equation}
where the function $g(t)$ is defined to be a solution of the following Abel equation of the first kind (see, e.g., \cite{markakis2009closed}):
\begin{equation}\label{ode-Z}
\left\{
\begin{aligned}
 g^\prime(t)&=-Pg^3(t)+\frac{P}{\sqrt{2}}g^2(t)-\frac{l_d^\prime(t)}{l_d(t)}g(t)+\frac{1}{\sqrt{2}}\frac{l_d^\prime(t)+l_u(t)}{l_d(t)}
,\\
g(0)&=\frac{1}{\sqrt{2}}.
\end{aligned}
\right.
\end{equation}
It then follows from (\ref{lower-bound-thm:condition-rate}) and Lemma \ref{exist-limit-ode} that $\lim_{t\to\infty}g(t)$ exists, and the limit, denoted by $r_P$, is one of the real roots of the cubic equation (\ref{3rd-poly}). 

Next, we shall prove that the continuous-time SK coding scheme defined by (\ref{proof-lower-bound-A}) and (\ref{definition:CSK}) is valid, that is, for any $t\ge 0$,
\begin{equation}\label{proof-lower-bound-veri-CSK}
A^2(t)\E[\vert \Theta_0-\E[\Theta_0\vert \mathcal{F}_t(Y^\ast)]  \vert^2]=P.
\end{equation}
Indeed, since $g$ satisfies (\ref{ode-Z}), it holds that for any $t\ge 0$,
\begin{equation}\label{41}
\sqrt{2}g^\prime(t)+\sqrt{2}Pg^3(t)+\frac{\sqrt{2}l_d^\prime(t)}{l_d(t)}g(t)=Pg^2(t)+\frac{l_d^\prime(t)+l_u(t)}{l_d(t)}.
\end{equation}
Multiplying both sides of (\ref{41}) by $l_d(t)A(t)$, we obtain
\begin{equation*}
\begin{aligned}
\bigg(\sqrt{2}g^\prime(t)+\sqrt{2}Pg^3(t)&+\frac{\sqrt{2}l_d^\prime(t)}{l_d(t)}g(t)\bigg)l_d(t)A(t)\\
&=\sqrt{2P}\left(g^\prime(t)l_d(t)+l^\prime_d(t)g(t)+Pg^3(t)l_d(t)\right)e^{\int_0^t Pg^2(s)ds}\\
&=\sqrt{2}\frac{d}{dt}\left( g(t)l_d(t)A(t) \right)
\end{aligned}
\end{equation*}
and
\begin{equation*}
\begin{aligned}
\bigg(Pg^2(t)+\frac{l_d^\prime(t)+l_u(t)}{l_d(t)}\bigg)l_d(t)A(t)&=Pg^2(t)l_d(t)A(t)+l_d^\prime(t)A(t)+l_u(t)A(t)\\
&=\frac{d}{dt}\left(A(t)l_d(t)+\int_0^t l_u(s)A(s)ds   \right).
\end{aligned}
\end{equation*}
Therefore, (\ref{41}) leads to
\begin{equation}\label{choice-of-ld}
\sqrt{2}g(t)A(t)l_d(t)=l_d(t)A(t)+\int_0^tl_u(s)A(s)ds,
\end{equation}
which implies that
\begin{equation}\label{proof-lower-bound-AH}
2A(t)A^\prime(t)=PA_{l}^2(t),
\end{equation}
where
\begin{equation}\label{AL-A}
A_l(t) = A(t)+\int_0^t \frac{l_u(s)}{l_d(t)}A(s)ds.
\end{equation}
Therefore, using the initial condition $A^2(0)=P$, we have
\begin{equation}\label{A2-H}
A^2(t)=P(1+\int_0^tA_l^2(s)ds).
\end{equation}
On the other hand, it follows from (\ref{resolvent-kernel}) and (\ref{AL-A}) that
\begin{equation}\label{A-AL}
A(t) = A_l(t)+\int_0^t h(t,s)A_l(s)ds.
\end{equation}
Define 
\begin{equation}\label{Y-tilde-star}
\widetilde{Y}^\ast(t) = \int_0^tA_l(u)\Theta_0du +B(t).
\end{equation}
Then, we have
\begin{equation*}
\begin{aligned}
    Y^\ast(t) &= \int_0^tA(u)\Theta_0du+Z(t)\\
    &=\int_0^tA(u)\Theta_0du+B(t) +\int_0^t\int_0^sh(s,u)dB(u)ds\\
    &\overset{(a)}{=}\int_0^tH(t,u)A_l(u)\Theta_0du +\int_0^tH(t,u)dB(u) \\
    &=\int_0^tH(t,u)d\widetilde{Y}^\ast(u),
\end{aligned}
\end{equation*}
where $H(t,u)$ is the Volterra kernel function satisfying $H(t,u)=1+\int_u^th(s,u)ds$ for all $t\ge u$ and $(a)$ follows from (\ref{A-AL}). Hence, $\{ Y^\ast(s); s \le t \}$ and  $\{ \widetilde{Y}^\ast(s); s \le t \}$ are uniquely determined by each other and therefore, for any $t$
\begin{equation}\label{equi-2Y_star}
\mathcal{F}_t (Y^\ast) = \mathcal{F}_t (\widetilde{Y}^\ast).
\end{equation}
Now, applying \cite[Theorem 12.2]{liptser2013statistics} to the process $\{\widetilde{Y}^\ast(t)\}$ as in (\ref{Y-tilde-star}), we can readily establish 
\begin{equation}\label{filtering-wgc}
\E[\vert \Theta_0-\E[\Theta_0\vert \mathcal{F}_t(\widetilde{Y}^\ast)]  \vert^2]=\left(1+\int_0^tA_l^2(s)ds\right)^{-1},
\end{equation}
which, together with (\ref{equi-2Y_star}) and (\ref{A2-H}), immediately implies (\ref{proof-lower-bound-veri-CSK}), as desired. 

Finally, we are ready to prove (\ref{I-SK}). From Lemma \ref{mutual-info}, (\ref{equi-2Y_star}) and (\ref{filtering-wgc}), it follows that for a fixed $T$,
$$
I(\Theta_0^T;Y_0^T)=\frac{1}{2}\int_0^T \frac{A_l^2(t)}{1+\int_0^tA_l^2(s)ds}dt.
$$
Thus, we have 
\begin{equation*}
\begin{aligned}
\overline{I}_\text{SK}(\Theta;Y)&=\lim_{T\to\infty}\frac{1}{T}I(\Theta_0^T;Y_0^T)\\
&=\lim_{T\to\infty}\frac{1}{2T}\int_0^T \frac{A_l^2(t)}{1+\int_0^tA_l^2(s)ds}dt\\
&\overset{(b)}{=}\lim_{T\to\infty}\frac{1}{T}\int_0^T\frac{A^\prime(t)}{A(t)}dt\\
&=\lim_{T\to\infty}\frac{A^\prime(T)}{A(T)}\\
&=P\lim_{T\to\infty}g^2(T)\\
&\overset{(c)}{=}Pr_P^2,
\end{aligned}
\end{equation*}
where $(b)$ follows from (\ref{proof-lower-bound-AH}) and (\ref{A2-H}), and $(c)$ follows from Lemma \ref{exist-limit-ode}. Thus, (\ref{I-SK}) is established and then the proof is complete.
\end{proof}

\begin{remark}
It turns out that from the proof of Theorem \ref{lower-bound-CTGC}, $r_P$ is uniquely determined by $l(t,s)$, rather than the choice of $l_u(s)$ and $l_d(t)$.
\end{remark}

To illustrate possible applications of the above theorem, we give the following two examples.
\begin{example}\label{example-awgn}
{ \rm When $l(t,s)\equiv 0$, the channel (\ref{ACGN-B-1}) boils down to the AWGN channel (\ref{AWGN2}). Apparently, one can choose $l_u\equiv 0$ and $l_d\equiv 1$, yielding $\overline{I}_{\text{SK}}(\Theta;Y)=P/2$, which is widely known as the capacity of (\ref{AWGN2}).
}
\end{example}

\begin{example}\label{example-ihara}
{\rm When $l(t,s)= 1$ for any $t\ge s$, it follows from (\ref{resolvent-kernel}) that the corresponding $h(t,s)=-e^{s-t}$ for any $t\ge s$ and therefore the channel (\ref{ACGN-B-1}) boils down to
$$
Y(t)=\int_0^tX(s)ds+B(t)-\int_0^t\int_0^se^{u-s}dB(u)ds.
$$
Apparently, it can be verified that $l_u\equiv l_d\equiv c$, where $c$ is a non-zero constant. Thus, we have $\alpha=1,\beta=0$, yielding that $\bar{I}_{\text{SK}}(\Theta;Y)$ is the unique positive root of the cubic equation $P(x+1)^2=2x^3$. This recovers Proposition 1 in \cite{ihara1990capacity}.
}
\end{example}

To conclude this section, although Theorem \ref{lower-bound-CTGC} provides a lower bound on feedback capacity of a subclass of ACGN channels, this lower bound is somewhat implicit. In Section \ref{proof-main-thm}, we give a more explicit expression by narrowing our attention to the continuous-time ARMA$(1,1)$ Gaussian channel.

\section{Feedback Capacity}\label{proof-main-thm}
In this section, we focus on the following continuous-time ARMA$(1,1)$ Gaussian channel
\begin{equation}\label{ou-awgn-main}
Y(t)=\int_0^tX(s)ds+Z(t),
\end{equation}
where 
$$
Z(t)=B(t)+\lambda\int_0^t\int_{-\infty}^se^{-\kappa(s-u)}dB(u)ds,~\lambda\in\mathbb{R},~\kappa>0.
$$
The following theorem is our main result in which we derive an explicit formula for $C_{fb}(P)$.
\begin{thm}\label{main-thm}
$C_{fb}(P)$ is determined in the following two cases: 
\begin{itemize}
\item[(1)] if $\lambda\le -2\kappa$ or $\lambda\ge0$, then $C_{fb}(P)=P/2$ ;
\item[(2)] if $-2\kappa<\lambda<0$, then $C_{fb}(P)$ is the unique positive root of the following third-order polynomial equation:
\begin{equation}\label{3-poly}
P(x+\kappa)^2=2x(x+\vert\kappa+\lambda \vert)^2.
\end{equation}
\end{itemize} 
\end{thm}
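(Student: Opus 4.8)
The plan is to prove matching lower and upper bounds on $C_{fb}(P)$, treating the two regimes in parallel; it is convenient first to restate the claim compactly. Writing the noise spectrally as $\boldsymbol z(t)=\boldsymbol w(t)+\lambda\boldsymbol u(t)$ gives $S_{\boldsymbol z}(\omega)=\big((\kappa+\lambda)^2+\omega^2\big)\big/\big(\kappa^2+\omega^2\big)$, whose infimum over $\omega$ equals $m:=\min\{1,(\kappa+\lambda)^2/\kappa^2\}$; this is $1$ precisely when $\lambda\ge0$ or $\lambda\le-2\kappa$. Let $x^\star$ be the unique positive root of $P(x+\kappa)^2=2x(x+|\kappa+\lambda|)^2$ — uniqueness for $-2\kappa<\lambda<0$ follows by rewriting it as $\sqrt{2x}=\sqrt P\,(x+\kappa)/(x+|\kappa+\lambda|)$, where the left side increases in $x$ and, since $|\kappa+\lambda|<\kappa$ in this range, the right side decreases, so there is exactly one crossing. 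Since $x^\star\ge P/2$ iff $-2\kappa\le\lambda\le0$, Theorem~\ref{main-thm} is equivalent to $C_{fb}(P)=\max\{P/2,\,x^\star\}$.

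\emph{Lower bound.} The key step is to place the OU-colored channel inside the separable framework of Section~\ref{lower-bound}. The subtlety is that $Z(t)=B(t)+\lambda\int_0^t\boldsymbol u(s)\,ds$ uses the \emph{stationary} OU process, so $\boldsymbol u(0)$ is a Gaussian variable not measurable for $\{B(s):s\ge0\}$; I would resolve this by passing to the innovations representation of $Z$. Treating $\boldsymbol u$ as state and $Z$ as observation in the (perfectly correlated) Kalman--Bucy filter, the error variance $V(t)=\E\big[(\boldsymbol u(t)-\E[\boldsymbol u(t)\mid\mathcal F_t(Z)])^2\big]$ solves $\dot V=-\lambda^2V^2-2(\kappa+\lambda)V$ with $V(0)=1/(2\kappa)$, hence $V(t)\to V_\infty$, where $V_\infty=0$ if $\kappa+\lambda\ge0$ and $V_\infty=-2(\kappa+\lambda)/\lambda^2$ if $\kappa+\lambda<0$. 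This yields the Hitsuda representation of $Z$ with the degenerate (hence separable) kernel $h(s,u)=\lambda(\lambda V(u)+1)e^{-\kappa(s-u)}$ and resolvent $l(s,u)=-\lambda(\lambda V(u)+1)e^{-\kappa(s-u)}\exp\big(-\int_u^s\lambda(\lambda V(v)+1)\,dv\big)$, of the form (\ref{hitsuda-separable}); a short computation identifies the limits (\ref{lower-bound-thm:condition-rate}) as $\beta=\kappa+\lambda(\lambda V_\infty+1)=|\kappa+\lambda|$ and $\alpha=-\lambda(\lambda V_\infty+1)=\kappa-|\kappa+\lambda|$, uniformly in $\lambda$. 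Substituting these $\alpha,\beta$ and $x=Pr_P^2$ into (\ref{3rd-poly}) turns the cubic into exactly (\ref{3-poly}), and the relation forces $r_P>0$; so Theorem~\ref{lower-bound-CTGC} gives $\bar I_{\text{SK}}(\Theta;Y)=Pr_P^2$, a positive root of (\ref{3-poly}), which for $-2\kappa<\lambda<0$ is $x^\star$, whence $C_{fb}(P)\ge x^\star$. When $\lambda\ge0$ or $\lambda\le-2\kappa$, I would instead use $C_{fb}(P)\ge C_{nfb}(P)\ge P/2$, the last step because a non-feedback code concentrating its power in a band $[\omega_0,\omega_0+W]$ with $\omega_0\to\infty$ sees noise density arbitrarily close to $1$ and thus approaches the unit-density AWGN rate $P/2$.

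\emph{Upper bound.} By the converse half of Theorem~\ref{coding-theorem} (whose hypothesis $\lim_T\frac1T C_{fb,T}(P)=0$ holds since $C_{fb,T}(P)$ is bounded in $T$), $C_{fb}(P)\le\sup_{(\Theta,X)}\bar I(\Theta;Y)$. For an arbitrary admissible pair, Theorem~\ref{detailed-structure} lets us assume an additive scheme, and Lemma~\ref{mutual-info} gives $\frac1T I(\Theta_0^T;Y_0^T)=\frac1{2T}\int_0^T\E\big[|X_l(t)-\widehat{X_l}(t)|^2\big]\,dt\le\frac1{2T}\int_0^T\E[X_l^2(t)]\,dt$. \textbf{Case $\lambda\ge0$ or $\lambda\le-2\kappa$:} the map $X\mapsto X_l$ is the causal minimum-phase whitening filter of $\boldsymbol z$, whose symbol has modulus $\sqrt{\kappa^2+\omega^2}\big/\sqrt{(\kappa+\lambda)^2+\omega^2}\le1$ (because $|\kappa+\lambda|\ge\kappa$); controlling the truncated-convolution operator and absorbing the finitely supported transient correction gives $\limsup_T\frac1T\int_0^T\E[X_l^2]\,dt\le P/m=P$, so $\bar I(\Theta;Y)\le P/2$ and, with the lower bound, $C_{fb}(P)=P/2$. \textbf{Case $-2\kappa<\lambda<0$:} now the whitening filter amplifies near $\omega=0$, the crude bound only gives $\bar I\le P/(2m)=P\kappa^2/(2(\kappa+\lambda)^2)$, and one must exploit the estimation gain. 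Here I would discretize the channel with step $\Delta$, obtaining the stationary discrete-time feedback channel $\widetilde Y_k=\widetilde X_k+\widetilde Z_k$ whose noise is, to leading order in $\Delta$, the first-order ARMA process $\widetilde Z_k-e^{-\kappa\Delta}\widetilde Z_{k-1}=(1+\lambda\Delta)\eta_k-e^{-\kappa\Delta}\eta_{k-1}$ with $\{\eta_k\}$ i.i.d.\ $N(0,\Delta)$. Using that the sampled $\sigma$-fields increase to $\mathcal F_T(Y)$, one gets $I(\Theta_0^T;Y_0^T)=\lim_{\Delta\downarrow0}I(\Theta_0^T;\{Y(k\Delta)\})$, bounds the latter by the discrete-time feedback mutual information, and concludes $C_{fb}(P)\le\liminf_{\Delta\downarrow0}C_{fb}^{\Delta}(P_\Delta)$; invoking the explicit feedback-capacity formula for first-order Gaussian channels \cite{kim2009feedback}, plugging in the discretized coefficients and the induced power, and letting $\Delta\to0$ shows this limit equals $x^\star$. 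Combined with the lower bound, $C_{fb}(P)=x^\star$.

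\emph{Main obstacle.} Everything except the $-2\kappa<\lambda<0$ upper bound is comparatively routine once the Kalman--Bucy/innovations computation is in place. The hard part is the discrete-time approximation: (i) proving rigorously that $C_{fb}(P)=\lim_{\Delta\to0}C_{fb}^{\Delta}(\cdot)$, which requires reconciling the continuous and discrete power constraints, controlling the non-stationary transient coming from $\boldsymbol u(0)$, and establishing the relevant (semi)continuity of capacity under refinement of the sampling grid; and (ii) evaluating the $\Delta\to0$ limit of the discrete ARMA$(1,1)$ feedback-capacity expression and identifying it with the cubic (\ref{3-poly}) — which also presupposes that the explicit discrete-time formula, originally established for AR$(1)$/MA$(1)$ noise, carries over to the ARMA$(1,1)$ noise produced by the discretization, possibly via a reworking of its variational argument.
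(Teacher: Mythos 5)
Your proposal follows the paper's proof almost step for step: achievability in Case (2) via the Kalman--Bucy/innovations computation of the Hitsuda resolvent (your $\alpha=\kappa-|\kappa+\lambda|$ and $\beta=|\kappa+\lambda|$ coincide with the paper's $\alpha_{\mathrm{OU}},\beta_{\mathrm{OU}}$, and the substitution into (\ref{3rd-poly}) does give (\ref{3-poly})); achievability in Case (1) via stationary inputs pushed to high frequencies where $S_{\boldsymbol{z}}\to 1/(2\pi)$; and the converse in Case (2) via discretization to a stationary ARMA$(1,1)$ channel and the Yang--Kim formula. Two remarks. First, your worry that the explicit discrete-time capacity formula might not cover ARMA$(1,1)$ noise is unfounded---Theorem \ref{kim} is exactly that statement---but the obstacles you list under (i)--(ii) are indeed where the paper's technical weight sits: Lemma \ref{lemma:cft} with its chain (a)--(f) and Appendix C carry out the power-constraint bookkeeping, the renormalization of $\zeta_0$ by $m(\delta_n)$ that makes the sampled noise exactly stationary, and the $\delta_n\to 0$ asymptotics of the root of Kim's quartic; none of this is routine, and your proposal only names it.

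Second, your Case (1) converse is the one genuine departure: the paper runs the same discretization argument for both regimes (step (f) treats $\lambda\ge 0$ and $\lambda\le-2\kappa$ alongside $-2\kappa<\lambda<0$), whereas you bound $\int_0^T\E[X_l^2(t)]\,dt$ by the operator norm of the whitening filter. As written this has a soft spot: $l_{\mathrm{OU}}$ is not a convolution kernel, and the correction coming from the unobserved initial state $\zeta_0$ decays in $u$ but is supported on all of $\{s\ge u\}$---it is not a ``finitely supported transient'' that disappears under $\limsup_T \frac{1}{T}(\cdot)$. A feedback input may concentrate order-$PT$ energy exactly where that correction acts, and then an estimate of the form $\|X_l\|^2\le (1+O(1))\|X\|^2$ does not yield $\bar I\le P/2$. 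The sharp bound $\|I+L\|_{L^2[0,T]}\le 1$ does in fact hold for every $T$, transient included, precisely when $\lambda(\lambda+2\kappa)\ge 0$ (it follows from an exact energy identity for $w=X_l-X$ combined with the Riccati equation for the filtering variance $V$, which gives $\int_0^T (X_l^2-X^2)\,dt\le -\lambda(\lambda+2\kappa)\int_0^T w^2/c^2\,dt$ with $c=\lambda(\lambda V+1)$), but some such argument---or the paper's uniform discretization---must be supplied before Case (1)'s upper bound is closed.
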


The proof of the theorem is divided into two directions. Firstly, we demonstrate that any achievable rate $R$ must satisfy the condition $R \le C_{fb}(P)$, referred to as {\em the converse part}. Secondly, we establish the achievability of any rate $R$ with $R < C_{fb}(P)$, referred to as {\em the achievability part}.

Before the proof, we introduce two auxiliary random processes $Z_\dag=\{Z_\dag(t);t\in[0,T] \}$, $Z_\ast=\{Z_\ast(t);t\in[0,T] \}$, which are defined as
\begin{equation}\label{Z0-Zstar}
Z_\dag(t)=\int_0^te^{-\kappa(t-s)}dB(s), ~~~Z_\ast(t)=B(t)+\lambda\int_0^t\int_0^se^{-\kappa(s-u)}dB(u)ds,
\end{equation}
respectively. 
Note that $Z_\dag$ solves the stochastic differential equation 
$$
dX(t)=-\kappa X(t)dt+dB(t),~~X(0)=0.
$$
Thus, we obtain
\begin{equation}\label{Z0-form2}
Z_\dag(t)=B(t)-\kappa\int_0^tZ_\dag(s)ds.
\end{equation}
Moreover, it holds that
\begin{equation}\label{Zs-Z0-relation}
Z_\ast(t)=\left(1+\frac{\lambda}{\kappa}\right)B(t)-\frac{\lambda}{\kappa}Z_\dag(t).
\end{equation}

\subsection{Proof of the Converse Part}
In this subsection, we prove the converse part of Theorem \ref{main-thm}, using some existing results on the feedback capacity of discrete-time ARMA(1,1) Gaussian channels (see detailed definitions in \cite{cover1989gaussian}) under average power constraint $P$. 
 For such channels, Yang \emph{et al.} \cite[Theorem 7]{yang2007feedback} derived a relatively explicit formula for feedback capacity under the assumption that stationary inputs can achieve feedback capacity, which has been confirmed by Kim in the proof of \cite[Theorem 3.1]{kim2009feedback}.
 ~Thus, feedback capacity for discrete-time ARMA(1,1) Gaussian channels is known, as reformulated below.

\begin{thm}[\!\!\cite{yang2007feedback},\cite{kim2009feedback}\footnote{Theorem \ref{kim} has been stated and proved in \cite[Theorem 5.3]{kim2009feedback}. However, a recent paper \cite{derpich2022comments} pointed out that there is a gap in the proof of a key result \cite[Corollary 4.4]{kim2009feedback}, and as a consequence, the proof of Theorem 5.3 in \cite{kim2009feedback} is invalid. However, as we emphasized in the first paragraph of this subsection, the result in \cite[Theorem 5.3]{kim2009feedback} still holds.}]\label{kim}
 Suppose the noise process $\{Z_i\}$ is an ARMA(1,1) Gaussian process satisfying
\begin{equation*}\label{kim-arma}
Z_i +\phi Z_{i-1} = U_i +\theta U_{i-1}, ~~~ i\in \mathbb{Z}, ~~~ |\phi|<1
\end{equation*}
where $\{U_i \}$ is a white Gaussian process with zero mean and unit variance. Then, under the average power constraint
\[
\lim_{n\to\infty}\E\left[\frac{1}{n}\sum_{i=1}^n\vert X_i\vert^2\right]\le P,
\]
the feedback capacity of additive Gaussian channel $Y_i =X_i+Z_i,i=1,2,...$ is given by 
\begin{equation*}
C_{FB} = -\frac{1}{2}\log ({x^2_0}),
\end{equation*}
where $x_0$ is the unique positive root of the fourth-order polynomial equation:
\begin{equation}\label{thm-kim-P}
Px^2=\left\{\begin{aligned}
&\frac{(1-x^2)(1+\text{sgn}(\phi-\theta) \theta x)^2}{(1+\text{sgn}(\phi-\theta) \phi x)^2}&\text{if}~ \vert\theta \vert\le 1\\
&\frac{(1-x^2)(\theta+\text{sgn}(\phi-1/\theta)x)^2}{(1+\text{sgn}(\phi-1/\theta) \phi x)^2}&\text{if}~ \vert\theta \vert> 1.
\end{aligned}
\right.
\end{equation}
\end{thm}

\begin{remark}
Yang \emph{et al.} and Kim only gave the result for $\vert\theta\vert < 1$; the case $\vert \theta\vert >1$ can be readily proved by converting it into the case $\vert\theta\vert < 1$; the case $\vert \theta\vert =1$ can be easily established via a perturbation argument.
\end{remark}

Now, we can derive an upper bound on the $T$-block feedback capacity $C_{fb,T}(P)$ in the following lemma.

\begin{lem}\label{lemma:cft}
The $T$-block feedback capacity $C_{fb,T}(P)$ of the continuous-time ARMA(1,1) Gaussian channel (\ref{ou-awgn-main}) is upper bounded as
\begin{equation}\label{lem:cft}
C_{fb,T}(P)\le \left\{\begin{aligned}&\frac{P}{2}, &~~~\text{if}~\lambda \le -2\kappa~\text{or}~\lambda\ge 0; \\ x_0(&P;\lambda,\kappa),&~~~\text{if}~-2\kappa<\lambda<0  ,\end{aligned}\right.
\end{equation}
where $x_0(P;\lambda,\kappa)$ is the unique positive root of the polynomial equation (\ref{3-poly}).
\end{lem}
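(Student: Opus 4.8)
The plan is to reduce the continuous-time $T$-block feedback capacity to a discrete-time ARMA$(1,1)$ feedback capacity problem, to which Theorem \ref{kim} applies, and then to pass from the discrete formula to the cubic \eqref{3-poly} by a scaling limit. First I would discretize the channel \eqref{ou-awgn-main}: fix a step $\Delta = T/n$, set $t_i = i\Delta$, and define the sampled increments $Y_i = Y(t_i) - Y(t_{i-1})$, $X_i \approx \int_{t_{i-1}}^{t_i} X(u)\,du$, and $Z_i = Z(t_i)-Z(t_{i-1})$. Using \eqref{Zs-Z0-relation} and the SDE representation \eqref{Z0-form2} of the OU process $Z_0$, the increments $\{Z_i\}$ form (asymptotically, as the sampling is refined) a discrete-time ARMA$(1,1)$ Gaussian sequence: the AR parameter comes from the factor $e^{-\kappa\Delta}$ governing $Z_0$, and the MA parameter from the combination of the Brownian part and the OU part in \eqref{Zs-Z0-relation}. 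Concretely one expects $\phi \to -1$ and $\theta\to -1$ at rate $O(\Delta)$ with $\phi = -e^{-\kappa\Delta}$-type behavior and $1+\theta$, $1+\phi$ both $O(\Delta)$, so the relevant quantities are the limits $(1+\phi)/\Delta \to \kappa$ and $(1+\theta)/\Delta \to |\kappa+\lambda|$ (the sign of $\kappa+\lambda$, together with the sign of $\lambda$, is exactly what selects the two regimes). The power constraint \eqref{AP-T} transfers to $\frac{1}{n}\sum \E[|X_i|^2] \le P\Delta$ up to lower-order terms.

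Next I would invoke the standard fact that a continuous-time Gaussian feedback channel's $T$-block mutual information is the monotone limit of its finite-dimensional discrete approximations (this is the same discrete-approximation device used in the literature to prove $C_{fb}\le 2C_{nfb}$, cited in the Introduction): $\frac{1}{T} I(\Theta_0^T; Y_0^T) = \lim_{n\to\infty} \frac{1}{T} I(\Theta; Y_1,\dots,Y_n)$, and the latter is bounded by the per-symbol feedback capacity of the discrete channel with noise $\{Z_i\}$ times $n/T = 1/\Delta$. Applying Theorem \ref{kim} to the discrete ARMA$(1,1)$ channel with the scaled power $P\Delta$ gives $C_{FB}^{(\Delta)} = -\tfrac12\log x_0^2$ where $x_0 = x_0(\Delta)$ is the positive root of the quartic in Theorem \ref{kim} with parameters $\phi(\Delta),\theta(\Delta)$ and power $P\Delta$. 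As $\Delta\to 0$ one has $x_0(\Delta) \to 1$, and writing $x_0(\Delta) = 1 - c\Delta + o(\Delta)$ one extracts $c = \lim \tfrac{1}{\Delta}(-\tfrac12\log x_0^2) = $ the candidate bound. Substituting $x = 1-c\Delta$, $P\to P\Delta$, $1-x^2 \sim 2c\Delta$, $1+\text{sgn}(\phi-\theta)\theta x \to (1+\theta) + \text{sgn}(\phi-\theta)\cdot(\text{stuff})$ — more carefully, expanding numerator and denominator of the quartic to first order in $\Delta$ and matching — the quartic degenerates precisely into $P(c+\kappa)^2 = 2c(c+|\kappa+\lambda|)^2$, i.e. \eqref{3-poly}, in the regime $-2\kappa<\lambda<0$; in the complementary regime the MA and AR limiting parameters coincide (the $|\kappa+\lambda|$ term collapses appropriately, or $\text{sgn}(\phi-\theta)$ flips) and the equation reduces to $P = 2c$, i.e. the bound $P/2$. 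One must check that $\vert\theta(\Delta)\vert < 1$ for small $\Delta$ so the first branch of Theorem \ref{kim} is the relevant one, and track the sign $\text{sgn}(\phi-\theta)$ in each regime — this is where the case split in \eqref{lem:cft} originates.

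The main obstacle, and the step I would spend the most care on, is making the discretization rigorous in two respects: (i) showing that the sampled noise increments genuinely converge, in the right sense, to an ARMA$(1,1)$ model with the claimed parameter asymptotics — this requires handling the "pre-history" term $\int_{-\infty}^s$ in $Z(t)$, for which the decomposition \eqref{Zs-Z0-relation} into $Z_\ast$ plus the stationary initial condition $\zeta_0$ (which only affects a transient and not the per-symbol capacity) is the key simplification; and (ii) justifying that the continuous-time $T$-block feedback mutual information is dominated by the liminf of the discrete feedback capacities, i.e. that discretization can only lose information for the encoder — this is the direction needed for an \emph{upper} bound and follows from the data-processing inequality applied to the sampling map together with the causal structure of feedback, but the power-constraint bookkeeping under sampling needs to be done carefully. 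The algebraic limit computation (quartic $\to$ cubic) is routine but tedious; once the model convergence is in hand, Theorem \ref{kim} does the real work, and a continuity/uniqueness argument for the positive root of \eqref{3-poly} finishes the identification.
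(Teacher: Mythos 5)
Your high-level architecture matches the paper's: discretize the noise into a stationary ARMA$(1,1)$ sequence with $\phi(\delta_n)=-e^{-\kappa\delta_n}$ and $1+\theta(\delta_n)\sim(\kappa+\lambda)\delta_n$, invoke Theorem \ref{kim}, and recover the cubic \eqref{3-poly} from the quartic by the scaling limit $x=1-c\delta_n+o(\delta_n)$ with power $P\delta_n$. However, there is a genuine gap in the step you yourself flag as the crux: you claim that ``the continuous-time $T$-block feedback mutual information is dominated by the liminf of the discrete feedback capacities'' and that this ``follows from the data-processing inequality applied to the sampling map.'' Data processing gives $I(\Theta;\text{sample}(Y_0^T))\le I(\Theta;Y_0^T)$, i.e.\ the discrete quantity is bounded by the continuous one --- exactly the reverse of what an upper bound on $C_{fb,T}(P)$ requires. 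The paper closes this gap with a different mechanism: it first reduces, via Theorem \ref{detailed-structure} and Lemma \ref{mutual-filter}, to additive Gaussian schemes with $X(t)=\Theta(t)-\int_0^tK(t,s)\,dY^\ast(s)$ and the identity $I(\Theta_0^T;Y_0^T)=I(\Theta_0^T;Y_0^{\ast,T})$ for the feedback-free output $Y^\ast$; it then constructs discrete approximations $(\widetilde\Theta^{(n)},Y^{\ast,(n)})$ (with the stationarizing factor $m(\delta_n)$ on $\zeta_0$) that converge \emph{in distribution} to $(\widetilde\Theta,Y^\ast)$, and invokes the \emph{lower semi-continuity of mutual information} under weak convergence to get $I(\Theta_0^T;Y_0^{\ast,T})\le\varliminf_n I(\{\Theta_k^{(n)}\};\{Y_k^{\ast,(n)}\})$. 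Only then does it pass back to a discrete feedback channel (the two discrete outputs are linear functions of each other), bound by the $n$-block feedback capacity with power $P\delta_n+e(\delta_n)/n$, and use super-additivity to replace $C_{FB,n}$ by $C_{FB}$ so that Theorem \ref{kim} applies. Without the reduction to the canonical additive-feedback form and without the semi-continuity argument, your plan as stated proves an inequality in the wrong direction.

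Two further points you would need to supply to make the argument complete. First, the reduction via Theorem \ref{detailed-structure} is not cosmetic: it is what lets the feedback term be written as a Volterra integral against $Y^\ast$, so that its discretization $\zeta_k^{(n)}=\delta_n\sum_i K(t_k^{(n)},t_i^{(n)})Y_i^{\ast,(n)}$ is a causal linear functional of the discrete non-feedback outputs and the power bookkeeping $P(\delta_n)\le P\delta_n+e(\delta_n)/n$ can be carried out (the paper additionally needs a continuity assumption on $K$, removed afterwards by an $L^2$ approximation). Second, Theorem \ref{kim} gives the infinite-horizon feedback capacity of a \emph{stationary} ARMA$(1,1)$ channel, so you need both the stationarization of the sampled noise (your ``transient'' remark is in the right spirit but the paper actually modifies the initial condition to make the discrete noise exactly stationary) and the super-additivity inequality $C_{FB,n}(P)\le C_{FB}(P)$ to bridge from the $n$-block quantity to the formula in Theorem \ref{kim}. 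The final algebraic passage from the quartic to \eqref{3-poly}, including the role of $\mathrm{sgn}(\phi-\theta)=\mathrm{sgn}(-\lambda)$ in separating the two regimes, is described correctly.
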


\begin{proof}
By Theorem \ref{detailed-structure}, we can prove (\ref{lem:cft}) by considering any Gaussian pair $(\Theta,X)$ of the form (\ref{opt-pair}) in which $\Theta=\{\Theta(t);t\in[0,T] \}$ is assumed to be Gaussian such that $\int_0^T\E[\Theta^2(t)]dt<\infty$, and $X$ satisfies the constraint (\ref{AP-T}) and takes the following form 
$$
X(t) = \Theta(t)-\E[\Theta(t)\vert \mathcal{F}_t(Y^\ast)],
$$
where 
\begin{equation}\label{Ystar}
Y^\ast(t)=\int_0^t\Theta(s)ds+Z(t),~~~t\in[0,T].
\end{equation}
\noindent Moreover, it is known \cite{ihara1980capacity} that there exists a Volterra kernel $K(t,s)$ on $L^2([0,T]^2)$ such that $\E[\Theta(t)\vert \mathcal{F}_t(Y^\ast)]=\int_0^t K(t,s)dY^\ast(s)$. The remainder of the proof is divided into three steps. In Steps 1 \& 2, we assume that the following condition: 
\begin{itemize}
\item[(C.0)] The Volterra kernel $K(t,s)$ is continuous on the set $\{(t,s)\in[0,T]^2;t\ge s \}$
\end{itemize}
is satisfied.

\textbf{Step 1.} In this step, we shall introduce a sequence of discrete-time ARMA$(1,1)$ Gaussian channels constructed from the continuous-time ARMA$(1,1)$ Gaussian channel (\ref{ou-awgn-main}) by using a discrete-time approximation method. 

For any $n\in\mathbb{N}$, we consider a partition $\{t_k^{(n)};k=0,1,...,n \}$ of $[0,T]$ satisfying $t^{(n)}_{k+1}-t^{(n)}_k=\delta_n$ for all $k$, where $\delta_n=T/n$. Define $\{B^{(n)}_k;k=0,1,...,n-1\}, \{Z_{k}^{(n)};k=0,1,...,n-1 \}$ as
\begin{align}
B^{(n)}_k&=B(t^{(n)}_{k+1})-B(t^{(n)}_k),\nonumber\\
Z_{k}^{(n)}&=B^{(n)}_k+\lambda d^{(n)}_k\left(\zeta_0+\sum_{i=0}^{k-1}e^{\kappa t^{(n)}_{i+1}}B^{(n)}_i\right)\label{Z-app-1}
\end{align}
respectively, where $d_k^{(n)}\triangleq \int_{t^{(n)}_k}^{t^{(n)}_{k+1}} e^{-\kappa s}ds$ and $\zeta_0=\int_{-\infty}^0 e^{\kappa s}dB(s)$. Then, it can be verified that $\{Z^{(n)}_k/\sqrt{\delta_n};k=0,1,...,n-1 \}$ is an ARMA(1,1) Gaussian process satisfying
\begin{equation*}\label{arma11-non-stationary}
\frac{Z^{(n)}_{k+1}}{\sqrt{\delta_n}}=e^{-\kappa \delta_n}\frac{Z^{(n)}_{k}}{\sqrt{\delta_n}}+\frac{B^{(n)}_{k+1}}{\sqrt{\delta_n}}+\left(\frac{\lambda}{\kappa}-\left(\frac{\lambda}{\kappa}+1\right)e^{-\kappa \delta_n}\right)\frac{B^{(n)}_k}{\sqrt{\delta_n}},~\text{for}~ k=0,1,...,n-1,
\end{equation*}
which, however, is not stationary. It turns out that we can modify (\ref{Z-app-1}) to guarantee stationarity. Specifically, we define $\{\widetilde{Z}^{(n)}_k;k=0,1,...,n-1 \}$ as
\begin{equation}\label{z-app-stat}
\widetilde{Z}^{(n)}_k\triangleq B^{(n)}_k+\lambda d^{(n)}_k \left(m(\delta_n)\zeta_0+\sum_{i=0}^{k-1}e^{\kappa t^{(n)}_{i+1}}B^{(n)}_i\right),
\end{equation}
where $m(x)\triangleq \sqrt{2\kappa x/(1-e^{-2\kappa x})}$. It is straightforward to verify that $\{\widetilde{Z}_k^{(n)}/\sqrt{\delta_n};k=0,..,n-1\}$ is a stationary ARMA(1,1) process of the following form
\begin{equation}\label{arma11-stationary}
\frac{\widetilde{Z}^{(n)}_{k+1}}{\sqrt{\delta_n}}=e^{-\kappa\delta_n}\frac{\widetilde{Z}^{(n)}_{k}}{\sqrt{\delta_n}}+\frac{B^{(n)}_{k+1}}{\sqrt{\delta_n}}+\left(\frac{\lambda}{\kappa}-\left(\frac{\lambda}{\kappa}+1\right)e^{-\kappa \delta_n} \right)\frac{B^{(n)}_k}{\sqrt{\delta_n}}.
\end{equation}
Furthermore, we define $\{Y^{\ast,(n)}_{k} \}$ and $\{Y^{(n)}_{k} \}$ as
\begin{align}
Y^{\ast,(n)}_{k}&=\Theta^{(n)}_k+\widetilde{Z}^{(n)}_{k},\label{discrete-GC-nofeedback}\\
Y^{(n)}_{k}&=\Theta^{(n)}_k-\zeta^{(n)}_k+\widetilde{Z}^{(n)}_{k},\label{discrete-GC1-feedback}
\end{align}
where $\{\Theta^{(n)}_k\}$ and $\{\zeta^{(n)}_k \}$ are defined by
\begin{align*}
\Theta^{(n)}_k=\int_{t^{(n)}_k}^{t^{(n)}_{k+1}}\Theta(s)ds,\quad \zeta^{(n)}_k=\delta_n \sum_{i=0}^{k-1}K(t^{(n)}_k,t^{(n)}_i)Y^{\ast,(n)}_{i},
\end{align*}
respectively. Note that (\ref{discrete-GC1-feedback}) and (\ref{discrete-GC-nofeedback}) correspond to $n$-block discrete-time ARMA(1,1) Gaussian channels with feedback and without feedback, respectively. 

\textbf{Step 2.} 
This step will be devoted to approximating $P/2$ and $x_0(P;\lambda,\kappa)$ by feedback capacities of the sequence of ARMA(1,1) Gaussian channels (\ref{discrete-GC1-feedback}).

To this end, we have the following chain of inequalities:
\begin{equation}
\begin{aligned}\label{proof-converse-part-Z1}
\frac{1}{T}I(\Theta_0^T;Y_0^T)&\overset{(a)}{=}\frac{1}{T}I(\Theta_0^T;Y_0^{\ast,T})\\
&\overset{(b)}{\le}\varliminf_{n\to\infty}\frac{1}{T}I(\{\Theta_k^{(n)}\};\{Y_k^{\ast,(n)}\})\\
&\overset{(c)}{=}\varliminf_{n\to\infty} \frac{1}{T}I(\{\Theta_k^{(n)}\};\{Y_k^{(n)}\})\\
&\overset{(d)}{\le}\varliminf_{n\to\infty}\frac{1}{\delta_n}C_{FB,n}(P\delta_n+\frac{e(\delta_n)}{n})\\
&\overset{(e)}{\le} \varliminf_{n\to\infty}\frac{1}{\delta_n}C_{FB}(P\delta_n+\frac{e(\delta_n)}{n}) \\
&\overset{(f)}{\le}\left\{\begin{aligned}&\frac{P}{2} &~~~\text{if}~\lambda \le -2\kappa~\text{or}~\lambda\ge 0 \\ x_0(&P;\lambda,\kappa)&~~~\text{if}~-2\kappa<\lambda<0  ,\end{aligned}\right.
\end{aligned}
\end{equation}
where $e(\delta_n)$ is some function (to be specified later) dependent on $\{\Theta(t)\}$ with the property $\lim_{n\to\infty}e(\delta_n)=0$, where $C_{FB,n}(P)$ denotes the $n$-block feedback capacity \cite{cover1989gaussian} of the channel (\ref{discrete-GC1-feedback}) under the constraint that the average power of the channel input is bounded by $P$ (see \cite{kim2009feedback}) and $C_{FB}(P)$ denotes feedback capacity. Now, with $(a)$-$(f)$ validified (proofs can be founded in Appendix \ref{proof-of-cf}), (\ref{lem:cft}) immediately follows from (\ref{proof-converse-part-Z1}) and Theorem \ref{detailed-structure}.

\textbf{Step 3.} We will prove in this step that the continuity assumption (C.0) can be dropped. Indeed, there exists a sequence of Volterra kernels $\{K_{(m)};m=1,2,...\}$ satisfying (C.0) and 
$$
\lim_{m\to\infty}\| K_{(m)}-K\|_{2}=0.
$$
Set
$$
P_m\triangleq\frac{1}{T}\int_0^T\E\left[\left\vert\Theta(t)-\int_0^tK_{(m)}(t,s)dY^\ast(s)\right\vert^2\right]dt.
$$
Then, we have
\begin{equation}\label{P_m}
\lim_{m\to\infty}P_m=\frac{1}{T}\int_0^T\E\left[\left\vert\Theta(t)-\int_0^tK(t,s)dY^\ast(s)\right\vert^2\right]dt\le P,
\end{equation}
where we have used the fact
$$
\lim_{m\to\infty}\int_0^T\E\left[\left\vert\int_0^t K_{(m)}(t,s)dY^\ast(s)-\int_0^tK(t,s)dY^\ast(s)\right\vert^2\right]dt=0.
$$
Note that $(c)$-$(f)$ in Step 2 hold true for any continuous Volterra kernel function $K$. Thus, replacing $K$ in (\ref{proof-converse-part-Z1}) by $K^{(m)}$ in the derivation of $(c)$-$(f)$, we obtain
$$
\frac{1}{T}I(\Theta_0^T;Y_0^T)\le\left\{\begin{aligned}&\frac{P_m}{2} &~~~\text{if}~\lambda \le -2\kappa~\text{or}~\lambda\ge 0 \\ x_0(&P_m;\lambda,\kappa)&~~~\text{if}~-2\kappa<\lambda<0  ,\end{aligned}\right.,
$$
which, together with (\ref{P_m}), establishes the same inequality (\ref{proof-converse-part-Z1}).
\end{proof}

The following corollary is an immediate consequence of Lemma \ref{lemma:cft}.
\begin{co}\label{cor-converse}
It holds that
\begin{equation}\label{max-mutu-rate}
C_{fb,\infty}(P)\le \left\{\begin{aligned}\frac{P}{2} ,~~~\text{if}~\lambda \le -2\kappa~\text{or}~\lambda\ge 0; \\ x_0(P;\lambda,\kappa),~~~\text{if}~-2\kappa<\lambda<0 ,\end{aligned}\right.
\end{equation}
\end{co}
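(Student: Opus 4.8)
The plan is to read the bound off Lemma~\ref{lemma:cft}, the only work being to convert the asymptotic power constraint (\ref{AP-infinite}) into a family of $T$-block constraints (\ref{AP-T}). First I would fix an arbitrary pair $(\Theta,X)$ satisfying (\ref{AP-infinite}) and let $\varepsilon>0$. By (\ref{AP-infinite}) there is $T_0=T_0(\varepsilon)$ such that $\frac{1}{T}\int_0^T\E[X^2(t)]\,dt\le P+\varepsilon$ for all $T\ge T_0$; hence for every such $T$ the truncation of $(\Theta,X)$ to $[0,T]$ satisfies the $T$-block constraint (\ref{AP-T}) with power level $P+\varepsilon$, and by the very definition of $C_{fb,T}$ we obtain $\frac{1}{T}I(\Theta_0^T;Y_0^T)\le C_{fb,T}(P+\varepsilon)$.

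Next I would apply Lemma~\ref{lemma:cft} (whose bound holds for every $T>0$ and does not depend on $T$) and take $\limsup_{T\to\infty}$, which gives, for the fixed pair $(\Theta,X)$,
\[
\bar I(\Theta;Y)\le
\begin{cases}
\dfrac{P+\varepsilon}{2}, & \lambda\le-2\kappa\ \text{or}\ \lambda\ge0,\\[1.5ex]
x_0(P+\varepsilon;\lambda,\kappa), & -2\kappa<\lambda<0.
\end{cases}
\]
It then suffices to let $\varepsilon\downarrow0$. In the first case this is immediate. In the second case I would invoke the monotonicity and continuity of $P\mapsto x_0(P;\lambda,\kappa)$: this is, for fixed $\lambda,\kappa$, the unique positive root of $P(x+\kappa)^2=2x(x+|\kappa+\lambda|)^2$, and since the left-hand side is strictly increasing in $P$ while the right-hand side does not depend on $P$, the root varies monotonically and continuously with $P$, so $x_0(P+\varepsilon;\lambda,\kappa)\to x_0(P;\lambda,\kappa)$ as $\varepsilon\downarrow0$. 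Finally, taking the supremum over all admissible $(\Theta,X)$ yields (\ref{max-mutu-rate}).

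I do not expect any genuine obstacle here: the substantive estimate is Lemma~\ref{lemma:cft}, and this corollary is essentially a bookkeeping step that reconciles the two forms of the power constraint and uses the elementary fact that the relevant polynomial root depends continuously on its coefficients.
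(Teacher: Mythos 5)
Your proposal is correct and follows essentially the same route as the paper: both convert the asymptotic constraint (\ref{AP-infinite}) into $T$-block constraints at a slightly inflated power level (the paper via an error function $e_P(T)\to 0$, you via a fixed $\varepsilon$ for $T\ge T_0$), then apply Lemma~\ref{lemma:cft} and conclude by continuity of $P/2$ and $x_0(P;\lambda,\kappa)$ in $P$. No gap.
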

\begin{proof}
For any input $(\Theta,X)$ satisfying (\ref{AP-infinite}), there exists a function $e_P(T)$ with $\lim_{T\to\infty}e_P(T)=0$ such that
$$
\frac{1}{T}\int_0^T\E[X^2(t)]dt\le P+e_P(T)
$$
for all $T$. By the definition of $C_{fb,T}(P)$, we obtain
\[
\overline{I}(\Theta;Y)\le \varlimsup_{T\to\infty}C_{fb,T}(P+e_P(T)).
\]
Thus, (\ref{max-mutu-rate}) immediately follows from (\ref{lem:cft}) and the continuity of $P/2$ and $x_0(P;\lambda,\kappa)$ in $P$. 
\end{proof}
\begin{proof}[Proof of the Converse Part]
Clearly, it follows from Lemma \ref{lemma:cft} that 
$$
\lim_{T\to\infty}\frac{1}{T}C_{fb,T}(P) = 0,
$$
which, together with Theorem \ref{coding-theorem}, immediately implies that any achievable rate $R$ must satisfy the condition $R \le C_{fb,\infty}(P)$. This, combined with Corollary \ref{cor-converse}, establishes the proof of the converse part.
\end{proof}

\subsection{Proof of the Achievability Part}\label{sec:proof-of-lower-bound}
We first use Theorem \ref{lower-bound-CTGC} to prove Case (2) in Theorem \ref{main-thm}, as detailed below.
\begin{proof}[Case (2)]
Note that $Z$ can be regarded as the solution of the following stochastic differential equation:
\begin{equation*}\label{Z-sde}
dZ(t)=\lambda(Z_\dag(t)+\zeta_0 e^{-\kappa t})dt+dB(t),~~~ Z(0)=0.
\end{equation*}
Set $\eta(t)=\lambda(Z_\dag(t)+\zeta_0 e^{-\kappa t})$. Then, the covariance function of $\{\eta(t)\}$ can be computed as
$$
\gamma_{\eta}(s,t)=\frac{\lambda^2}{2\kappa}e^{-\kappa\vert s-t \vert},
$$
which is continuous at $s=t$. By \cite[Theorem 7.15]{lipster1977statistics}, it holds that $\mu_{Z_0^T}\sim\mu_{B_0^T}$ for all $T$. It then follows from (\ref{Gaussian-e-Brownian})-(\ref{l-kernel}) that there exists a standard Brownian motion $\{V(t)\}$ on $(\Omega,\{\mathcal{F}_t(Z) \},\mathbb{P})$ such that
\begin{equation}\label{goal-Z-h}
V(t)=Z(t)+\int_0^t\int_0^sl_z(s,u)dZ(u)ds,
\end{equation} 
where the Volterra kernel function $l_z(s,u)\in L^2([0,T]^2)$ for all $T$.

We now evaluate the resolvent kernel $l_z(s,u)$ and prove that $l_z$ fulfills all the conditions in Theorem \ref{lower-bound-CTGC}.
It follows from (\ref{Z0-Zstar}) and (\ref{Z0-form2}) that 
\begin{equation}\label{Z-star-B}
Z_\ast(t)=\int_0^tF_\ast(t,u)dB(u),
\end{equation}
where $F_\ast(t,u)$ is a Volterra kernel function satisfying $F_\ast(t,u)=1+\int_u^t f_\ast(s,u)ds$ for any $t\ge u$ and the Volterra kernel function $f_\ast(s,u)=\lambda e^{-\kappa(s-u)}$ for any $s\ge u$. 
Since the resolvent kernel $ g_\ast(s,u)$ of $f_\ast(s,u)$ is computed as
\begin{align*}
g_\ast(s,u)= -\lambda e^{(\kappa+\lambda)(u-s)} \quad \mbox{for all $s\geq u$},
\end{align*}
then, by (\ref{Gaussian-e-Brownian})-(\ref{l-kernel}), we obtain
\begin{align*}
B(t)=\int_0^t G_\ast(t,u)dZ_\ast(u),\label{76}
\end{align*}
where $G_\ast(t,u)$ is the Volterra kernel function satisfying $G_\ast(t,u)=1+\int_u^tg_\ast(s,u)ds$ for any $t\ge u$. Therefore, by (\ref{Z-star-B}), we have
\begin{align*}
Z(t)&=Z_\ast(t)+\lambda\int_0^te^{-\kappa s}ds \zeta_0=\int_0^tF_\ast(t,u)d\Phi(u),
\end{align*}
where 
\begin{equation}\label{Phi}
\Phi(t)=\int_0^t\lambda \zeta_0e^{-(\kappa+\lambda)s}ds+B(t).
\end{equation}
Then, since $g_\ast(s,u)$ is the resolvent kernel of $f_\ast(s,u)$, it holds that
\begin{equation}\label{Phi-Z}
\Phi(t)=\int_0^tG_\ast(t,u)dZ(u).
\end{equation}
By \cite[Lemma 6.2.6]{oksendal2013stochastic}, the innovation process $\{V(t) \}$ defined by
\begin{equation}\label{V-Brownian}
V(t)=\Phi(t)-\lambda\int_0^te^{-(\kappa+\lambda)s}\E[\zeta_0\vert \mathcal{F}_s(\Phi)]ds
\end{equation}
is a standard Brownian motion. 
The one-dimensional Kalman-Bucy filter \cite[Theorem 6.2.8]{oksendal2013stochastic} is applied to estimate $\zeta_0$ from the observation equations (\ref{Phi}) to yield the following estimate of $\zeta_0$:
\begin{equation}\label{est-zeta}
\E[\zeta_0\vert \mathcal{F}_t(\Phi)]=\frac{\int_0^t\lambda e^{-(\kappa+\lambda)s}d\Phi(s)}{2\kappa+\int_0^t\lambda^2 e^{-2(\kappa+\lambda)s}ds}.
\end{equation}
Substituting (\ref{Phi-Z}) and (\ref{est-zeta}) into (\ref{V-Brownian}) and going through a series of elementary calculations, we obtain (\ref{goal-Z-h}), where $l_z(s,u)$ is calculated by
\begin{equation*}
l_z(s,u)=\left\{
\begin{aligned}
&\frac{\lambda(2\kappa+\lambda)^2e^{(\kappa+\lambda)u}+\lambda^2(2\kappa+\lambda)e^{-(\kappa+\lambda)u}}{\lambda^2e^{-(\kappa+\lambda)s}-(\lambda+2\kappa)^2e^{(\kappa+\lambda)s}},~~&\text{if}~\lambda+\kappa\neq 0;\\
&\frac{\kappa(\kappa u+1)}{\kappa s+2},~~ &\text{if}~ \lambda+\kappa=0.
\end{aligned}
\right.
\end{equation*}
It is easy to see that $l_z$ satisfies all the conditions in Theorem \ref{lower-bound-CTGC}. Then, the corresponding $\alpha_z,\beta_z$ are given by
$$
\alpha_z=\left\{
\begin{aligned}
&\kappa,~~&\text{if}~\kappa+\lambda=0;\\
&\lambda+2\kappa,~~&\text{if}~\kappa+\lambda<0;\\
&-\lambda,~~&\text{if}~\kappa+\lambda>0,
\end{aligned}
\right.
$$
and
$$
\beta_z=\left\{
\begin{aligned}
&0,~~&\text{if}~\kappa+\lambda=0;\\
&-(\lambda+\kappa),~~&\text{if}~\kappa+\lambda<0;\\
&\lambda+\kappa,~~&\text{if}~\kappa+\lambda>0,
\end{aligned}
\right.
$$
respectively. By Theorem \ref{lower-bound-CTGC}, we have $\overline{I}_\text{SK}(\Theta;Y)=Pr_z^2$, where $r_z$ is one of the real roots of the following cubic equation:
\begin{equation}\label{OU-3py-1}
-Py^3+\frac{P}{\sqrt{2}}y^2-\vert\lambda+\kappa \vert y+\frac{1}{\sqrt{2}} \kappa=0.
\end{equation}
It is not difficult to see that the equation (\ref{OU-3py-1}) has the unique positive root for all $-2\kappa\le\lambda\le0$. Then, substituting $y=\sqrt{x/P}$ into (\ref{OU-3py-1}), we are able to prove that $\overline{I}_\text{SK}(\Theta;Y)$ is the unique positive root of the third-order polynomial (\ref{3-poly}), which implies 
\[
C_{fb,\infty}(P)\geq x_0(P;\lambda,\kappa).
\]
This, together with Corollary \ref{cor-converse} and Theorem \ref{coding-theorem}, immediately yields 
\begin{equation}\label{achivev-case-2}
C_{fb}(P)\ge x_0(P;\lambda,\kappa),
\end{equation}
as desired.
\end{proof}

\begin{remark}
Here we rewrite $C_{fb}(P)$ as $C_{fb}(P;\lambda,\kappa)$ to emphasize its dependence on $\lambda,\kappa$. It follows from the definition of $x_0(P;\lambda,\kappa)$ that $ x_0(P;\lambda,\kappa)>P/2$ for $-2\kappa<\lambda<0, \kappa>0$, which, together with (\ref{achivev-case-2}), immediately implies that $C_{fb}(P;\lambda,\kappa)>C_{fb}(P;0,\kappa)$,
where $C_{fb}(P;0,\kappa)$ is feedback capacity of an AWGN channel (\ref{AWGN2}).
In other words, for a continuous-time ARMA$(1,1)$ Gaussian channel, ``coloring'' may increase capacity.
\end{remark}

Next, to establish the achievability of Case (1), we will present a coding theorem for $C_{nfb}(P)$. To achieve this, as before, instead of transmitting a message index $W$, a random variable taking values from a finite alphabet, we will transmit a message process $ \Theta = \{ \Theta(t) \}$, a real-valued random process. Then, compared to (\ref{coding-thm-input}), the channel input $\{X(t)\}$
will take the form: 
\begin{equation}\label{coding-thm-input-2}
X(t)=g_t(\Theta(t)),
\end{equation}
which is independent of $\{ Z(t)\}$. Now, let us turn our attention to the continuous-time ARMA$(1,1)$ Gaussian channel (\ref{OU-WGN}) reformulated as
\begin{equation}\label{carma11-channel-g}
\boldsymbol{y}(t) = \boldsymbol{x}(t)+\boldsymbol{z}(t), \quad \mbox{$-\infty<t<+\infty$},
\end{equation}
where 
$$
\boldsymbol{z}(t)=\boldsymbol{w}(t)+\lambda \boldsymbol{u}(t),
$$
where $\boldsymbol{w}(t)=\dot{B}(t)$ and $\boldsymbol{u}(t)=\int_{-\infty}^te^{-\kappa(t-u)}dB(u)$. Here we remark that, in the most rigorous terms, the channel (\ref{carma11-channel-g}) should be interpreted as
$$
\boldsymbol{y}(\phi)=\boldsymbol{x}(\phi)+\boldsymbol{z}(\phi),\quad \phi\in\mathcal{D}.
$$
Now, for any $T$, let $\mathcal{D}_T=\{\phi\in\mathcal{D}:\text{supp}(\phi)\subset [0,T]  \}$, and define
$$
I(\boldsymbol{x}_0^T;\boldsymbol{y}_0^T)=\sup I(\boldsymbol{x}(\phi_1),...,\boldsymbol{x}(\phi_m);\boldsymbol{y}(\varphi_1),...,\boldsymbol{y}(\varphi_n)),
$$
where the supremum is taken over all positive integers $m,n$ and all test functions $\{\phi_1^m,\varphi_1^n\}\subset \mathcal{D}_T$. Then, we consider the so-called {\em $T$-block non-feedback capacity}
$$
C_{nfb,T}(P)=\sup \frac{1}{T}I(\boldsymbol{x}_0^T;\boldsymbol{y}_0^T),
$$
where the supremum is taken over all generalized random processes $\{\boldsymbol{x}(\phi);\phi\in\mathcal{D}\}$ such that
$$
\boldsymbol{x}(\phi)=\int_\mathbb{R}X(t)\phi(t)dt,\quad \phi\in\mathcal{D}_T,
$$
where $\{X(t)\}$ is given by (\ref{coding-thm-input-2}) and subject to the average power constraint
$$
\frac{1}{T}\int_0^T\E[X^2(t)]dt \le P.
$$
Furthermore, we define
$$
\overline{I}(\boldsymbol{x};\boldsymbol{y})=\limsup_{T\to\infty}\frac{1}{T}I(\boldsymbol{x}_0^T;\boldsymbol{y}_0^T),
$$
provided the limit exists, and further define 
$$
C_{nfb,\infty}(P)=\sup \overline{I}(\boldsymbol{x};\boldsymbol{y}),
$$
where the supremum is taken over all generalized random processes $\{\boldsymbol{x}(\phi);\phi\in\mathcal{D}\}$ such that
$$
\boldsymbol{x}(\phi)=\int_\mathbb{R}X(t)\phi(t)dt,\quad \phi\in\mathcal{D}_\infty,
$$
where $\{X(t)\}$ is given by (\ref{coding-thm-input-2}) and subject to the constraint
$$
\varlimsup_{T\to\infty}\frac{1}{T}\int_0^T\E[X^2(t)]dt\le P.
$$
Now, we present the aforementioned coding theorem for $C_{nfb}(P)$ below.
\begin{thm}[{\cite[Theorem 1]{Ihara1992channelcodingthm}}]\label{coding-theorem-2}
Assume that
\begin{equation}\label{cond:nfb}
\lim_{T\to\infty}\frac{1}{T}C_{nfb,T}(P)=0.
\end{equation}
If $R< C_{nfb,\infty}(P)$ and $C_{nfb,\infty}(P)$ is continuous in $P$, then the rate $R$ is achievable. Conversely, if a rate $R$ is achievable, then $R\le C_{nfb,\infty}(P)$.
\end{thm}

Then, the proof of achievability of Case (1) will use the following proposition, which gives an explicit formula for $C_{nfb}(P)$.
\begin{pr}\label{prop-lb-2p}
It holds that
\begin{equation}\label{lem:C0-CASE1}
C_{nfb}(P)= \frac{P}{2},
\end{equation}
for all $\lambda\ge0$ or $\lambda\le-2\kappa,$ $\kappa>0$.
\end{pr}

The proof of Proposition \ref{prop-lb-2p} relies on the following result, whose proof is very similar to that of \cite[Lemma 4]{ihara1990capacity} and thus omitted.
\begin{lem}\label{sdf-ou-colored}
$\{\boldsymbol{z}(t) \}$ is a generalized stationary Gaussian process with spectral density function 
\begin{equation*}\label{sdf-ou-coloered-fun}
S_{\boldsymbol{z}}(x)=\frac{4\pi^2 x^2+(\kappa+\lambda)^2}{4\pi^2 x^2+\kappa^2},~x\in\mathbb{R},
\end{equation*}
where $\lambda\in\mathbb{R},~\kappa>0$.
\end{lem}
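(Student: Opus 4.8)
The plan is to identify $\boldsymbol{z}$ as the output of a linear time-invariant (LTI) filter driven by the white noise $\boldsymbol{w}$, and then read off the spectral density from the transfer function of that filter. First I would record the two structural facts that make $\boldsymbol{z}$ a generalized stationary Gaussian process: it is Gaussian because it is a (distributional) linear functional of the two-sided Brownian motion $B$, and it is stationary because both $\boldsymbol{w}(t)=\dot B(t)$ and the OU process $\boldsymbol{u}(t)=\int_{-\infty}^t e^{-\kappa(t-u)}\,dB(u)$ are obtained from $B$ by time-invariant operations (the OU integral runs from $-\infty$, so no transient survives). Since $\boldsymbol{w}$ is a genuine generalized process, so is $\boldsymbol{z}=\boldsymbol{w}+\lambda\boldsymbol{u}$, and every computation below is to be read in the sense of generalized processes, i.e.\ after pairing with test functions in $\mathcal{D}$.

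The key observation is that $\boldsymbol{u}=k*\boldsymbol{w}$, where $k(r)=e^{-\kappa r}\mathbf{1}_{\{r\ge 0\}}$ is the causal exponential kernel; hence $\boldsymbol{z}=(\delta+\lambda k)*\boldsymbol{w}$ is white noise passed through the LTI system with impulse response $\delta+\lambda k$. I would compute its transfer function by Fourier transforming $k$ (convention $\hat\phi(x)=\int_{\R}\phi(t)e^{-ixt}\,dt$): since $\hat k(x)=\int_0^\infty e^{-\kappa r}e^{-ixr}\,dr=1/(\kappa+ix)$, the transfer function is $H(x)=1+\lambda\hat k(x)=(\kappa+\lambda+ix)/(\kappa+ix)$, so that $|H(x)|^2=\big(x^2+(\kappa+\lambda)^2\big)/(x^2+\kappa^2)$.

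To convert this into the spectral density rigorously, I would pair $\boldsymbol{z}$ with test functions. Writing $\boldsymbol{z}(\phi)=\boldsymbol{w}(\phi_H)$ with $\phi_H=(\delta+\lambda\check k)*\phi$ the image of $\phi$ under the adjoint filter (so $\widehat{\phi_H}=\overline{H}\,\hat\phi$), and using that the white noise $\boldsymbol{w}=\dot B$ of unit intensity has covariance functional $\E[\boldsymbol{w}(f)\boldsymbol{w}(g)]=\int_{\R}fg\,dt$, Parseval gives
\[
\E[\boldsymbol{z}(\phi)\boldsymbol{z}(\psi)]=\langle\phi_H,\psi_H\rangle=\frac{1}{2\pi}\int_{\R}|H(x)|^2\,\hat\phi(x)\overline{\hat\psi(x)}\,dx .
\]
By the definition of the spectral density of a generalized stationary process, this identifies $S_{\boldsymbol{z}}(x)=|H(x)|^2/(2\pi)=\big(x^2+(\kappa+\lambda)^2\big)/\big(2\pi(x^2+\kappa^2)\big)$, as claimed; the factor $1/(2\pi)$ is exactly the flat spectral density of the unit white noise (recovered when $\lambda=0$). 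As a cross-check one may instead compute the autocovariance functional directly: $\gamma_{\boldsymbol{u}}(\tau)=\tfrac{1}{2\kappa}e^{-\kappa|\tau|}$, the cross terms $\E[\boldsymbol{w}(t{+}\tau)\boldsymbol{u}(t)]+\E[\boldsymbol{u}(t{+}\tau)\boldsymbol{w}(t)]$ contribute $\lambda e^{-\kappa|\tau|}$, and the white part contributes $\delta(\tau)$, so that $\gamma_{\boldsymbol{z}}(\tau)=\delta(\tau)+\lambda e^{-\kappa|\tau|}+\tfrac{\lambda^2}{2\kappa}e^{-\kappa|\tau|}$; transforming term by term reproduces the same $S_{\boldsymbol{z}}$.

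The underlying calculus (the transforms of $k$ and of $e^{-\kappa|\tau|}$) is routine; the only genuine obstacle is the bookkeeping for generalized processes — making precise that $\boldsymbol{z}$ admits a well-defined covariance bilinear functional on $\mathcal{D}_T$, that the distributional convolution $\boldsymbol{u}=k*\boldsymbol{w}$ and the adjoint manipulation $\boldsymbol{z}(\phi)=\boldsymbol{w}(\phi_H)$ are legitimate, and that the Dirac-delta and one-sided-exponential kernels are transformed correctly. This is precisely the content carried out in \cite[Lemma 4]{ihara1990capacity}, so I would invoke that argument essentially verbatim, which is why the detailed verification may be omitted.
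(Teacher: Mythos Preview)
Your argument is correct and is essentially the same approach the paper takes: the paper omits the proof entirely, noting only that it is very similar to \cite[Lemma 4]{ihara1990capacity}, and your LTI-filter/transfer-function computation (with the generalized-process bookkeeping deferred to that same reference) is precisely what that citation covers.
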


\begin{proof}[Proof of Proposition \ref{prop-lb-2p}]
We see from Lemma \ref{sdf-ou-colored} that $S_{\boldsymbol{z}}(x)$ is rational. Thus, applying the formula for non-feedback capacity of the continuous-time stationary ACGN channel (see Theorem 5 in \cite{baker1991information}) to the channel (\ref{carma11-channel-g}), we have
$$
C_{nfb,\infty}(P)=\frac{P}{2}.
$$
which, together with Theorem \ref{coding-theorem-2}, immediately implies (\ref{lem:C0-CASE1}), as desired.
\end{proof}

Now we can prove the achievability of Case (1).
\begin{proof}[Case (1)]
The achievability part follows immediately from $C_{nfb}(P)\le C_{fb}(P)$ together with Proposition \ref{prop-lb-2p}.
\end{proof}

\section{Effect of Feedback}\label{sec:C0-Cf}
For a discrete-time ACGN channel, Cover and Pombra \cite{cover1989gaussian} have proved that feedback does not increase capacity ``much''; more precisely, feedback can increase capacity by at most half a bit, which is often referred to as {\em the half-bit bound} in the literature. It has been established in the same paper that feedback can at most double capacity, which was attributed to Pinsker \cite{pinsker1969talk} and Ebert \cite{ebert1970capacity} and often referred to as {\em the factor-of-two bound} in the literature. In the discrete-time setting, there are many extensions and improvements to the half-bit and factor-of-two bounds (see, e.g., \cite{chen1999refinements}). 

By comparison, for the continuous-time ACGN channel (\ref{ACGN-B-1}), Ihara showed in \cite{ihara1991mutual} that the factor-of-two bound also holds and further pointed out that it is tight by investigating a special case of ACGN channels as in Example \ref{example-ihara}. This section will examine the effect of feedback on the capacity of the continuous-time ARMA$(1,1)$ Gaussian channel.

Via a nuanced analysis, we characterize the factor-of-two bound for the continuous-time ARMA$(1,1)$ Gaussian channel (\ref{OU-WGN}); in particular, our result reveals when the feedback capacity equals or doubles the non-feedback capacity for a continuous-time ARMA$(1,1)$ Gaussian channel.

\begin{thm}\label{thm:C0-Cf}
For the continuous-time ARMA$\it (1,1)$ Gaussian channel (\ref{OU-WGN}), it holds that:
\begin{itemize}
\item[(1)] if $\lambda \le -2\kappa$ or $\lambda\ge 0$, then $C_{nfb}(P)=C_{fb}(P)$;
\item[(2)] if $-2\kappa<\lambda<0$, then
\[
C_{nfb}(P)<C_{fb}(P)\le 2C_{nfb}(P),
\]
where the equality holds if and only if $\lambda=-\kappa$ and $P=\kappa/2$.
\end{itemize}
\end{thm}

\begin{proof}
Throughout this proof, for clarity of presentation, we will rewrite $C_{nfb}(P)$ and $C_{fb}(P)$ as $C_{nfb}(P;\lambda,\kappa)$ and $C_{fb}(P;\lambda,\kappa)$, respectively.

Obviously, (1) immediately follows from Theorem \ref{main-thm} and Proposition \ref{prop-lb-2p}. Thus, it remains to prove (2). 

For any $-2\kappa<\lambda<0$ and $\kappa>0$, noting that $S_{\boldsymbol{z}}(x;\lambda,\kappa)=S_{\boldsymbol{z}}(x;-2\kappa-\lambda,\kappa)$,  we deduce from \cite[Theorem 5]{baker1991information} that for any $P$,
$$
C_{nfb}(P;\lambda,\kappa) = C_{nfb}(P;-2\kappa-\lambda,\kappa).
$$
Furthermore, by Theorem \ref{main-thm}, we also have 
$$
C_{fb}(P;\lambda,\kappa) = C_{fb}(P;-2\kappa-\lambda,\kappa).
$$
Hence, it suffices to prove (2) under the condition $-\kappa\le\lambda<0$ and $\kappa>0$. 

To this end, we consider the following two cases:\\
\textbf{Case 1:} $\lambda=-\kappa,~\kappa>0$. We first show that $C_{fb}(\kappa/2;-\kappa,\kappa)=2C_{nfb}(\kappa/2;-\kappa,\kappa)$. Note that, by \cite[Theorem 5]{baker1991information} and Theorem \ref{coding-theorem-2}, we have
\begin{equation}\label{proof-cocf-1}
C_{nfb}(P;-\kappa,\kappa)=\left\{\begin{aligned}
&y_P, &\qquad \text{if}~~ P \le \frac{k}{2};\\
&\frac{P}{2}+\frac{\kappa}{4}, &\text{if}~~ P\ge \frac{k}{2},
\end{aligned}
\right.
\end{equation}
where $y_P$ is uniquely determined by 
\begin{equation}\label{proof-c0cf-2}
\sin\left(\frac{2\pi}{\kappa}y_P\right) = \frac{2\pi}{\kappa}\left(y_P - P\right). 
\end{equation}
Recalling from Theorem \ref{main-thm} that $C_{fb}(P;-\kappa,\kappa)$ is the unique positive root of the following cubic equation
\begin{equation}\label{proof-c0cf-cf1}
P(x+\kappa)^2 = 2x^3,
\end{equation}
we deduce from (\ref{proof-cocf-1}) that 
\begin{equation}\label{proof-c0cf-k/2}
C_{fb}(\kappa/2;-\kappa,\kappa) =2C_{nfb}(\kappa/2;-\kappa,\kappa)=\kappa,
\end{equation}
as desired. 

In the following, we only prove $C_{nfb}(P;-\kappa,\kappa)<C_{fb}(P;-\kappa,\kappa)$ for all $P$, since $C_{fb}(P;-\kappa,\kappa)< 2C_{nfb}(P;-\kappa,\kappa)$ for $P\neq \kappa/2$ can be similarly handled. 

Now, 
 letting
\begin{equation}\label{Gamma-1}
\Gamma(x;\kappa) =\left\{\begin{aligned}
&\sin\left(\frac{2\pi}{\kappa}x\right)-\frac{2\pi}{\kappa}\left[x -\frac{2x^3}{(x+\kappa)^2}\right], & \quad 0< x < \frac{\kappa}{2};\\
&2\left(\frac{x}{2}+\frac{\kappa}{4}\right)^3 - x\left(\frac{x}{2}+\frac{5\kappa}{4}\right)^2,& \quad x\ge\frac{k}{2},
\end{aligned}
\right.
\end{equation}
we claim that for any $x>0$
\begin{equation}\label{proof-gamma-less-than-0}
\Gamma(x;\kappa)<0. 
\end{equation}
To see this, first note that it follows from (\ref{Gamma-1}) that for any $x> \kappa/2$
$$
\frac{d\Gamma(x;\kappa)}{dx}=-\frac{\kappa(14x+11\kappa)}{8}<0
$$
which, combined with the fact that $\Gamma(\kappa/2;\kappa)=-7\kappa^3/8<0$, immediately implies that $\Gamma(x;\kappa)<0$ for all $x\ge \kappa/2$. Furthermore, it can be readily verified that $d\Gamma(x;1)/dx <0$ for all $0<x<1/2$. This, together with the facts that $\Gamma(0;1)=0$ and $\Gamma(x/\kappa;1)=\Gamma(x;\kappa)$ for all $0<x<\kappa/2$, immediately implies that $\Gamma(x;\kappa)<0$ for all $0<x<\kappa/2$, as desired.

Next, we shall prove that (\ref{proof-gamma-less-than-0}) implies that $C_{nfb}(P;-\kappa,\kappa)<C_{fb}(P;-\kappa,\kappa)$ for all $P$. Clearly, it follows from (\ref{proof-cocf-1}) and (\ref{proof-c0cf-cf1}) that for any $P\ge\kappa/2$,
$$
\Gamma(P;\kappa)<0 \quad\Longleftrightarrow \quad C_{nfb}(P;-\kappa,\kappa)<C_{fb}(P;-\kappa,\kappa).
$$
Moreover, we claim that for any $0<y_P,P_y<\kappa/2$
\begin{equation}\label{proof-c0cf-Gamma}
\begin{aligned}
\Gamma(y_P;\kappa)&=0 \quad\Longleftrightarrow \quad C_{nfb}(P_y;-\kappa,\kappa)=C_{fb}(P_y;-\kappa,\kappa);\\
\Gamma(y_P;\kappa)&<0 \quad \Longleftrightarrow \quad C_{nfb}(P_y;-\kappa,\kappa)<C_{fb}(P_y;-\kappa,\kappa),
\end{aligned}
\end{equation}
where $y_P, P_y$ are uniquely determined from each other by 
$$
\sin\left(\frac{2\pi}{\kappa}y_P\right) = \frac{2\pi}{\kappa}\left(y_P - P_y\right).
$$
To see this, first note that when $\Gamma(y_P;\kappa)<0$ for some $0<y_P<\kappa/2$, by the definition of $\Gamma(y_P;\kappa)$, $P_y$ must satisfy 
$$
P_y>\frac{2y_P^3}{(y_P+\kappa)^2},
$$
which, together with (\ref{proof-cocf-1})-(\ref{proof-c0cf-cf1}), immediately implies that $C_{nfb}(P_y;-\kappa,\kappa)<C_{fb}(P_y;-\kappa,\kappa)$. Conversely, assuming that $C_{nfb}(P_y;-\kappa,\kappa)<C_{fb}(P_y;-\kappa,\kappa)$ for some $0<P_y<\kappa/2$, we have 
\begin{align*}
\Gamma(y_P;\kappa) &= \sin\left(\frac{2\pi}{\kappa} y_P\right)-\frac{2\pi}{\kappa} \left[y_P -\frac{2y_P^3}{(y_P+\kappa)^2} \right]\\
&= \frac{2\pi}{\kappa}\left[\frac{2y_P^3}{(y_P+\kappa)^2}-P_y \right]\\
&<0,
\end{align*}
where the second equality follows from the fact that 
$$
y_P= C_{nfb}(P_y;-\kappa,\kappa)<C_{fb}(P_y;-\kappa,\kappa),
$$
as desired. 

Therefore, we have shown that $C_{nfb}(P)<C_{fb}(P)$ for all $P$ if and only if $\Gamma(x;\kappa)<0$ for all $x>0$. This, together with (\ref{proof-gamma-less-than-0}), immediately completes the proof in this case.

\noindent{\textbf{Case 2:}}
Now, we turn to the case $-\kappa<\lambda<0$ and $\kappa>0$. Let
$$
P_{\max} = \frac{-\lambda(\lambda+2\kappa)}{2\kappa}.
$$
Then, by \cite[Theorem 5]{baker1991information} and Theorem \ref{coding-theorem-2}, we have
\begin{equation}\label{proof-c0cf-c0-formula}
C_{nfb}(P;\lambda,\kappa)=\left\{\begin{aligned}
&\frac{k}{\pi}\arctan\left(\frac{x_P}{\kappa}\right) -\frac{\kappa+\lambda}{\pi}\arctan\left(\frac{x_P}{\kappa+\lambda}\right),&\text{if}~~P<P_{\max};\\
&-\frac{\lambda}{2}+\frac{1}{2}(P-P_{\max}),&\text{if}~~ P\ge P_{\max},
\end{aligned}
\right.
\end{equation}
where $x_P$ is uniquely determined by 
\begin{equation}\label{proof-c0cf-xp}
\frac{\lambda(2\kappa+\lambda)}{\pi}\frac{x_P}{x_P^2+\kappa^2} - \frac{\lambda(2\kappa+\lambda)}{\pi\kappa}\arctan\left(\frac{x_P}{\kappa}\right) = P.
\end{equation}
On the other hand, recall from Theorem \ref{main-thm} that $C_{fb}(P;\lambda,\kappa)$ is the positive root of the following cubic equation
$$
P(x+\kappa)^2 = 2x(x+\kappa+\lambda)^2.
$$

Similarly as in Case 1, we only prove that for all $P$,
\begin{equation*}
C_{nfb}(P;\lambda,\kappa)<C_{fb}(P;\lambda,\kappa),
\end{equation*}
since the inequality $C_{fb}(P;\lambda,\kappa)<2C_{nfb}(P;\lambda,\kappa)$ can be proved in a parallel manner. 

To this end, we define two functions $\Lambda_1(P;\lambda,\kappa)$ over $[P_{\max},\infty)$ and $\Lambda_2(x;\lambda,\kappa)$ over $(0,\infty)$ as
\begin{equation*}
\begin{aligned}
\Lambda_1(P;\lambda,\kappa) &= 2C_{nfb}(P)(C_{nfb}(P)+\kappa+\lambda)^2 - P(C_{nfb}(P)+\kappa)^2 ;\\
\Lambda_2(x;\lambda,\kappa) &= \frac{\lambda(2\kappa+\lambda)}{\pi}\frac{x}{x^2+\kappa^2} - \frac{\lambda(2\kappa+\lambda)}{\pi\kappa}\arctan\left(\frac{x}{\kappa} \right) -\frac{2A(A+\kappa+\lambda)^2}{(A+\kappa)^2},
\end{aligned}
\end{equation*}
where 
$$
A = \frac{\kappa}{\pi}\arctan\left(\frac{x}{\kappa}\right) - \frac{\kappa+\lambda}{\pi}\arctan\left(\frac{x}{\kappa+\lambda}  \right).
$$
Now, using a largely parallel argument as in the proof of (\ref{proof-c0cf-Gamma}), we conclude that for any $P\ge P_{\max}$
$$
\Lambda_1(P;\lambda,\kappa) < 0 \quad\Longleftrightarrow \quad C_{nfb}(P;\lambda,\kappa)<C_{fb}(P;\lambda,\kappa),
$$
and moreover, for any $P_x<P_{\max}$,
$$
\Lambda_2(x_P;\lambda,\kappa)>0 \quad\Longleftrightarrow \quad C_{nfb}(P_x;\lambda,\kappa)<C_{fb}(P_x;\lambda,\kappa),
$$
where $x_P$ and $P_x$ are uniquely determined from each other via 
$$
\frac{\lambda(2\kappa+\lambda)}{\pi}\frac{x_P}{x_P^2+\kappa^2} - \frac{\lambda(2\kappa+\lambda)}{\pi\kappa}\arctan\left(\frac{x_P}{\kappa}\right) = P_x.
$$
Thus, it suffices to show that
\begin{equation}\label{proof-c0cf-Lambda1}
\Lambda_1(P;\lambda,\kappa) < 0 \quad\mbox{for all $P\ge P_{\max}$}
\end{equation}
and
\begin{equation}\label{proof-c0cf-Lambda2}
\Lambda_2(x;\lambda,\kappa)>0\quad\mbox{for all $x>0$}.
\end{equation}

To prove (\ref{proof-c0cf-Lambda1}) and (\ref{proof-c0cf-Lambda2}), first note that it can be easily verified that for any $P\ge P_{\max}$
\begin{equation*}
\begin{aligned}
\frac{d\Lambda_1(P;\lambda,\kappa)}{d P} &= \left(2\lambda+\frac{\lambda^2}{4\kappa}\right)P +2\lambda\kappa+\frac{3}{2}\lambda^2 +\frac{1}{\kappa}\lambda^3+\frac{1}{8\kappa^2}\lambda^4 <0,
\end{aligned}
\end{equation*}
and moreover, 
\begin{align*}
\Lambda_1(P_{\max};\lambda,\kappa) &=-\frac{\lambda^2(12\kappa^2+4\lambda\kappa-\lambda^2)}{8\kappa}<0,
\end{align*}
which, immediately implies (\ref{proof-c0cf-Lambda1}). On the other hand, it can be verified that for any $x>0,\lambda\in(-1,0)$ 
$$
\frac{d\Lambda_2(x;\lambda,1)}{dx}>0,
$$
which, together with the fact that $\Lambda_2(0;\lambda,1)=0$, immediately implies that $\Lambda_2(x;\lambda,1)>0$ for all $\lambda\in(-1,0), x>0$. Moreover, the definition of $\Lambda_2(x;\lambda,\kappa)$ implies that for any $\kappa>0,~ \lambda\in(-1,0),~ x>0,$
$$
\kappa\Lambda_2(x/\kappa;\lambda/\kappa,1)=\Lambda_2(x;\lambda,\kappa),
$$
which further leads to (\ref{proof-c0cf-Lambda2}), as desired.

Combining the above two cases, we conclude that $C_{nfb}(P;\lambda,\kappa)<C_{fb}(P;\lambda,\kappa)\leq 2C_{nfb}(P;\lambda,\kappa)$ and equality holds if and only if $\lambda=-\kappa$ and $P=\kappa/2$, as desired.
\end{proof}

Now, consider the discrete-time ACGN channel
\begin{equation}\label{discrete-time-GC}
Y_i = X_i +Z_i, \quad i=1,2,...,
\end{equation}
where $\{Z_i\}$ is a stationary Gaussian process. Denote by $C_{FB}(P)$ ({\em resp.,} $C_{NFB}(P)$) the feedback ({\em resp.,} non-feedback) capacity under average power constraint $P$. 
The following remark shows that the half-bit bound does not hold for continuous-time ACGN channels.

\begin{remark}
The half-bit bound by Cover and Pombra \cite{cover1989gaussian} states that for the channel (\ref{discrete-time-GC}),
$$
C_{FB}(P)\le C_{NFB}(P)+\frac{1}{2}\ln 2.
$$
A natural continuous-time version of the half-bit bound reads: for the channel (\ref{OU-WGN}), 
\begin{equation}\label{continuous-time-of-halfbit}
C_{fb}(P)\le C_{nfb}(P) +\frac{1}{2}\ln 2,
\end{equation}
which, however, is not true. To see this, let $\lambda=-\kappa,~\kappa>0$. It then follows from (\ref{proof-c0cf-k/2}) that 
$$
C_{fb}\left(\frac{\kappa}{2}\right)-C_{nfb}\left(\frac{\kappa}{2}\right)=\frac{\kappa}{2},
$$
which immediately disprove (\ref{continuous-time-of-halfbit}).
\end{remark}

The following remark provides a counterexample to the continuous-time version of Cover's $2P$ conjecture. 
\begin{remark}
Cover \cite{cover1987conjecture} conjectured that for the channel (\ref{discrete-time-GC}),
$$
C_{FB}(P)\le C_{NFB}(2P),
$$
which is often called Cover’s $2P$ conjecture and has been disproved by Kim \cite{kim2006counterexample}. A natural continuous-time version of Cover's $2P$ conjecture reads: for the channel (\ref{OU-WGN}),
\begin{equation}\label{continuous-time-of-Cover2P}
C_{fb}(P) \le C_{nfb}(2P),
\end{equation}
which, however, is not true. To see this, let $\lambda=-\kappa=-1, P=1$. Then, it follows from (\ref{proof-cocf-1}) and (\ref{proof-c0cf-cf1}) that 
$$
C_{fb}(1)\approx 1.4376 >1.25=C_{nfb}(2),
$$
which immediately disprove (\ref{continuous-time-of-Cover2P}).
\end{remark}

\section{Concluding Remarks}
In this paper, we studied the continuous-time ACGN channel in which the noise is a continuous-time ARMA(1,1) Gaussian process and find the feedback capacity of such a channel under average power constraint $P$ in closed form. 

It is noteworthy that the discrete-time approximation approach we used in the proof of the converse part (Lemma \ref{lemma:cft}) actually converts the continuous-time ARMA(1,1) Gaussian channel (\ref{OU-WGN}) to the sequence of discrete-time ARMA(1,1) channels (\ref{discrete-GC-nofeedback}) and (\ref{discrete-GC1-feedback}). Such a result seems to serve as a causality-preserving bridge connecting continuous-time Gaussian feedback channels and their associated discrete-time versions. However, the absence of similar results for general continuous-time ACGN channels in the presence of feedback has long remained a notable gap in the literature (see, e.g., \cite{han2021sampling,liu2019continuous}). Our results present a successful endeavor to address this gap and offer novel insights. 

Indeed, an immediate future work is to apply the framework and techniques developed in this paper to a broader range of continuous-time Gaussian channels in more general settings, detailed below. 
\begin{itemize}
\item First, we have shown that continuous-time SK coding scheme achieves the feedback capacity of the continuous-time ARMA(1,1) Gaussian channel for the case of $-2\kappa<\lambda<0$. This result connects the classical Schalkwijk–Kailath scheme (see, e.g., \cite{schalkwijk1966coding,schalkwijk1966coding1,schalkwijk1968center}) with the continuous-time one introduced in this paper. Interestingly, when $-2\kappa<\lambda<0$, the spectral density function of the continuous-time ARMA(1,1) Gaussian noise process is monotonously increasing on $[0,+\infty)$ and bounded above. A natural future work is to investigate the optimality of the continuous-time SK coding scheme for any continuous-time ACGN channel with such a noise spectrum. 
\item Second, Theorem \ref{thm:C0-Cf} reveals that feedback may not increase the capacity of the continuous-time ARMA(1,1) Gaussian channel even if the noise process is colored. It is worthwhile to investigate whether similar results can be established for more general continuous-time ACGN channels. 
\item Third, we believe that besides the continuous-time ARMA(1,1) Gaussian channel, our approach promises further applications in more general settings, which, for instance, may include a closed-form feedback capacity formula to continuous-time ACGN channels with a rational noise spectrum. The extensive studies on their discrete-time counterparts documented in the existing literature serve as a valuable foundation for pursuing these investigations.
\end{itemize}

\section*{Appendices}\appendix

\section{Proof of Theorem \ref{detailed-structure}}\label{lemma2}
It is known that $C_{fb,T}(P)$ can be achieved by transmitting a Gaussian message process $\Theta$ with an additive feedback term $\zeta$. Thus, it suffices to show that for each additive feedback coding scheme $(\Theta,X_\zeta)$ with $X_\zeta(t)=\Theta(t)-\zeta(t)$, where $\{\Theta(t)\}$ is Gaussian and $X_\zeta$ satisfies
$$
\frac{1}{T}\int_0^T\E[X_\zeta^2(t)]dt\le P,
$$ 
the Gaussian pair $(\Theta,X)$ of the form (\ref{opt-pair}) must satisfy
\begin{equation}\label{pf-3-4-01}
\frac{1}{T}\int_0^T\E[X^2(t)]dt\le P
\end{equation}
and have
\begin{equation}\label{pf-3-4-00}
I(\Theta_0^T;Y_0^T)=I(\Theta_0^T;Y_{\zeta,0}^T),
\end{equation}
where $Y_{\zeta}(t)=\int_0^t\Theta(s)ds-\int_0^t\zeta(s)ds+Z(t)$. Towards this goal, we first define
\begin{equation}\label{pf-3-4-1}
\widetilde{Y}(t) = \int_0^tL(t,u)dY_\zeta(u),
\end{equation}
where $L(t,u)$ is a Volterra kernel function satisfying $L(t,u)=1+\int_u^tl(s,u)ds$ for $t\ge u$. It then follows from (\ref{resolvent-kernel}) that
\begin{equation}\label{pf-3-4-2}
Y_\zeta(t)=\int_0^tH(t,u)d\widetilde{Y}(u),
\end{equation}
where  $H(t,u)$ is a Volterra kernel function satisfying $H(t,u)=1+\int_u^th(s,u)ds$ for $t\ge u$. 
Now, by \cite[Theorem 2]{ihara1980capacity}, $\zeta(t)$ can be of the form satisfying
$$
\int_0^t\zeta(s)ds=\int_0^t\int_0^sH(t,s)g(s,u)d\widetilde{Y}(u)ds,
$$
where $g(s,u)$ is a Volterra kernel function on $L^2([0,T]^2)$. Then, together with (\ref{pf-3-4-2}), we deduce that
\begin{align}
Y^\ast(t) &= Y_\zeta(t)+\int_0^t\int_0^sH(t,s)g(s,u)d\widetilde{Y}(u)ds\nonumber\\
&=\int_0^tH(t,u)d\widetilde{Y}(u)+\int_0^t\int_0^sH(t,s)g(s,u)d\widetilde{Y}(u)ds\nonumber\\
&=\int_0^tH(t,u)d\widetilde{Y}_g(u),\label{pf-3-4-3}
\end{align}
where 
\begin{equation}\label{pf-3-4-4}
\widetilde{Y}_g(t)=\int_0^tG(t,u)d\widetilde{Y}(u) 
\end{equation}
and $G(t,u)$ is a Volterra kernel function satisfying
$$
G(t,u)=1+\int_u^tg(s,u)ds,~~ t\ge u.
$$
Apparently, by (\ref{pf-3-4-1}) (\ref{pf-3-4-3}) (\ref{pf-3-4-4}) and (\ref{resolvent-kernel}), we have
\begin{equation*}
\mathcal{F}_t(Y^\ast)=\mathcal{F}_t(Y_\zeta),~~t\ge0.
\end{equation*}
which, together with the fact that $\zeta(t)$ is $\mathcal{F}_t(Y_\zeta)$-measurable, immediately implies that
\begin{align*}
\int_0^T\E[X^2(t)]dt&=\int_0^T\E\left[\vert\Theta(t)-\E[\Theta(t)\vert\mathcal{F}_t(Y^\ast)]\vert^2\right]dt\\
&=\int_0^T\E\left[\vert\Theta(t)-\E[\Theta(t)\vert\mathcal{F}_t(Y_{\zeta})] \vert^2\right]dt\\
&\le\int_0^T\E[\vert\Theta(t)-\zeta(t)   \vert^2]dt \\
&\le PT,
\end{align*}
establishing (\ref{pf-3-4-01}).
Moreover, note that $\E[\Theta(t)\vert \mathcal{F}_t(Y^\ast)]$ can be written as (see, e.g., \cite{ihara1980capacity})
\begin{equation*}\label{Appendix-B-1}
\E[\Theta(t)\vert \mathcal{F}_t(Y^\ast)]=\int_0^tf(t,s)dY^\ast(s),
\end{equation*}
where $f(t,s)$ is a Volterra kernel function on $L^2([0,T]^2)$. Therefore, we have 
\begin{equation*}\label{Appendix-B-2}
\begin{aligned}
Y(t)&=\int_0^t\Theta(s)ds-\int_0^t\E[\Theta(s)\vert \mathcal{F}_s(Y^\ast)]ds+Z(t)\\
&=Y^\ast(t)-\int_0^t\int_0^sf(s,u)dY^\ast(u)ds\\
&=\int_0^tF(t,u)dY^\ast(u),
\end{aligned}
\end{equation*}
where $F(t,u)$ is a Volterra kernel function satisfying $F(t,u)=1-\int_u^tf(s,u)ds$ for $t\ge u$. It then immediately follows that  
$$
\mathcal{F}_t(Y^\ast)=\mathcal{F}_t(Y),~~t\ge0,
$$
and so the pair $(\Theta,X)$ characterizes an additive feedback coding scheme (see also \cite[Lemma 5]{ihara1980capacity}). Therefore, it follows from Lemma \ref{mutual-filter} that
\begin{align*}
I(\Theta_0^T;Y_0^T)=I(\Theta_0^T;Y_0^{\ast,T})=I(\Theta_0^T;Y_{\zeta,0}^T),
\end{align*}
establishing (\ref{pf-3-4-00}). The proof is then complete.

\section{Proof of Lemma \ref{exist-limit-ode}}\label{appendix-proof-of-ode}
We first prove the existence and uniqueness of the solution $g(t)$. Let $P_3(y;t,P)$ denote the polynomial (in $y$): $-Py^3+P/\sqrt{2}y^2+p(t)y+q(t)/\sqrt{2}$. Since $p(t),q(t)$ are continuous, an application of \cite[Theorem (7.6)]{amann2011ordinary} gives rise to a unique nonextendible solution $g(t)$, which is either defined for all $t\ge 0$ or blows up at some $t>0$. In fact, the domain of $g(t)$ extends to the infinity since it cannot blow up in finite interval. Indeed, by way of contradiction, suppose that there exists $T_0<\infty$ such that
\begin{equation}\label{blow-up}
\lim_{t\to T_0^-}g(t)=+\infty.
\end{equation}
Then, it follows from the continuity of $p(t),q(t)$ at $T_0$ that there exists $\epsilon>0$ such that $P_3(g(t);t,P)<0$ for all $t\ge T_0-\epsilon$. However, by (\ref{ode}), it holds that $g^\prime(t)<0$ for $t\ge T_0-\epsilon$, which contradicts (\ref{blow-up}), as desired.

Next, we shall prove the ``moreover'' part. To achieve this, let 
$$
P_3(y;P)\triangleq-Py^3+\frac{P}{\sqrt{2}}y^2+p y+\frac{q}{\sqrt{2}}.
$$
Since
\begin{equation}\label{p-q-t}
\lim_{t\to\infty}p(t)=p,~~~\lim_{t\to\infty}q(t)=q,
\end{equation}
we have that $\lim_{t\to\infty}P_3(y;t,P)=P_3(y;P)$.

Next, we deal with the following three cases:
\begin{itemize}
\item[(\uppercase\expandafter{\romannumeral1})] The cubic $P_3(y;P)$ has one real root $(r_{11})$ and two non-real complex conjugate roots $(r_{12}=\bar{r}_{13})$;
\item[(\uppercase\expandafter{\romannumeral2})] The cubic $P_3(y;P)$ has three distinct real roots ($r_{21}<r_{22}<r_{23}$);
\item[(\uppercase\expandafter{\romannumeral3})] The cubic $P_3(y;P)$ has a simple root $(r_{31})$ and a double root $(r_{32}=r_{33}$).
\end{itemize}
We shall prove that the solution $g(t)$ converges to some real root $r_{ij}$ as $t\to\infty$ for $i,j\in\{1,2,3 \}$ case by case. For $x\in\mathbb{R}, M>0$, let $B(x,M)\triangleq (x-M,x+M)$.

\noindent \textbf{Case (\uppercase\expandafter{\romannumeral1})}. Let $\varepsilon>0$ be a sufficiently small constant. It then follows immediately from the continuity of roots of polynomial (see, e.g., \cite[Theorem B]{harris1987shorter}) and (\ref{p-q-t}) that there exists $T_\varepsilon>0$ such that for any $t\ge T_\varepsilon$, $P_3(y;t,P)$ admits the unique real root $r_{11}(t)$ satisfying
\begin{equation}\label{f-epsilon}
\vert r_{11}(t)-r_{11}\vert < \varepsilon.
\end{equation}
It then remains to show that there exists $T^\ast_\varepsilon\ge T_\varepsilon$ such that
\begin{equation}\label{f-g}
\sup_{t\ge T_\varepsilon^{\ast}} \vert g(t)-r_{11}(t)\vert\le2\varepsilon.
\end{equation}
Indeed, it immediately follows from (\ref{f-epsilon}) and (\ref{f-g}) that $\lim_{t\to\infty}g(t)=r_{11}$, as desired.

Note that by (\ref{ode}), we have
\begin{equation}\label{f-g-relation}
\begin{aligned}
 g^\prime(t)=0 ~~~&\Longleftrightarrow~~~ g(t)=r_{11}(t);\\
 g^\prime(t)>0 ~~~&\Longleftrightarrow~~~ g(t)<r_{11}(t);\\
 g^\prime(t)<0 ~~~&\Longleftrightarrow~~~ g(t)>r_{11}(t).
\end{aligned}
\end{equation}
Clearly, if $g(T_\varepsilon)\in \overline{B(r_{11},\varepsilon)}$, then (\ref{f-g}) holds true with $T_\varepsilon^\ast=T_\varepsilon$. WLOG, we assume in the following that 
$g(T_{\varepsilon})> r_{11}+\varepsilon$ since the proof is similar if 
$g(T_{\varepsilon})< r_{11}-\varepsilon$. We now claim that there exists $t^\ast\ge T_\varepsilon$ such that $g(t^\ast)\in \overline{B(r_{11},\varepsilon)}$.
To see this, by way of contradiction, we suppose the opposite is true, that is, 
\begin{equation}\label{r11-contra}
g(t)\notin \overline{B(r_{11},\varepsilon)}\quad\mbox{ for all $t \ge T_{\varepsilon}$}.
\end{equation}
It then follows from (\ref{f-g-relation}) that for any $t\ge T_{\varepsilon}$
\begin{equation*}
g^\prime(t)<0 ~~~\text{and}~~~g(t)>r_{11}(t),
\end{equation*}
which, together with (\ref{f-epsilon}), implies that
$$
g(t)\ge \lim_{t\to\infty}g(t)\ge \lim_{t\to\infty}r_{11}(t)=r_{11}.
$$
Hence, both $\lim_{t\to\infty}g(t)$ and $\lim_{t\to\infty}g^\prime(t)$ exist, which implies $\lim_{t\to\infty}g^\prime(t)$ $=0$. Then, by (\ref{ode}), we have $\lim_{t\to\infty}g(t)=r_{11}$, which contradicts (\ref{r11-contra}). Consequently, (\ref{f-g}) immediately follows from (\ref{f-g-relation}) with $T_\varepsilon^\ast=t^\ast$, as desired.

\noindent\textbf{Case (\uppercase\expandafter{\romannumeral2})}. The proof of this case is largely similar to that of Case (\uppercase\expandafter{\romannumeral1}), except that $g(t)$ may converge to the middle root $r_{22}$ as $t\to\infty$. Indeed, pick $\varepsilon>0$ such that $B(r_{2i},\varepsilon)\cap B(r_{2j},\varepsilon)=\varnothing$ for $i\neq j$.
Then, there exists $T_\varepsilon>0$ such that the polynomial $P_3(y;t,P)$ admits three real roots $\{r_{2j}(t),j=1,2,3\}$ satisfying $r_{2j}(t)\in B(r_{2j},\varepsilon),j=1,2,3,$ for all $t\ge T_\varepsilon$. 
On the one hand, the same argument in Case (\uppercase\expandafter{\romannumeral1}) yields $\lim_{t\to\infty}g(t)=r_{21}$ if $g(T_{\varepsilon})\in (-\infty, r_{22}-\varepsilon) $ 
or $\lim_{t\to\infty}g(t)=r_{23}$ if $g(T_\varepsilon)\in (r_{22}+\varepsilon,+\infty)$. 
On the other hand, if $g(T_\varepsilon)\in\overline{B(r_{22},\varepsilon)}$, then there will be only two subcases for $\{g(t),t\ge T_\varepsilon\}$, i.e., either $g(t)\in\overline{B(r_{22},\varepsilon)}$ for all $t\ge T_\varepsilon$ or $g(t^\prime)\notin\overline{B(r_{22},\varepsilon)}$ for some $t^\prime>T_\varepsilon$. The latter subcase can be proved similarly as done before. For the former subcase, we have $\lim_{t\to\infty}g(t)=r_{22}$, as desired.

\noindent\textbf{Case (\uppercase\expandafter{\romannumeral3})}. WLOG, we assume that $r_{31}<r_{32}=r_{33}$. Let $\varepsilon>0$ be given such that $B(r_{31},\varepsilon)\cap B(r_{3j},\varepsilon)=\varnothing$ for $j=2,3$. 
As in Case (\uppercase\expandafter{\romannumeral2}), it suffices to consider the subcase $g(T_\varepsilon)\in \overline{B(r_{32},\varepsilon)}$. By (\ref{f-g-relation}), there are subcases for $\{g(t),t\ge T_\varepsilon\}$, i.e., either $g(t)\in \overline{B(r_{32},\varepsilon)}$ for all $t\ge T_\varepsilon$ or $g(t)\in (r_{31}+\varepsilon,r_{32}-\varepsilon)$ for some $t=T^\ast_\varepsilon\ge T_\varepsilon$. The former subcase leads to $\lim_{t\to\infty}g(t)=r_{32}$ and the latter subcase of $g(t)$ converges to $r_{31}$. The proof is then complete.

\section{Proofs of $(a)-(f)$}\label{proof-of-cf}
We shall first give the proofs of $(a)$, $(c)$ and $(e)$ as follows.
\begin{proof}[Proof of (a)]
The equality (a) follows from Lemma \ref{mutual-filter}.
\end{proof}
\begin{proof}[Proof of (c)] 
It is easy to show that $Y^{(n)}_{k}$ is a linear combination of $Y^{\ast,(n)}_{i},i=0,1,...,k$, and vice versa, which implies (c).
\end{proof}
\begin{proof}[Proof of (e)]
From the stationarity of the ARMA(1,1) process (\ref{arma11-stationary}), it follows that $C_{FB,n}$ is super-additive \cite{kim2009feedback}:
$$
mC_{FB,m}+nC_{FB,n}\le (m+n)C_{FB,m+n}\quad \mbox{for all $m,n=1,2,...$}.
$$
As a consequence, $C_{FB,n}(P)\le C_{FB}(P)$ for all $n\in\mathbb{N}$, which implies $(e)$.
\end{proof}

We are now in a position to give the proofs of $(b)$ and $(d)$.

\begin{proof}[Proof of (b)]
Let $\widetilde{\Theta}(t)=\int_0^t \Theta(s)ds$ for $t\in[0,T]$. Define $\widetilde{\Theta}^{(n)}=\{\widetilde{\Theta}^{(n)}(t);0\le t\le T\}$ and $Y^{\ast,(n)}=\left\{Y^{\ast,(n)}(t);0\le t\le T\right\}$ as follows:
\begin{equation*}
\begin{aligned}
\widetilde{\Theta}^{(n)}(t)=\int_0^{\Delta_n(t)}\Theta(s)ds,\quad
Y^{\ast,(n)}(t)=\sum_{i=0}^{N_{\Delta_n}(t)-1}Y^{\ast,(n)}_{i},
\end{aligned}
\end{equation*}
where $\Delta_n(t)\triangleq \max \{t^{(n)}_k\vert t\ge t^{(n)}_k,k\le n, k\in \mathbb{N} \}$ and $N_{\Delta_n}(t)\triangleq \max \{k\vert t\ge t^{(n)}_k,k\le n, k\in \mathbb{N} \}$. 
We further define an approximation process $\{\hat{Z}^{(n)}_\dag(t) \}$ of $Z_\dag$ as
\begin{equation*}
\hat{Z}^{(n)}_\dag(t)=e^{-\kappa t}\sum_{i=0}^{k-1}e^{\kappa t_{i+1}}B^{(n)}_i,~~~ \text{if}~t\in[t^{(n)}_k,t^{(n)}_{k+1}).
\end{equation*}
Let $\hat{Z}_{\dag,k}^{(n)}\triangleq \hat{Z}^{(n)}_\dag(t^{(n)}_{k+1})-\hat{Z}^{(n)}_\dag(t^{(n)}_k)$. Then, we have
\begin{equation}\label{Z0-hat-n}
\begin{aligned}
\hat{Z}_{\dag,k}^{(n)}&=B^{(n)}_k-\kappa d^{(n)}_k\sum_{i=0}^{k-1}e^{\kappa t_{i+1}}B^{(n)}_i,  \\
&=B^{(n)}_k-\int_{t^{(n)}_k}^{t^{(n)}_{k+1}}\hat{Z}^{(n)}_\dag(s)ds.
\end{aligned}
\end{equation}
Hence, $\{\widetilde{Z}_k^{(n)}\}$ defined in (\ref{z-app-stat}) can be equivalently written as
\begin{equation}\label{Z-equal-expression}
\widetilde{Z}^{(n)}_k=\left(1+\frac{\lambda}{\kappa}\right)B^{(n)}_k-\frac{\lambda}{\kappa}\hat{Z}^{(n)}_{\dag,k}+\lambda d^{(n)}_k m(\delta_n)\zeta_0.
\end{equation}
It follows from (\ref{Z0-Zstar}) and (\ref{Zs-Z0-relation}) that $\{Z(t) \}$ can be similarly expressed as
\begin{equation*}
\begin{aligned}
Z(t)&=Z_\ast(t)+\lambda\zeta_0\int_0^te^{-\kappa s}ds,\\
&=\left(1+\frac{\lambda}{\kappa}\right)B(t)-\frac{\lambda}{\kappa}Z_\dag(t)+\lambda\zeta_0\int_0^te^{-\kappa s}ds,~~~t\in[0,T].
\end{aligned}
\end{equation*}
Therefore, for any $t\in[0,T]$, we have 
\begin{equation}
\begin{aligned}
Y^{\ast,(n)}(t)&=\sum_{i=0}^{N_{\Delta_n}(t)-1}\left(\Theta^{(n)}_i+\widetilde{Z}^{(n)}_i \right)     \\
&= \int_0^{\Delta_n(t)}\Theta(s)ds+ \left(1+\frac{\lambda}{\kappa}\right)B(\Delta_n(t)) -\frac{\lambda}{\kappa}\hat{Z}^{(n)}_\dag(\Delta_n(t))+\lambda\zeta_0 m(\delta_n)\int_0^{\Delta_n(t)}e^{-\kappa s}ds.
\end{aligned}
\end{equation}
Hence, we can readily prove that $\{\widetilde{\Theta}^{(n)},Y^{\ast,(n)}\}$ converges in distribution to $\{\widetilde{\Theta},Y^\ast\}$. By the lower semi-continuity of mutual information \cite[p. 211-212]{gelfand1959calculation}, we obtain 
$$
I(\widetilde{\Theta}_0^T;Y_0^{\ast,T})\le \varliminf_{n\to\infty}I(\widetilde{\Theta}_0^{(n),T};Y_0^{\ast,(n),T}).
$$
This, together with $I(\Theta_0^T;Y_0^{\ast,T})=I(\widetilde{\Theta}_0^T;Y_0^{\ast,T})$ and $I(\widetilde{\Theta}_0^{(n),T};Y_0^{\ast,(n),T})=I(\{\Theta_k^{(n)}\};\{Y_k^{\ast,(n)}\})$, implies (b).
\end{proof}

\begin{proof}[Proof of (d)]
Recall that we have constructed an $n$-block discrete-time ARMA$(1,1)$ Gaussian channel with feedback
\begin{equation*}\label{induced-arma-gc}
\frac{Y^{(n)}_{k}}{\sqrt{\delta_n}}=\frac{\Theta^{(n)}_k-\zeta^{(n)}_k}{\sqrt{\delta_n}}+\frac{\widetilde{Z}^{(n)}_{k}}{\sqrt{\delta_n}},~~~k=0,1,..,n-1.
\end{equation*}
The energy $E(\delta_n)$ and average power $P(\delta_n)$ for such a channel can be computed as
\begin{align*}
E(\delta_n)=\sum_{k=0}^{n-1}\E\left[\left\vert\frac{\Theta^{(n)}_k-\zeta^{(n)}_k}{\sqrt{\delta_n}} \right\vert^2       \right],\quad
P(\delta_n)=\frac{1}{n}E(\delta_n).
\end{align*}
Define a Volterra kernel $K^{(n)}(t,s)\in L^2([0,T]^2)$ by 
$$
K^{(n)}(t,s)=\left\{\begin{aligned}
&K(t^{(n)}_k,t^{(n)}_l),\quad \text{if} ~(t,s)\in [t^{(n)}_k,t^{(n)}_{k+1})\times [t^{(n)}_l,t^{(n)}_{l+1}),l< k;\\
&0,~~~~~~~~~~~~~~\text{otherwise}
\end{aligned}
\right.
$$
and a random process $\{\zeta^{(n)}(t)\}$ by
$$
\zeta^{(n)}(t)=\sum_{i=0}^{N_{\Delta_n}(t)-1}K(\Delta_n(t),t^{(n)}_i)Y^{\ast,(n)}_{i}
$$
respectively. By the assumption (C.0), we have that $\lim_{n\to\infty}\| K^{(n)}-K \|_2=0$, where $\|\cdot \|_2$ denotes the usual norm on $L^2([0,T]^2)$. Thus, it is clear from (\ref{Ystar}) and (\ref{Z-equal-expression}) that
\begin{equation}\label{convergence-1}
\varlimsup_{n\to\infty}\int_0^T\E\left[\left\vert \int_0^t K^{(n)}(t,s)dY^{\ast}(s)-\int_0^tK(t,s)dY^\ast(s) \right\vert^2\right]dt\le \varlimsup_{n\to\infty}\| K^{(n)}-K\|_2=0.
\end{equation}
Furthermore, set $\Delta Y^\ast_i=Y^\ast(t^{(n)}_{i+1})-Y^\ast(t^{(n)}_{i})$ and $\Delta Z_{\dag,i}=Z_\dag(t^{(n)}_{i+1})-Z_\dag(t^{(n)}_{i})$ respectively. It then follows from (\ref{Z0-hat-n}) and (\ref{z-app-stat}) that 
\begin{equation}\label{Z-diff}
\E[\vert\hat{Z}^{(n)}_{\dag,i}-\Delta Z_{\dag,i} \vert^2]\le 2\kappa^2 \vert d^{(n)}_k\vert^2\delta_n \frac{(e^{\kappa\delta_n}-1)^2(e^{2\kappa t^{(n)}_k}-1)}{e^{2\kappa \delta_n}-1}+\frac{1}{2}\delta_n e^{-2\kappa\delta_n}(e^{2\kappa\delta_n}-1)^2
\end{equation}
and
\begin{align}\label{Y-ast-diff}
\E[\vert &Y^{\ast,(n)}_i-\Delta Y^\ast_i \vert^2]=\frac{\lambda^2}{\kappa^2}\E[\vert\hat{Z}^{(n)}_{\dag,i}-\Delta Z_{\dag,i} \vert^2]+\frac{\lambda^2}{2\kappa}\vert d^{(n)}_i\vert^2(1-m(\delta_n))^2
\end{align}
for all $i$. Therefore, we have 
\begin{equation}\label{convergence-2}
\begin{aligned}
\lim_{n\to\infty}\int_0^T\E[\vert &\zeta^{(n)}(t)-\int_0^tK^{(n)}(t,s)dY^{\ast}(s)    \vert^2]dt \\ &=\lim_{n\to\infty}\int_0^T\E[\vert \sum_{i=0}^{N_{\Delta_n}(t)-1}K^{(n)}(\Delta_n(t),t^{(n)}_i)(Y^{\ast,(n)}_i-\Delta Y^\ast_i) \vert^2]dt\\
&\le \lim_{n\to\infty}\int^T_0 \sum_{i=0}^{N_{\Delta_n}(t)-1}\vert K^{(n)}(\Delta_n(t),t^{(n)}_i)\vert^2\sum_{i=0}^{N_{\Delta_n}(t)-1}\E[\vert Y^{\ast,(n)}_i-\Delta Y^\ast_i \vert^2]dt\\
&\overset{(a)}{\le}  \int^T_0\lim_{n\to\infty} \sum_{i=0}^{N_{\Delta_n}(t)-1}\vert K^{(n)}(\Delta_n(t),t^{(n)}_i)\vert^2\delta_n\sum_{i=0}^{N_{\Delta_n}(t)-1}\frac{1}{\delta_n}\E[\vert Y^{\ast,(n)}_i-\Delta Y^\ast_i \vert^2]dt \\
&\overset{(b)}{=}0,
\end{aligned}
\end{equation}
where (a) follows from the general Lebesgue dominated convergence theorem, and where in (b) we have used the result derived from the assumption (C.0) that
$$
\lim_{n\to\infty} \sum_{i=0}^{N_{\Delta_n}(t)-1}\vert K^{(n)}(\Delta_n(t),t^{(n)}_i)\vert^2\delta_n =\int^t_0 K^2(t,s)ds~~\text{for all}~t,
$$
and another result derived from (\ref{Z-diff}) and (\ref{Y-ast-diff}) that
\begin{align*}
\lim_{n\to\infty}\sum_{i=0}^{N_{\Delta_n}(t)-1}\frac{1}{\delta_n}\E[\vert &Y^{\ast,(n)}_i-\Delta Y^\ast_i \vert^2]=0.
\end{align*}
Thus, we conclude that
\begin{align}
\lim_{n\to\infty}\int_0^T\E[\vert\Theta(t)-\zeta^{(n)}(t)\vert^2]dt=\int_0^T\E\left(\Theta(t)-\int_0^sK(t,s)dY^\ast(s)   \right)^2dt,\label{121}
\end{align}
which follows from (\ref{convergence-1}), (\ref{convergence-2}) and 
\begin{align*}
\int_0^T\E[\vert\zeta^{(n)}(t)-\int_0^tK(t,s)&dY^\ast(s) \vert^2]dt\le \int_0^T2\E[\vert\zeta^{(n)}(t)-\int_0^tK^{(n)}(t,s)dY^\ast(s)\vert^2]dt\\
&+\int_0^T2\E\left[\left\vert\int_0^tK^{(n)}(t,s)dY^\ast(s)-\int_0^tK(t,s)dY^\ast(s) \right\vert^2\right]dt.
\end{align*}
Now, by H\"older's inequality, we have
\begin{align}
E(\delta_n)&\le \sum_{k=0}^{n-1}\int_{t_k^{(n)}}^{t^{(n)}_{k+1}}\E\left(\Theta(t)-\sum_{i=0}^{k-1}K(t^{(n)}_k,t^{(n)}_i)Y^{\ast,(n)}_{i}\right)^2dt\nonumber\\
&=\int_0^T\E[\vert\Theta(t)-\zeta^{(n)}(t)\vert^2]dt,\nonumber
\end{align}
which, together with (\ref{121}), implies that there exists an error function $e(\delta_n)$ such that
$$
\lim_{n\to\infty}e(\delta_n)=0
$$
and
\begin{equation*}
P(\delta_n)\le P\delta_n +\frac{e(\delta_n)}{n}.
\end{equation*}
Then, (d) immediately follows from the definition of $n$-block capacity.

\end{proof}

Note that $\{\widetilde{Z}^{(n)}_{k+1}/\sqrt{\delta_n},k=0,1,...,n-1\}$ defined by (\ref{z-app-stat}) satisfies
$$
\frac{\widetilde{Z}^{(n)}_{k+1}}{\sqrt{\delta_n}}+\phi(\delta_n)\frac{\widetilde{Z}^{(n)}_{k}}{\sqrt{\delta_n}}=\frac{B^{(n)}_{k+1}}{\sqrt{\delta_n}}+\theta(\delta_n)\frac{B^{(n)}_k}{\sqrt{\delta_n}},
$$
where $\phi(\delta_n)=-e^{-\kappa \delta_n}$ and $\theta(\delta_n)=\lambda/\kappa -(\lambda/\kappa+1)e^{-\kappa \delta_n}$. 
We are now in a position to give the proof of $(f)$.
\begin{proof}[Proof of (f)]
In the following, we deal with the case $\lambda/\kappa\ge -1$ only, since the case $\lambda/\kappa<-1$ can be proved in a parallel manner. 

First of all, it is clear that
\begin{equation*}
\begin{aligned}
\text{sgn}(\phi(\delta_n)-\theta(\delta_n))=\text{sgn}\left(\frac{\lambda}{\kappa}(e^{-\kappa\delta_n}-1)\right)=\left\{\begin{aligned}
-1, &~~~\text{if}~~\lambda>0;\\
0, & ~~~\text{if}~~\lambda=0;\\
1, &~~~\text{if}~~\lambda<0.
\end{aligned}
\right.
\end{aligned}
\end{equation*}
Next, we complete the proof by considering the following three cases:\\
\textbf{Case 1}: $-\kappa\le\lambda<0$. For any arbitrarily small $\epsilon>0$ there exists a sufficiently large $N$ such that for $n\ge N$, $P\delta_n+e(\delta_n)/n\le (P+\epsilon) \delta_n\triangleq P_{\delta_n}(\epsilon)$ and $\vert\theta(\delta_n)\vert\le 1$. Thus, by Theorem \ref{kim}, we obtain $C_{FB}(P_{\delta_n}(\epsilon))=-\log x(\delta_n)$, where $x(\delta_n)$ is the unique positive root of the following polynomial
\begin{equation} \label{Kim's-1}
P_{\delta_n}(\epsilon) x^2 = \frac{(1-x^2)(1+\theta(\delta_n)x)^2}{(1+\phi(\delta_n)x)^2}.
\end{equation}

By the continuity of roots of polynomial, we infer that $\lim_{\delta_n\to 0^+} x(\delta_n)=1$. Moreover, by elementary calculus, it holds that $x(\delta_n)$ is differentiable in $\delta_n$ over $(0,\delta_N)$ and $\lim_{\delta_n\to 0^+} x^\prime(\delta_n)$ exists. Since
\begin{align*}
\lim_{n\to\infty}\frac{C_{FB}(P_{\delta_n}(\epsilon))}{\delta_n}&=\lim_{n\to\infty}\frac{-\log x(\delta_n)}{\delta_n}\\
&=\lim_{\delta_n\to 0^+}-\frac{x^\prime(\delta_n)}{x(\delta_n)}\\
&=\lim_{\delta_n\to 0^+} -x^\prime(\delta_n),
\end{align*}
$\lim_{n\to\infty}C_{FB}(P_{\delta_n}(\epsilon))/\delta_n$ exists, which is denoted by $\beta_{P+\epsilon}$. Then, it holds that
\begin{equation}\label{proof-f-x_delta_n}
\frac{1}{x(\delta_n)}=\beta_{P+\epsilon}\delta_n +1 +o(\delta_n)
\end{equation}
for $n$ large enough. Now, substituting (\ref{proof-f-x_delta_n}) into (\ref{Kim's-1}) and letting $n\to\infty$, we establish the equation 
\begin{equation*}
(P+\epsilon)(\beta_{P+\epsilon}+\kappa)^2=2\beta_{P+\epsilon}^2(\beta_{P+\epsilon}+\kappa+\lambda)^2.
\end{equation*}
Thus, we have
\begin{align*}
\varliminf_{n\to\infty}\frac{1}{\delta_n}C_{FB}(P\delta_n+\frac{e(\delta_n)}{n})&\le \lim_{n\to\infty}\frac{C_{FB}(P_{\delta_n}(\epsilon))}{\delta_n}\\
&=\beta_{P+\epsilon}.
\end{align*}
Letting $\epsilon\to 0$, we conclude $\lim_{\epsilon\to 0^+}\beta_{P+\epsilon}=x_0(P;\lambda,\kappa)$. Thus, we complete the proof of $(f)$ in this case. \\
\textbf{Case 2}: $\lambda>0$. By Theorem \ref{kim} again, the polynomial equation (\ref{Kim's-1}) in Case 1 becomes
$$
P_{\delta_n}(\epsilon) x^2 = \frac{(1-x^2)(1-\theta(\delta_n)x)^2}{(1-\phi(\delta_n)x)^2}.
$$
Similarly, we can also obtain that $\beta_{P+\epsilon}=(P+\epsilon)/2$, as desired. \\
\textbf{Case 3}: $\lambda=0$. In this case, the continuous-time ARMA$(1,1)$ Gaussian channel (\ref{ou-awgn-main}) boils down to the continuous-time AWGN channel (\ref{AWGN2}). Indeed, similarly as above, we can readily show that $\beta_{P+\epsilon}=(P+\epsilon)/2$, which is our desired result. 
\end{proof}

\textbf{Acknowledgement.} The work of G. Han has been supported by the Research Grants Council of the Hong Kong Special Administrative Region, China, under Project 17304121 and by the National Natural Science Foundation of China, under Project 61871343. The work of S. Shamai has been supported by the German Research Foundation (DFG) via the German-Israeli Project Cooperation (DIP), under Project SH $1937/1$-$1$.

\bibliographystyle{ieeetr}
\bibliography{OU_Colored}

\begin{thebibliography}{10}

\bibitem{shannon1949communication}
C.~E. Shannon, ``Communication in the presence of noise,'' {\em Proceedings of
  the IRE}, vol.~37, no.~1, pp.~10--21, 1949.

\bibitem{koralov2007theory}
L.~Koralov and Y.~G. Sinai, {\em Theory of Probability and Random Processes}.
\newblock Springer Science \& Business Media, 2007.

\bibitem{gel2014generalized}
I.~M. Gel'fand and N.~Y. Vilenkin, {\em Generalized Functions: Applications of
  Harmonic Analysis}, vol.~4.
\newblock Academic Press, 2014.

\bibitem{obata2006white}
N.~Obata, {\em White Noise Calculus and Fock Space}.
\newblock Springer, 2006.

\bibitem{ihara1993information}
S.~Ihara, {\em Information Theory for Continuous Systems}, vol.~2.
\newblock World Scientific, 1993.

\bibitem{kadota1971mutual}
T.~Kadota, M.~Zakai, and J.~Ziv, ``Mutual information of the white {G}aussian
  channel with and without feedback,'' {\em IEEE Transactions on Information
  Theory}, vol.~17, no.~4, pp.~368--371, 1971.

\bibitem{han2021sampling}
G.~Han and S.~Shamai, ``On sampling continuous-time {AWGN} channels,'' {\em
  IEEE Transactions on Information Theory}, vol.~68, no.~2, pp.~782--794, 2021.

\bibitem{liu2019continuous}
X.~Liu and G.~Han, ``On continuous-time {G}aussian channels,'' {\em Entropy},
  vol.~21, no.~1, pp.~1--51, 2019.

\bibitem{ng2017relative}
T.~H. Ng and G.~Han, ``Relative entropy convergence under {P}icard's iteration
  for stochastic differential equations,'' {\em ArXiv:1710.05277}, 2017.

\bibitem{karatzas1991brownian}
I.~Karatzas and S.~Shreve, {\em Brownian Motion and Stochastic Calculus},
  vol.~113.
\newblock Springer Science \& Business Media, 1991.

\bibitem{huang1962information}
R.~Huang and R.~Johnson, ``Information capacity of time-continuous channels,''
  {\em IRE Transactions on Information Theory}, vol.~8, no.~5, pp.~191--198,
  1962.

\bibitem{huang1963information}
R.~Huang and R.~Johnson, ``Information transmission with time-continuous random
  processes,'' {\em IEEE Transactions on Information Theory}, vol.~9, no.~2,
  pp.~84--94, 1963.

\bibitem{hitsuda1975gaussian}
M.~Hitsuda and S.~Ihara, ``{G}aussian channels and the optimal coding,'' {\em
  Journal of Multivariate Analysis}, vol.~5, no.~1, pp.~106--118, 1975.

\bibitem{ihara1980capacity}
S.~Ihara, ``On the capacity of the continuous time {G}aussian channel with
  feedback,'' {\em Journal of Multivariate Analysis}, vol.~10, no.~3,
  pp.~319--331, 1980.

\bibitem{butman1969general}
S.~Butman, ``A general formulation of linear feedback communication systems
  with solutions,'' {\em IEEE Transactions on Information Theory}, vol.~15,
  no.~3, pp.~392--400, 1969.

\bibitem{pinsker1968probability}
M.~S. Pinsker, ``The probability of error in block transmission in a memoryless
  {G}aussian channel with feedback,'' {\em Problemy Peredachi Informatsii},
  vol.~4, no.~4, pp.~3--19, 1968.

\bibitem{ihara1990capacity}
S.~Ihara, ``Capacity of mismatched {G}aussian channels with and without
  feedback,'' {\em Probability Theory and Related Fields}, vol.~84, no.~4,
  pp.~453--471, 1990.

\bibitem{gelfand1959calculation}
I.~M. Gelfand and A.~M. Yaglom, {\em Calculation of the amount of information
  about a random function contained in another such function}.
\newblock American Mathematical Society Providence, 1959.

\bibitem{pinsker1964information}
M.~S. Pinsker, {\em Information and Information Stability of Random Variables
  and Processes}.
\newblock Holden-Day, 1964.

\bibitem{baker1987capacity}
C.~Baker, ``Capacity of the mismatched {G}aussian channel,'' {\em IEEE
  Transactions on Information Theory}, vol.~33, no.~6, pp.~802--812, 1987.

\bibitem{Baker1983models}
C.~R. Baker, ``Channel models and their capacity,'' {\em Sen, P.K. (ed.)
  Contributions to Statistics; Essays in Honor of Norman L. Johnson},
  pp.~1--16, 1983.

\bibitem{baker1991information}
C.~R. Baker and S.~Ihara, ``Information capacity of the stationary {G}aussian
  channel,'' {\em IEEE Transactions on Information Theory}, vol.~37, no.~5,
  pp.~1314--1326, 1991.

\bibitem{fano1961transmission}
R.~M. Fano, ``Transmission of information: A statistical theory of
  communications,'' {\em American Journal of Physics}, vol.~29, no.~11,
  pp.~793--794, 1961.

\bibitem{gallager1968information}
R.~G. Gallager, {\em Information Theory and Reliable Communication}, vol.~588.
\newblock Springer, 1968.

\bibitem{shannon1948mathematical}
C.~E. Shannon, ``A mathematical theory of communication,'' {\em The Bell System
  Technical Journal}, vol.~27, no.~3, pp.~379--423, 1948.

\bibitem{ihara1973optimal}
S.~Ihara, ``Optimal coding in white {G}aussian channel with feedback,'' {\em
  Lecture Notes Math}, vol.~330, pp.~120--123, 1973.

\bibitem{liptser1974optimal}
R.~S. Liptser, ``Optimal encoding and decoding for transmission of a {G}aussian
  markov signal in a noiseless-feedback channel,'' {\em Problemy Peredachi
  Informatsii}, vol.~10, no.~4, pp.~3--15, 1974.

\bibitem{ihara1974coding}
S.~Ihara, ``Coding theory in white {G}aussian channel with feedback,'' {\em
  Journal of Multivariate Analysis}, vol.~4, no.~1, pp.~74--87, 1974.

\bibitem{brockwell2010carma}
P.~J. Brockwell and J.~Hannig, ``{CARMA}$(p, q)$ generalized random
  processes,'' {\em Journal of Statistical Planning and Inference}, vol.~140,
  no.~12, pp.~3613--3618, 2010.

\bibitem{maller2009ornstein}
R.~A. Maller, G.~M{\"u}ller, and A.~Szimayer, ``Ornstein--{U}hlenbeck processes
  and extensions,'' {\em Handbook of Financial Time Series}, pp.~421--437,
  2009.

\bibitem{brockwell2009levy}
P.~J. Brockwell, ``L{\'e}vy--driven continuous--time {ARMA} processes,'' in
  {\em Handbook of Financial Time Series}, pp.~457--480, Springer, 2009.

\bibitem{brockwell2014recent}
P.~Brockwell, ``Recent results in the theory and applications of {CARMA}
  processes,'' {\em Annals of the Institute of Statistical Mathematics},
  vol.~66, pp.~647--685, 2014.

\bibitem{brockwell2009time}
P.~J. Brockwell and R.~A. Davis, {\em Time Series: Theory and Methods}.
\newblock Springer Science \& Business Media, 2009.

\bibitem{fowler1967statistical}
Fowler, {\em Statistical Mechanics}.
\newblock CUP Archive, 1967.

\bibitem{bibbona2008ornstein}
E.~Bibbona, G.~Panfilo, and P.~Tavella, ``The {O}rnstein--{U}hlenbeck process
  as a model of a low pass filtered white noise,'' {\em Metrologia}, vol.~45,
  no.~6, p.~S117, 2008.

\bibitem{liu2018feedback}
T.~Liu and G.~Han, ``Feedback capacity of stationary {G}aussian channels
  further examined,'' {\em IEEE Transactions on Information Theory}, vol.~65,
  no.~4, pp.~2492--2506, 2018.

\bibitem{liu2017arma}
T.~Liu and G.~Han, ``The {ARMA}$(k)$ {G}aussian feedback capacity,'' in {\em
  2017 IEEE International Symposium on Information Theory (ISIT)},
  pp.~211--215, IEEE, 2017.

\bibitem{han2016arma}
G.~Han and T.~Liu, ``{ARMA(1)} {G}aussian feedback capacity revisited,'' in
  {\em 2016 International Symposium on Information Theory and Its Applications
  (ISITA)}, pp.~111--115, IEEE, 2016.

\bibitem{kim2009feedback}
Y.-H. Kim, ``Feedback capacity of stationary {G}aussian channels,'' {\em IEEE
  Transactions on Information Theory}, vol.~56, no.~1, pp.~57--85, 2009.

\bibitem{kim2006feedback}
Y.-H. Kim, ``Feedback capacity of the first-order moving average {G}aussian
  channel,'' {\em IEEE Transactions on Information Theory}, vol.~52, no.~7,
  pp.~3063--3079, 2006.

\bibitem{ihara2019feedback}
S.~Ihara, ``On the feedback capacity of the first-order moving average
  {G}aussian channel,'' {\em Japanese Journal of Statistics and Data Science},
  vol.~2, pp.~491--506, 2019.

\bibitem{sabag2021feedback}
O.~Sabag, V.~Kostina, and B.~Hassibi, ``Feedback capacity of {MIMO} {G}aussian
  channels,'' in {\em 2021 IEEE International Symposium on Information Theory
  (ISIT)}, pp.~7--12, IEEE, 2021.

\bibitem{gattami2018feedback}
A.~Gattami, ``Feedback capacity of {G}aussian channels revisited,'' {\em IEEE
  Transactions on Information Theory}, vol.~65, no.~3, pp.~1948--1960, 2018.

\bibitem{fang2021feedback}
S.~Fang and Q.~Zhu, ``Feedback capacity of parallel {ACGN} channels and
  {K}alman filter: Power allocation with feedback,'' {\em ArXiv:2102.02730},
  2021.

\bibitem{hitsuda1968representation}
M.~Hitsuda, ``Representation of {G}aussian processes equivalent to {W}iener
  process,'' {\em Osaka Journal of Mathematics}, vol.~5, no.~2, pp.~299--312,
  1968.

\bibitem{hitsuda1974mutual}
M.~Hitsuda, ``Mutual information in {G}aussian channels,'' {\em Journal of
  Multivariate Analysis}, vol.~4, no.~1, pp.~66--73, 1974.

\bibitem{schalkwijk1966coding1}
J.~Schalkwijk and T.~Kailath, ``A coding scheme for additive noise channels
  with feedback--\uppercase\expandafter{\romannumeral1}: No bandwidth
  constraint,'' {\em IEEE Transactions on Information Theory}, vol.~12, no.~2,
  pp.~172--182, 1966.

\bibitem{schalkwijk1968center}
J.~Schalkwijk, ``Center-of-gravity information feedback,'' {\em IEEE
  Transactions on Information Theory}, vol.~14, no.~2, pp.~324--331, 1968.

\bibitem{schalkwijk1966coding}
J.~Schalkwijk, ``A coding scheme for additive noise channels with
  feedback--\uppercase\expandafter{\romannumeral2}: Band-limited signals,''
  {\em IEEE Transactions on Information Theory}, vol.~12, no.~2, pp.~183--189,
  1966.

\bibitem{smithies1958integral}
F.~Smithies, {\em Integral Equations}.
\newblock CUP Archive, 1958.

\bibitem{Ihara1992channelcodingthm}
S.~Ihara, ``Coding theorems for the continuous time {G}aussian channel with
  feedback \uppercase\expandafter{\romannumeral2},'' {\em Probability Theory
  and Mathematical Statistics: Proceedings of the Sixth USSR-Japan Symposium,
  Kiev, USSR}, pp.~132--142, 1991.

\bibitem{duncan1970calculation}
T.~E. Duncan, ``On the calculation of mutual information,'' {\em SIAM Journal
  on Applied Mathematics}, vol.~19, no.~1, pp.~215--220, 1970.

\bibitem{ebert1970capacity}
P.~Ebert, ``The capacity of the {G}aussian channel with feedback,'' {\em Bell
  System Technical Journal}, vol.~49, no.~8, pp.~1705--1712, 1970.

\bibitem{markakis2009closed}
M.~Markakis, ``Closed-form solutions of certain {A}bel equations of the first
  kind,'' {\em Applied Mathematics Letters}, vol.~22, no.~9, pp.~1401--1405,
  2009.

\bibitem{liptser2013statistics}
R.~S. Liptser and A.~N. Shiryaev, {\em Statistics of Random Processes
  \uppercase\expandafter{\romannumeral2}: Applications}, vol.~6.
\newblock Springer Science \& Business Media, 2013.

\bibitem{cover1989gaussian}
T.~M. Cover and S.~Pombra, ``{G}aussian feedback capacity,'' {\em IEEE
  Transactions on Information Theory}, vol.~35, no.~1, pp.~37--43, 1989.

\bibitem{yang2007feedback}
S.~Yang, A.~Kavcic, and S.~Tatikonda, ``On the feedback capacity of
  power-constrained {G}aussian noise channels with memory,'' {\em IEEE
  Transactions on Information Theory}, vol.~53, no.~3, pp.~929--954, 2007.

\bibitem{derpich2022comments}
M.~S. Derpich and J.~{\O}stergaard, ``Comments on “feedback capacity of
  stationary {G}aussian channels”,'' {\em IEEE Transactions on Information
  Theory}, vol.~70, no.~3, pp.~1848--1851, 2024.

\bibitem{lipster1977statistics}
R.~S. Lipster and A.~N. Shiryaev, {\em Statistics of Random Processes
  \uppercase\expandafter{\romannumeral1}: General Theory}, vol.~394.
\newblock Springer, 1977.

\bibitem{oksendal2013stochastic}
B.~Oksendal, {\em Stochastic Differential Equations: An Introduction with
  Applications}.
\newblock Springer Science \& Business Media, 2013.

\bibitem{pinsker1969talk}
M.~Pinsker, ``in {T}alk {D}elivered at the {S}oviet {I}nformation {T}heory
  {M}eeting,'' {\em No abstract published}, 1969.

\bibitem{chen1999refinements}
H.~W. Chen and K.~Yanagi, ``Refinements of the half-bit and factor-of-two
  bounds for capacity in {G}aussian channel with feedback,'' {\em IEEE
  Transactions on Information Theory}, vol.~45, no.~1, pp.~319--325, 1999.

\bibitem{ihara1991mutual}
S.~Ihara, ``Mutual information and capacity of the continuous time {G}aussian
  channel with feedback,'' {\em {G}aussian Random Fields, World Scientific},
  pp.~242--256, 1991.

\bibitem{cover1987conjecture}
T.~M. Cover, ``Conjecture: Feedback doesn’t help much,'' {\em Open Problems
  in Communication and Computation}, pp.~70--71, 1987.

\bibitem{kim2006counterexample}
Y.-H. Kim, ``A counterexample to {C}over's {$2P$} conjecture on {G}aussian
  feedback capacity,'' {\em IEEE Transactions on Information Theory}, vol.~52,
  no.~8, pp.~3792--3793, 2006.

\bibitem{amann2011ordinary}
H.~Amann, {\em Ordinary Differential Equations: An Introduction to Nonlinear
  Analysis}, vol.~13.
\newblock Walter de gruyter, 2011.

\bibitem{harris1987shorter}
G.~Harris and C.~Martin, ``Shorter notes: The roots of a polynomial vary
  continuously as a function of the coefficients,'' {\em Proceedings of the
  American Mathematical Society}, pp.~390--392, 1987.

\end{thebibliography}

\end{document}